\newtheorem{theorem}{Theorem}[section]
\newtheorem{lemma}[theorem]{Lemma}
\newtheorem{proposition}{Proposition}
\theoremstyle{definition} 
\newtheorem{definition}[theorem]{Definition}
\newcommand{\dd}{\text{d}}
\def\ham{{\mathcal{H}}}
\numberwithin{equation}{section}
\numberwithin{proposition}{section}
\numberwithin{example}{section}
\newcommand\be{\begin{equation}}
\newcommand\ee{\end{equation}}
\newcommand\bea{\begin{eqnarray}}
\newcommand\eea{\end{eqnarray}}
\newcommand{\I}{I} 
\newcommand{\mm}{M} 
\newcommand{\mmm}{ {M}} 
\newcommand{\C}{\mathcal{C}} 
\newcommand{\mS}{\mathcal{S}} 
\newcommand{\scal}{\!\boldsymbol{\cdot}\!} 
 \newcommand{\mL}{\mathcal{L}} 
 \newcommand{\mJ}{ {L}} 
\newcommand{\mT}{\mathcal{T}} 
\newcommand{\mU}{\mathcal{U}} 
\newcommand{\mA}{\mathcal{A}} 
\newcommand{\mB}{\mathcal{B}} 
\newcommand{\ff}{ {f}} 
\newcommand{\bq}{ \mathbf{q}} 
\def\>#1{{\mathbf#1}}   
\newcommand{\bE}{ \mathbf{E}} 
\newcommand{\bS}{ \mathbf{S}} 
\newcommand{\bH}{ \mathbf{H}} 
 \newcommand{\kk}{K}
  \newcommand{\Phii}{\Psi}
\begin{document}

\
  \vskip0.5cm

\noindent
\begin{center}
 {\Large \bf Nonlinear Lie--Hamilton systems: $t$-Dependent curved \\[6pt] oscillators and Kepler--Coulomb  Hamiltonians}
\end{center}

 \vskip 0.5cm

\centerline{{\sc Rutwig Campoamor-Stursberg$^{1,2}$, Francisco J. Herranz$^{3}$ and Javier de Lucas$^{4,5}$}}
\vskip 0.5cm

 \noindent
$^1$ Instituto de Matem\'atica Interdisciplinar, Universidad Complutense de Madrid, E-28040 Madrid,  Spain

\noindent
$^2$ Departamento de \'Algebra, Geometr\'{\i}a y Topolog\'{\i}a,  Facultad de Ciencias
Matem\'aticas, Universidad Complutense de Madrid, Plaza de Ciencias 3, E-28040 Madrid, Spain

\noindent
 $^3$ Departamento de F\'isica, Universidad de Burgos,
E-09001 Burgos, Spain 
 
\noindent
 $^4$ UW Institute for Advanced Studies, 
 Department of Mathematical Methods in Physics, University of Warsaw,
  ul. Pasteura 5, 02-093, Warsaw, Poland
  
  \noindent
 $^5$  Centre de Recherches Math\'ematiques, Universit\'e de Montr\'eal, 
 Succ. Centre-Ville, CP 6128, Montr\'eal, QC H3C 3J7, Canada

 \medskip

\noindent  E-mail: { 
 \href{mailto:rutwig@ucm.es}{\texttt{rutwig@ucm.es}}, \href{mailto:fjherranz@ubu.es}{\texttt{fjherranz@ubu.es}}, \href{mailto:javier.de.lucas@fuw.edu.pl}{\texttt{javier.de.lucas@fuw.edu.pl}}
 }
 
\medskip

  \begin{abstract}
\noindent
The Lie--Hamilton approach for $t$-dependent Hamiltonians is extended to cover the so-called nonlinear Lie--Hamilton systems, which are no longer related to a linear $t$-dependent combination of a basis of a finite-dimensional Lie algebra of functions $\mathcal{W}$, but an arbitrary  $t$-dependent function on $\mathcal{W}$. This novel formalism is accomplished through a detailed analysis of related structures, such as momentum maps and  generalized distributions, together with the extension of  the Poisson coalgebra method to a $t$-dependent frame, in order to systematize the construction of constants of the motion for nonlinear systems. Several relevant relations between nonlinear Lie--Hamilton systems, Lie--Hamilton systems, and collective Hamiltonians are analyzed.  The new notions and tools are illustrated with the study of the harmonic oscillator, H\'enon--Heiles systems and  Painlev\'e trascendents within a  $t$-dependent framework. In addition, the formalism is carefully applied to construct oscillators with a $t$-dependent frequency and Kepler--Coulomb systems with a $t$-dependent coupling constant on the $n$-dimensional sphere, Euclidean and hyperbolic spaces, as well as on some spaces of non-constant curvature.
 \end{abstract}

  \medskip
\medskip

\noindent
\textbf{Keywords}:   Lie systems;   nonlinear differential equations; nonlinear dynamics; Poisson coalgebras; constants of the motion;    H\'enon-Heiles;  curvature;  oscillators;  Kepler-Coulomb
\medskip

\noindent
\textbf{MSC 2020 codes}:  34A26, 34A34 (Primary),  34C14, 17B66,  70G65  (Secondary)

\smallskip
\noindent
\textbf{PACS 2010 codes}:    {02.30.Hq, 02.20.Sv, 45.10.Na,  45.20.Jj}

\newpage
  
  \tableofcontents

  \newpage
  
\section{Introduction}

Lie--Hamilton systems constitute an interesting class of Lie systems, i.e., $t$-dependent systems of first-order ordinary differential equations admitting a (nonlinear) superposition principle \cite{LS93,Ueno72}, as they combine the algebraic properties with the compatibility with a Poisson structure, a fact that enriches considerably their interpretation and allows the application of additional techniques and tools for their analysis. The theory of Lie--Hamilton systems (LH in short) formally appeared around ten years ago in \cite{CLS13}, although several types of LH systems had already been considered in previous work (see e.g.~\cite{CGM00,CR03}). However, the formalism and the fundamental characteristics of LH systems were first given in \cite{CLS13}, also stressing their potential applications and relevance in the description of physical phenomena. Although the property of being a LH system is rather an exception than a general rule in the realm of $t$-dependent Hamiltonian systems, their significance is due to their applications. For instance, Smorodinsky--Winternitz oscillators, certain higher-order Riccati equations in Hamiltonian form, some trigonometric Hamiltonian systems in classical mechanics, epidemic models, and several other Hamiltonian systems appearing in physics can be studied by means of the LH formalism \cite{LS20}. A significant boost to the study of LH systems emerged from the results in \cite{BCHLS13}, where it was shown that LH systems admit a superposition rule that can be obtained  through the so-called coalgebra method \cite{BBR06,BBHMR09} in a more practical manner than previous approaches. Moreover,  superposition rules can be interpreted both geometrically and algebraically (see \cite{LS20} and references therein). For instance, the well-known superposition rule for Riccati equations was proved to be a consequence of a Casimir element of the Lie algebra $\mathfrak{sl}(2,\mathbb{R})$. Another influential work was the local classification of finite-dimensional Lie algebras of Hamiltonian vector fields on the plane and the analysis of their superposition rules and applications~\cite{BBHLS15}, which was based on the previous classification of Lie systems on   $\mathbb{R}^{2}$ due to Lie and analyzed in detail, much later, in \cite{GonzalezLopez1992}.  

These results motivated many applications and new results \cite{LS20,CFH23,HLT17,C139,CCH24,CCH25} (see also references therein). Moreover, the displayed methods allowed generalizations to other types of Lie systems with compatible geometric structures, such as Dirac--Lie or Jacobi--Lie systems  \cite{LS20,AHKR_20,AHR_22}. The Poisson coalgebra deformation theory using quantum algebra structures was successfully applied to LH systems to obtain much more general Hamiltonian systems of physical interest \cite{BCFHL21}.

The present work provides a new, further reaching generalization of LH systems: the so-called {\em nonlinear Lie--Hamilton systems}. In short, a nonlinear LH system is a $t$-dependent Hamiltonian system on a Poisson manifold $(\mm,\Lambda)$ related to a $t$-dependent  Hamiltonian function $h(t,x)=F(t,h_1,\ldots,h_r)$, where $h_1,\ldots,h_r$ span a finite-dimensional Lie algebra of functions relative to the Poisson bracket of the Poisson manifold and $F$ is any $t$-dependent function. Remarkably, we find that these dynamical systems can be understood as a $t$-dependent function of the form $h(t,x)=F(t,J(x))$ for an appropriate function $F\in C^\infty(\mathbb{R}\times \mathfrak{g}^*)$ and a momentum map $J:\mm\rightarrow \mathfrak{g}^*$.

This approach extends the results on the theory of applications of the Poisson coalgebra symmetry \cite{BBR06,BBHMR09} to a $t$-dependent context. 
Moreover, we unveil relations of these new systems with very relevant types of integrable systems appearing in the literature that seem to have passed unnoticed. For instance, we show that LH systems are $t$-dependent extensions of the so-called {\em collective Hamiltonians}, a concept that has attracted a lot of attention due to its wide physical applications \cite{GS80,GS84}. 
It should be observed that both theories have been  developed independently, albeit their striking similarities. This connection proves that nonlinear LH systems can be used to study rigid rotors with a $t$-dependent inertia tensor, liquid drops with a $t$-dependent configuration, and  nuclear collective models with $t$-dependent parameters, to merely mention a few possibilities (cf.~\cite{GS80}).

The standard approach to momentum maps starts by defining a Hamiltonian Lie group action, and subsequently finding a momentum map compatible with it. Here, given a finite-dimensional Lie algebra of Hamiltonian functions on a Poisson manifold, one can consider a basis of them as the coordinates of a momentum map associated with a local Lie group action. This is immediate when the Lie algebra of fundamental vector fields of the Hamiltonian functions is isomorphic to the Lie algebra of Hamiltonian functions, but requires a rather technical proof if there is no Lie algebra isomorphism.

The structure of the paper is as follows. After recalling in the next section the fundamental properties of Lie and LH systems, we introduce the notion of nonlinear LH systems through the construction of novel $t$-dependent H\'enon--Heiles Hamiltonians in Section~\ref{s3}. Next, we set the basic mathematical theory of nonlinear LH systems. In particular, its relation to momentum maps is established in Section~\ref{s4}, whereas fundamental geometrical structures are developed in Section~\ref{s5}, where  it is proved that the evolution of such nonlinear systems is restricted to certain generalized distributions. We also show that every $t$-dependent Hamiltonian system related to a Poisson bivector can be considered as a nonlinear LH system at a generic point. These properties are illustrated by means of Painlev\'e trascendents in Section~\ref{s51}.

Some generic results about $t$-dependent and $t$-independent constants of the motion for nonlinear LH systems are then presented in Section~\ref{s6}, where we also  show that properties of nonlinear LH systems are related to $t$-dependent Hamiltonian systems on the dual of a Lie algebra relative to its natural Kostant--Kirillov--Souriau  Poisson bivector. The extension of the integrability coalgebra method to a $t$-dependent frame is addressed in detail in Section~\ref{s7} which, for further applications, is particularized to the case of Poisson $\mathfrak{sl}(2,\mathbb{R})$-coalgebra symmetry in Section~\ref{s8}. It is worth noting that, as a relevant result, we  construct generic Hamiltonian systems on $n$-dimensional spherically symmetric spaces, which are formed by the superposition of  a $t$-dependent central potential with $n$ arbitrary $t$-independent Rosochatius--Winternitz (or ``centrifugal") potentials. Among  the  great number of $t$-dependent central potentials, in Section~\ref{s9} we focus on isotropic oscillators with a $t$-dependent frequency and Kepler--Coulomb Hamiltonians  with a $t$-dependent coupling constant. Such $t$-dependent nonlinear 
LH systems are explicitly obtained in the three classical Riemannian spaces of constant curvature; the sphere, the Euclidean and the hyperbolic space, as well as in some  spaces of non-constant curvature. To the best of our knowledge, all these systems constitute novel results. Finally, Section \ref{Sec:Conc} summarizes the work and comments on further future prospects and research directions.


\section{Lie systems and Lie--Hamilton systems}
\label{s2}
 
We first set some general assumptions for the rest of this work. Unless otherwise stated, all structures are assumed to be smooth and globally defined. Moreover,  $\mm$ is assumed to be an $n$-dimensional manifold, which is further considered to be connected. When speaking about Hamiltonian systems, we always refer to $t$-dependent Hamiltonian systems relative to a Poisson manifold $(\mm,\Lambda)$ with Poisson bivector $\Lambda$. These simplifications highlight the main parts of our presentation and allow us to skip minor technical details.

We shortly review the theory of Lie and LH systems (see \cite{CGM00, LS20, Dissertations, PW83} for further details). The standard approach to Lie systems is through the theory of $t$-dependent vector fields \cite{Dissertations}. 

Any $t$-dependent system of first-order differential equations on  $\mm$ of the form
\begin{equation}\label{Eq:Sys}
\frac{\dd x^i}{\dd t}=X^i(t,x),\qquad \forall t\in \mathbb{R},\qquad \forall x=\{x^1,\ldots,x^n\}\in \mm, \qquad i=1,\ldots,n,
\end{equation}
can be associated with a {\em $t$-dependent vector field} on $\mm$, namely a map $X:\mathbb{R}\times \mm\rightarrow {\rm T}\mm$ given by
\begin{equation}\label{Eq:VF}
	X(t,x)=\sum_{i=1}^nX^i(t,x)\frac{\partial}{\partial x^i},\qquad \forall t\in \mathbb{R},\qquad \forall x\in \mm,
\end{equation}
which is equivalent to a $t$-parameterized family of vector fields $X_t:x\in \mm\mapsto X(t,x)\in {\rm T}\mm$, with $t\in \mathbb{R}$, and vice versa. This suggests to identify the $t$-dependent vector field $X$ given by (\ref{Eq:VF}) with its associated $t$-dependent system of differential equations (\ref{Eq:Sys}). This identification can be made without leading to any misunderstanding (cf.~\cite{LS20,Dissertations}).

If $A$ is a Lie algebra with Lie bracket $[\cdot,\cdot]_A:A\times A\rightarrow A$ and $B$ is a subset of $A$, we write ${\rm Lie}\bigl(B,[\cdot,\cdot]_A\bigr)$ for the smallest Lie subalgebra of $A$ containing $B$. It follows that ${\rm Lie}\bigl(B,[\cdot,\cdot]_A\bigr)$ is spanned by $A$ and the linear combinations of successive Lie brackets of elements of $A$, namely
$$
v_1,\ [v_1,v_2]_A,\ \bigl[v_1,[v_2,v_3]_A\bigr]_A,\ \bigl[[v_1,v_2]_A,[v_3,v_4]_A\bigr]_A,\ldots\qquad \forall v_1,v_2,v_3,v_4,\ldots \in A.
$$
Every $t$-dependent vector field $X$ on $\mm$ gives rise to a Lie algebra of vector fields, $V^X$, on $\mm$ characterized univocally by the fact that it is the smallest Lie algebra of vector fields (in the sense of inclusion) containing the vector fields $\{X_t\}_{t\in \mathbb{R}}$. In particular, $V^X={\rm Lie}\bigl(\{X_t\}_{t\in \mathbb{R}},[\cdot,\cdot] \bigr)$, where $[\cdot,\cdot]$ stands for the standard vector fields commutator.
\begin{definition} 
\label{def21}
A {\em superposition rule} for a system $X$ on a manifold $\mm$ is a function $\Phii:\mm^{m}\times \mm\rightarrow
\mm$ of the form $x=\Phii\bigl(x_{(1)}, \ldots,x_{(m)};k\bigr)$ bringing the general
 solution $x(t)$ of $X$ into the form  $x(t)=\Phii \bigl(x_{(1)}(t), \ldots,x_{(m)}(t);k\bigr),$
for any generic family $  x_{(1)}(t),\ldots,x_{(m)}(t) $ of
particular solutions of $X$ and $k\in \mm$. {\em Lie systems} are systems of first-order differential equations admitting a superposition rule (see \cite{Dissertations,PW83,CGM07} for details). 
 \end{definition}

For instance, $t$-dependent first-order systems of linear ordinary differential equations admit  linear superposition rules. In this case, such a system  is defined on a certain $\mathbb{R}^n$, the   superposition principle takes the form $\Phii\bigl(x_{(1)},\ldots, x_{(n)},k_1,\ldots,k_n\bigr)=\sum_{i=1}^nk_i x_{(i)}(t)$, the particular solutions $   x_{(1)}(t),\ldots,x_{(n)}(t)  $ must be linearly independent, and $k=\{k_1,\ldots,k_n\}$ is related to the initial conditions of the system $X$ under study.  The requirements ensuring that the system associated with a $t$-dependent vector field possesses a superposition rule are
described by the {\em Lie Theorem} \cite{LS93,Dissertations,CGM07}.

\begin{theorem}{\bf (Lie Theorem)}
 A system $X$ on $\mm$ admits a superposition rule if and only if 
$$
X(t,x)=\sum_{\alpha=1}^rb_\alpha(t)X_\alpha(x),\qquad \forall t\in \mathbb{R}, \qquad \forall x\in \mm,
$$
for a family $b_1(t),\ldots,b_r(t) $  of $t$-dependent functions and a set of vector fields $   X_1,\ldots,X_r $ on $\mm$ spanning an $r$-dimensional  Lie algebra of vector fields $V$,
referred to as  Vessiot--Guldberg Lie algebra associated with
$X$~\cite{LS93,PW83}.
\end{theorem}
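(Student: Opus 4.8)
The plan is to prove the two implications separately, after first disposing of an elementary reformulation. The condition displayed in the statement, namely $X_t=\sum_{\alpha=1}^r b_\alpha(t)X_\alpha$ with the $X_\alpha$ spanning a finite-dimensional Lie algebra $V$, is equivalent to the single requirement that $V^X={\rm Lie}\bigl(\{X_t\}_{t\in\mathbb{R}},[\cdot,\cdot]\bigr)$ be finite-dimensional: if the former holds then every $X_t$ lies in $V$, whence $V^X\subseteq V$; conversely, taking a basis $X_1,\dots,X_r$ of $V:=V^X$ and expressing each $X_t$ in it produces coefficients $b_\alpha(t)$, regular in $t$ because $X$ is. So it suffices to show that \emph{$X$ admits a superposition rule if and only if $\dim V^X<\infty$.} The single tool I would use throughout is the \emph{diagonal prolongation} $\widetilde{Y}=\sum_{a=0}^m Y^{(a)}$ of a vector field $Y$ on $\mm$ to $\mm^{m+1}$, where $Y^{(a)}$ acts only on the $a$-th copy. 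Two facts about it drive the argument: the map $Y\mapsto\widetilde{Y}$ is an injective Lie-algebra homomorphism $\mathfrak{X}(\mm)\to\mathfrak{X}(\mm^{m+1})$, so $\widetilde{V^X}={\rm Lie}(\{\widetilde{X}_t\})$ is isomorphic to $V^X$; and an $(m+1)$-tuple of curves is an integral curve of $\widetilde{X}_t$ precisely when each component solves $X$.

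For the implication $\dim V^X<\infty\Rightarrow$ superposition rule, I would argue constructively. Let $X_1,\dots,X_r$ be a basis of $V^X$ with structure constants $c_{\alpha\beta}^\gamma$. Since prolongation preserves brackets, the fields satisfy $[\widetilde{X}_\alpha,\widetilde{X}_\beta]=\sum_\gamma c_{\alpha\beta}^\gamma\widetilde{X}_\gamma$, so they span an \emph{involutive} generalized distribution $\mathcal{D}$ on $\mm^{m+1}$. I would choose $m$ large enough that at a generic point the $\widetilde{X}_\alpha$ are linearly independent, so that $\mathcal{D}$ has constant rank $r$ on an open dense set; the Frobenius theorem then yields a foliation in whose leaves every $(m+1)$-tuple of solutions is confined, because $\widetilde{X}_t=\sum_\alpha b_\alpha(t)\widetilde{X}_\alpha$ is tangent to $\mathcal{D}$. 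Projecting $p\colon\mm^{m+1}\to\mm^m$ by forgetting the zeroth copy sends $\widetilde{X}_\alpha$ to its $m$-fold prolongation; choosing $m$ so that these remain independent makes $p_*$ injective on $\mathcal{D}$, hence each leaf projects locally diffeomorphically onto $\mm^m$ and the fibre $\mm\cong p^{-1}(\mathrm{pt})$ is transverse to the foliation. Solving, via the local first integrals, for the zeroth coordinate in terms of the other $m$ and of a label $k$ of the leaf would then produce the map $\Phii$ of Definition~\ref{def21}.

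For the converse I would read the superposition rule backwards. Inverting $x=\Phii(x_{(1)},\dots,x_{(m)};k)$ for $k$, on the generic set where $\Phii(\cdots;\cdot)$ is a local diffeomorphism, yields functions $F_1,\dots,F_n$ on $\mm^{m+1}$ with $\det(\partial F_j/\partial x_{(0)})\neq0$; because the same $k$ serves for all $t$, these are common first integrals of every $\widetilde{X}_t$. The key observation is that $\{Y\in\mathfrak{X}(\mm):\widetilde{Y}F_j=0,\ \forall j\}$ is a Lie subalgebra, since first integrals are preserved under the bracket and prolongation is a homomorphism, so it contains the whole of $V^X={\rm Lie}(\{X_t\})$. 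Freezing generic values of $x_{(1)},\dots,x_{(m)}$ turns $(F_1,\dots,F_n)$ into a chart on the zeroth copy and turns $\widetilde{Y}F_j=0$ into a linear expression of $Y$, in these coordinates, as a combination of the $nm$ fixed fields obtained from the $x_{(a)}$-derivatives of the $F_j$ with $a=1,\dots,m$. Hence every element of $V^X$ lies in one fixed $nm$-dimensional space of vector fields, giving $\dim V^X\le nm<\infty$.

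The main obstacle I anticipate lies not in either chain of ideas but in the genericity bookkeeping of the constructive direction: one must show that $m$ can be taken so that $\widetilde{X}_1,\dots,\widetilde{X}_r$ are simultaneously independent at a generic point of both $\mm^{m+1}$ and its projection $\mm^m$, that the various rank conditions hold on a \emph{common} open dense set, and that the resulting local $\Phii$ is independent of the auxiliary choices. These points rest on the injectivity of diagonal prolongation and on a stabilization argument for the generic rank of prolonged distributions, and they are where the smooth, connected, generic-point hypotheses set at the beginning of Section~\ref{s2} do the real work.
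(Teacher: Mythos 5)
The paper does not actually prove this theorem: it is quoted as a classical result with references \cite{LS93,Dissertations,CGM07}, so there is no internal proof to measure yours against. The closest material in the text is the four-step procedure following Definition~\ref{def23} for extracting a superposition rule from diagonal prolongations and their common first integrals, and your constructive direction is precisely a justification of that procedure: reduce the statement to finite-dimensionality of $V^X$, prolong a basis of $V^X$ diagonally, use involutivity and Frobenius to produce $n$ first integrals satisfying the analogue of (\ref{Eq:Con}), and solve for the distinguished copy of $\mm$. Your converse --- inverting $\Phii$ to obtain first integrals $F_1,\ldots,F_n$ common to all $\widetilde X_t$, noting that the vector fields whose prolongations annihilate them form a Lie subalgebra containing $V^X$, and bounding its dimension by $nm$ --- is likewise the standard argument of \cite{CGM07}. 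So the architecture is the canonical one and is sound as a sketch.

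Two points need sharpening. First, in the forward direction the fibre $p^{-1}(\mathrm{pt})\cong\mm$ is in general \emph{not} transverse to the foliation (transversality would require $r\geq nm$); the condition you actually need, and the one your choice of $m$ delivers, is the complementary one $\mathcal{D}_x\cap\ker p_{*x}=0$, which is what guarantees that among the $n(m+1)-r\geq n$ local first integrals one can select $n$ whose differentials have rank $n$ in the fibre direction, i.e., condition (\ref{Eq:Con}). Second, and more substantively, your bound $\dim V^X\leq nm$ in the converse relies on the restriction map $Y\mapsto Y|_U$ being injective on the annihilating subalgebra, where $U$ is the generic open set on which $(F_1,\ldots,F_n)$ with the other arguments frozen is a chart; for merely smooth vector fields, vanishing on an open subset of the connected manifold $\mm$ does not force vanishing everywhere, so what you bound is $\dim V^X|_U$. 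This is exactly why the theorem is local/generic in character and where the cited references do their real work; it is a sharper form of the ``genericity bookkeeping'' you flag, and it lives in the converse rather than in the constructive direction.
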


In brief, the Lie Theorem states that  $X$ is a Lie system if and only if $V^X$ is finite-dimensional. To illustrate the Lie Theorem and superposition rules,  let us consider  a {\em Riccati equation} \cite{In56,Anderson,Eg07,Wi08}, namely the $t$-dependent differential equation on $\mathbb{R}$ given by
\begin{equation}\label{Ric}
\frac{{\rm d} x}{{\rm d} t}=b_1(t)+b_2(t)x+b_3(t)x^2,\qquad \forall t\in \mathbb{R},\qquad  \forall x\in \mathbb{R},
\end{equation}
where $b_1(t),b_2(t),b_3(t)$ are arbitrary $t$-dependent functions. Note that it is related to the $t$-dependent vector field $X=b_1(t)X_1+b_2(t)X_2+b_3(t)X_3$, where
\[
X_1=\frac{\partial}{\partial x},\qquad X_2=x\frac{\partial}{\partial x},\qquad X_3=x^2\frac{\partial}{\partial x}
\]
span a Lie algebra of vector fields, $V$, isomorphic to $\mathfrak{sl}(2,\mathbb{R})$ \cite{PW83}.  Hence, $V$ determines a Vessiot--Guldberg Lie algebra  for (\ref{Ric}), which becomes a Lie system. It is worth noting that  Riccati equations were classically assumed to satisfy that $b_1(t)b_3(t)\neq 0$, but, in the context of Lie systems, no restrictions on the coefficients in (\ref{Ric}) are imposed~\cite{LS20,Dissertations}.
 The general solution of (\ref{Ric}) can be brought into the form \cite{PW83}
$$
x(t)=\frac{x_{(1)}(t)\bigl(x_{(3)}(t)-x_{(2)}(t) \bigr)+k\,x_{(3)}(t)\bigl(x_{(1)}(t)-x_{(2)}(t)\bigr)}{ x_{(3)}(t)-x_{(2)}(t) +k\bigl(x_{(1)}(t)-x_{(2)}(t)\bigr)},
$$
where $ x_{(1)}(t),x_{(2)}(t),x_{(3)}(t) $ are three different particular solutions of the equation  (\ref{Ric}) and $k$ is an arbitrary constant. Hence, such a  relation is a superposition rule for the Riccati equation $\Phii:\bigl(x_{(1)},x_{(2)},x_{(3)};k\bigr)\in\mathbb{R}^3\times\mathbb{R}\mapsto
x\in\mathbb{R}$.
 
There exist several ways to derive superposition rules for Lie systems
\cite{Dissertations,PW83}. We now survey the distributional method
 devised in \cite{Dissertations,CGM07}, which is based on {\em diagonal
prolongations}. The reason of using this approach is that it allows for the use of geometric structures for determining and studying superposition rules more easily than with other Lie group methods (see \cite{LS20,PW83} for details). Although there exists a nice geometric, coordinate-free method to characterize diagonal prolongations of vector fields \cite{Dissertations}, we will hereafter give a simpler, more intuitive definition. 
\begin{definition} Given a $t$-dependent vector field $X$ on $\mm$ with local expression
$$
X(t,x)=\sum_{i=1}^nX^i(t,x)\frac{\partial}{\partial x^i},
$$
its  {\em
diagonal prolongation} to $\mm^{m+1}$ is the  $t$-dependent vector field
$\widetilde X$ on $\mm^{m+1}$ of the form
$$
\widetilde{X}\bigl(t,x_{(1)},\ldots,x_{(m+1)} \bigr)=\sum_{i=1}^n\sum_{\ell=1}^{m+1}X^i\bigl(t,x_{(\ell)}\bigr)\frac{\partial}{\partial x^i_{(\ell)}},
$$
where   $x_{(\ell)}=\bigl\{x_{(\ell)}^1,\ldots,x_{(\ell)}^n \bigr\}$ are the coordinates induced by  $x=\{x^1,\ldots,x^n\}\in \mm$ assuming that   they are defined on each $\ell$-th copy of $\mm$ within $\mm^{m+1}$. 
\label{def23}
\end{definition}
Diagonal prolongations can be applied to standard vector fields on a manifold $\mm$   by simply skipping the $t$-dependence in the above definition, which  allows us to determine a superposition rule for a Lie system $X$ by using the following procedure (see \cite{Dissertations} for a detailed analysis of the method).

\begin{itemize}
 \item Take a basis $ \{ X_1,\ldots,X_r  \}$ of a Vessiot--Guldberg Lie algebra $V$ for $X$ and determine the minimum natural number $m$ so that the diagonal prolongations $ \widetilde{X}_1,\ldots, \widetilde{X}_r$ of $ X_1,\ldots,X_r$ to $\mm^m$, respectively, become linearly independent at a generic point of $\mm^m$.

\item Choose a coordinate system $x=\{x^1,\ldots,x^n\}$ on $\mm$ and consider that it is defined on each copy of $\mm$ in $\mm^{m+1}$. This allows for defining a coordinate system $\bigl\{x_{(1)},   \ldots,x_{(m+1)}\bigr\}$ on $\mm^{m+1}$.

\item  Obtain $n$ functionally independent first integrals $  \I_1,\ldots,\I_n  $ common to the diagonal prolongations $\bigl\langle \widetilde X_1,\ldots,\widetilde
X_r \bigr\rangle$, so that 
\begin{equation}\label{Eq:Con}
\frac{
\partial(\I_1,\ldots,\I_n)}{\partial \bigl(x^1_{(m+1)},\ldots,x^n_{(m+1)} \bigr) }\neq 0.
\end{equation}

\item If one sets $\I_i=k_i$, with $k_i\in \mathbb{R}$ and $i=1,\ldots,n$, condition (\ref{Eq:Con}) enables us  to determine the values of $x_{(m+1)}=\bigl\{x^1_{(m+1)},\ldots,x^n_{(m+1)}\bigr\}$ in terms of the remaining variables $\bigl\{  x_{(1)}, \ldots,x_{(m)}\bigr\}  $ and $  \{k_1,\ldots,k_n\}   $.
\end{itemize}
The obtained  expressions provide a superposition rule in terms of a generic family of $m$ particular solutions and $n$ constants, namely $ k_1,\ldots,k_n $. The previous method relies on determining common  and functionally independent  first integrals for
$\bigl\langle \widetilde X_1,\ldots,\widetilde X_r\bigr\rangle$ admitting a mild condition (\ref{Eq:Con}), which {\em a priori} requires solving a
system of PDEs. This amounts to determining certain $t$-independent constants of the motion for $\widetilde X$, which can be a quite laborious task~\cite{LS20,CGM07}. As shown in the literature,
this can be simplified in the case of LH systems \cite{LS20}, as these are defined on a Poisson manifold that allows for the obtainment of  $ \I_1,\ldots,\I_n $ in a  systematic algebraic/geometric manner, for instance, via the Poisson coalgebra method \cite{BBR06,BBHMR09}. This, among other reasons, justifies the following definition.

\begin{definition} 
\label{defLH}
A system $X$ on $\mm$ is said to be a {\em Lie--Hamilton system} if
$\mm$ admits a Poisson bracket $\{\cdot,\cdot\}:C^\infty(\mm)\times C^\infty(\mm)\rightarrow C^\infty(\mm)$  so that $V^X$ becomes a finite-dimensional Lie algebra of Hamiltonian vector fields relative to the Poisson manifold.
\end{definition}

Recall that a Poisson bracket $\{\cdot,\cdot\}:C^\infty(\mm)\times C^\infty(\mm)\rightarrow C^\infty(\mm)$ gives rise to a unique bivector field $\Lambda$ on $\mm$ satisfying the constraint $[\Lambda,\Lambda]_{\rm SN}=0$ relative to the Schouten--Nijenhuis bracket $[\cdot,\cdot]_{\rm SN}$. We call $\Lambda$  a so-called {\em Poisson bivector}. Conversely, a Poisson bivector $\Lambda$ on  a manifold $\mm$ gives rise to a Poisson bracket $\{\cdot\,\cdot\}_\Lambda:C^\infty(\mm)\times C^\infty(\mm)\rightarrow C^\infty(\mm)$ given by
$$
\{f,g\}_{\Lambda}=\Lambda(\dd f,\dd g),\qquad \forall f,g\in C^\infty(\mm).
$$ We refer to \cite{AM78} for technical details.
It is worth recalling that $\Lambda$ gives rise to a morphism $\widehat{\Lambda}:{\rm T}^*\mm\rightarrow {\rm T} \mm$ of the form $\widehat{\Lambda}(\alpha)(\beta)=\Lambda(\alpha,\beta)$ for arbitrary covectors $\alpha,\beta$ at any $x\in \mm$. Recall that $\Lambda$ is said to be {\em regular} at a point $x\in \mm$ if the rank of the restriction $\widehat{\Lambda}_{x'}:{\rm T}^*_{x'}\mm\mapsto {\rm T}_{x'}\mm$ is constant for $x'$ in an open neighbourhood of $x\in \mm$.


\subsection{{\em t}-Dependent Smorodinsky--Winternitz Hamiltonian}
\label{s21}

To illustrate the definition of a LH system, let us analyse the Hamilton equations for the $t$-dependent  counterpart of the   $n$-dimensional  Smorodinsky--Winternitz (SW) system  \cite{WSUF65} (see also~\cite{WSUF67,Evans90,Evans91,GPS06}) defined on ${\rm T}^*\mathbb{R}^n_0:=\{(\>q,\>p)=(q_1,\dots,q_n,p_1,\dots, p_n)\in {\rm T}^*\mathbb{R}^n:   \prod_{i=1}^n|q_i|^{|c_i|}\neq 0 \}$ in the form~\cite{LS20}
\begin{equation}\label{LieS}
\left\{\begin{aligned}
\frac{\dd q_i}{\dd t}&=p_i,\\
\frac{\dd p_i}{\dd t}&=-\omega^2(t)q_i+\frac{c_i}{q_i^3},
\end{aligned}\right.  \qquad i=1,\ldots,n, 
\end{equation}
and arbitrary $c_1,\ldots,c_n\in \mathbb{R}$. 
This system corresponds to the mix of an isotropic harmonic oscillator with 
  $t$-dependent arbitrary frequency $\omega(t)$  and  $n$  Rosochatius (or Winternitz) potentials depending on the real arbitrary constants $c_i$, each of them yielding  a ``centrifugal" barrier. Therefore,
  if  all $c_i=0$,  it reduces to the $t$-dependent isotropic oscillator. We recall that for $n=1$ and with the identifications $q_1\equiv x$, $p_1\equiv y=\dd x/\dd t$ and $c_1=c$, the $t$-dependent SW system    (\ref{LieS}) on  ${\rm T}^*\mathbb{R}_0$ reproduces  the  \textit{Ermakov equation}~\cite{Ermakov1880}  (see also~\cite{Leach1991,Maamache1995,LS20,Leach2008}), which as a second-order ordinary differential equation  reads as 
 \begin{equation}
 \frac{{\dd^2 x}} {\dd t^2} = - \omega^{2}(t) x + \frac{c}{x^{3}},  
 \label{Ea}
 \end{equation}
  also known as the {\em Milne--Pinney  equation}~\cite{Milne1930,Pinney1950}.

 The system (\ref{LieS}) describes the integral curves of the $t$-dependent vector
field
\be
X_{\rm SW}(t,\>q,\>p)=\sum_{i=1}^n\left[p_i\frac{\partial}{\partial
q_i}-\left(\omega^2(t)q_i-\frac{c_i}{q_i^3}\right)\frac{\partial}{\partial
p_i}\right]
\label{xsw}
\ee 
on ${\rm T}^*\mathbb{R}^{n}_0$. Note that ${\rm T}^*\mathbb{R}^{n}_0$ admits a natural Poisson bivector 
\be
\Lambda=\sum_{i=1}^n\frac{\partial}{\partial q_i}\wedge\frac{\partial}{\partial p_i }
\label{bivector}
\ee
 related to the restriction to ${\rm T}^*\mathbb{R}^n_0$ of the canonical symplectic structure on ${\rm T}^*\mathbb{R}^n$.  Then, the $t$-dependent vector field (\ref{xsw})
 can be expressed as $X_{\rm SW}(t)=X_3+\omega^2(t)X_1$, for every $t\in \mathbb{R}$, with   
   vector fields  given by
\begin{equation}
X_1=-\sum_{i=1}^n q_i\frac{\partial}{\partial p_i},\qquad 
X_2=\frac{1}{2}\sum_{i=1}^n\left(p_i\frac{\partial}{\partial
p_i}-q_i\frac{\partial}{\partial q_i}\right),\qquad
X_3=\sum_{i=1}^n\left(p_i\frac{\partial}{\partial
q_i}+\frac{c_i}{q_i^3}\frac{\partial}{\partial p_i}\right),
\label{SWxs}
\end{equation}
 which satisfy the commutation relations
\begin{equation} 
[X_1,X_2]=X_1,\qquad [X_1,X_3]=2X_2,\qquad  [X_2,X_3]=X_3,
\label{xSW}
\end{equation}
so that $X_2$ is necessary to close a Lie algebra.  It follows that (\ref{LieS}) is a Lie system related to a Vessiot--Guldberg Lie algebra $V_{\rm SW}=\langle X_1,X_2,X_3\rangle$ isomorphic to $\mathfrak{sl}(2,\mathbb{R})$. Notice that if  all $c_i=c$, then  $X_{\rm SW} $ (\ref{xsw})
provides  the diagonal prolongation of a  Ermakov system to ${\rm T}^*\mathbb{R}^n_0$.

In addition,  $V_{\rm SW}$ consists of Hamiltonian vector fields. In fact, the Poisson bivector (\ref{bivector}) gives rise to a morphism $\widehat{\Lambda}:{\rm T}^*{\rm T}^*\mathbb{R}^n_0\rightarrow {\rm T} {\rm T}^*\mathbb{R}^n_0$ such that $X_\alpha=-\widehat\Lambda(\dd h_\alpha)$ $(\alpha=1,2,3)$ yielding
\begin{equation} 
h_1=\frac{1}{2}\sum_{i=1}^{n}{q_i^2},\qquad h_2=-\frac
12\sum_{i=1}^{n}{q_ip_i},\qquad
h_3=\frac{1}{2}\sum_{i=1}^{n}{\left(p_i^2+\frac{c_i}{q_i^2}\right)},
\label{hSW}
\end{equation}
which  fulfil  the Poisson brackets
\begin{equation} 
\{h_1,h_2\}_\Lambda=-h_1,\qquad \{h_1,h_3\}_\Lambda=-2h_2,\qquad 
\{h_2,h_3\}_\Lambda=-h_3. 
\label{commSW}
\end{equation}
Note that the   vector fields and Hamiltonian vector fields verify the usual conditions for the Lie derivative and inner product 
 \be
{\cal L}_{ {X}_\alpha}\omega =0 ,\qquad  \iota_{X_{\alpha}} \omega = \dd h_{\alpha},
\label{conditions}
\ee
with respect to the canonical 
symplectic form 
\be
\omega=\sum_{i=1}^n \dd q_i \wedge \dd p_i .
\nonumber
\ee

Hence, the Lie algebra $V_{\rm SW}$ consists of Hamiltonian vector fields relative to $\Lambda$ and the set of equations (\ref{LieS}) becomes a LH system with   $t$-dependent Hamiltonian for $X_{\rm SW}$ (\ref{xsw}) given by
\be
h_{\rm SW}(t,\>q,\>p)=h_3+\omega^2(t)h_1 = \frac{1}{2}\sum_{i=1}^{n}p_i^2+ \frac12  \sum_{i=1}^{n}\left( \omega^2(t) q_i^2 +\frac{c_i}{q_i^2}  \right) .
\label{hhSW}
\ee
 Observe that the Hamiltonian function $h_3$ determines both the kinetic energy and the Rosochatius--Winternitz potentials, while  $h_1$ provides the $t$-dependent isotropic oscillator potential, and the system (\ref{LieS}) thus comes from the Hamilton equations of $h_{\rm SW}$. 

 Furthermore, such Hamiltonian gives rise to what is called a Lie--Hamiltonian structure  and $\langle h_1,h_2,h_3 \rangle$ lead to a LH algebra. More specifically, one has the following definition.
\begin{definition}
\label{defLHS} A {\em Lie--Hamiltonian structure} is a triple  
$(\mm,\Lambda,h)$, where $(\mm,\Lambda)$ stands for  a Poisson manifold while $h:
(t,x)\in \mathbb{R}\times \mm\mapsto h_t(x)=h(t,x)\in  \mm$ is such that
 the space $\bigl(\ham= {\rm Lie}(\{h_t\}_{t\in\mathbb{R}},\{\cdot,\cdot\}_\Lambda),\{\cdot,\cdot\}_\Lambda \bigl)$ is  a finite-dimensional real Lie algebra. The space $\bigl(\ham,\{\cdot,\cdot\}_\Lambda \bigr)$ is called a {\em Lie--Hamilton   algebra}.
\end{definition}

Let us discuss some relevant facts on LH systems (see \cite{LS20} for details and proofs).  
\begin{theorem} 
\label{teorLHalgebra} A system $X$ on $\mm$ is a {\em LH system} if and only
if there exists a Lie--Hamiltonian structure $(\mm,\Lambda,h)$ such that $X_t=
-\widehat \Lambda( {\rm d} h_t)$ for every $t\in\mathbb{R}$. 
\end{theorem}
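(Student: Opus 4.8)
The plan is to prove the two implications of the stated equivalence separately, the whole argument resting on one structural fact about Poisson geometry that I would state first. For a Poisson bivector $\Lambda$, the assignment $\phi:f\mapsto -\widehat\Lambda(\dd f)$ sending a function to its Hamiltonian vector field is linear and intertwines the brackets up to the sign fixed by the paper's conventions; indeed, comparing the commutators (\ref{xSW}) with the Poisson brackets (\ref{commSW}) shows that $\phi(\{f,g\}_\Lambda)=-[\phi(f),\phi(g)]$, so $\phi$ is a Lie algebra anti-homomorphism. Its kernel is exactly the space of Casimir functions (the centre of the Poisson algebra), since $\widehat\Lambda(\dd f)=0$ is equivalent to $\{f,g\}_\Lambda=0$ for all $g$. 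This centrality of $\ker\phi$ is the hinge of the proof.

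For the sufficiency direction I would assume a Lie--Hamiltonian structure $(\mm,\Lambda,h)$ with $X_t=-\widehat\Lambda(\dd h_t)$ for every $t$, so that each $X_t=\phi(h_t)$ is Hamiltonian. Because $\phi$ is linear and an anti-homomorphism, the image $\phi(\mathcal{H})$ of the finite-dimensional Lie algebra $\mathcal{H}=\mathrm{Lie}(\{h_t\}_{t\in\mathbb{R}},\{\cdot,\cdot\}_\Lambda)$ is a finite-dimensional Lie algebra of Hamiltonian vector fields. Since it contains every $X_t$, it contains the smallest Lie algebra $V^X=\mathrm{Lie}(\{X_t\}_{t\in\mathbb{R}},[\cdot,\cdot])$; hence $V^X$ is finite-dimensional and consists of Hamiltonian vector fields, which by Definition~\ref{defLH} makes $X$ a LH system.

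For the necessity direction I would start from a LH system $X$: $V^X$ is finite-dimensional and built from Hamiltonian vector fields for some $\Lambda$. I fix a basis $X_1,\ldots,X_r$ of $V^X$, choose Hamiltonian functions $h_\alpha$ with $X_\alpha=-\widehat\Lambda(\dd h_\alpha)$, and write $[X_\alpha,X_\beta]=\sum_\gamma c_{\alpha\beta}^\gamma X_\gamma$. Applying $\phi$ and using the anti-homomorphism property gives $\phi\bigl(\{h_\alpha,h_\beta\}_\Lambda+\sum_\gamma c_{\alpha\beta}^\gamma h_\gamma\bigr)=0$, so the discrepancy $\xi_{\alpha\beta}:=\{h_\alpha,h_\beta\}_\Lambda+\sum_\gamma c_{\alpha\beta}^\gamma h_\gamma$ lies in $\ker\phi$, i.e.\ it is a Casimir function.

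The hard part is that, a priori, nothing guarantees that $h_1,\ldots,h_r$ generate a \emph{finite-dimensional} Lie algebra under $\{\cdot,\cdot\}_\Lambda$; Hamiltonian functions are only determined modulo Casimirs, and naive bracketing could produce an ever-growing family. This obstacle is resolved precisely by the centrality of $\ker\phi$ noted above: the cocycles $\xi_{\alpha\beta}$ are Casimir, hence central, so $\{h_\alpha,\xi_{\beta\gamma}\}_\Lambda=0$ and $\{\xi_{\alpha\beta},\xi_{\gamma\delta}\}_\Lambda=0$. Therefore repeated bracketing never escapes the finite-dimensional span $\mathcal{H}:=\langle h_1,\ldots,h_r\rangle+\langle \xi_{\alpha\beta}\rangle$, which is a finite-dimensional Lie algebra of functions. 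To finish, I would invoke the Lie Theorem (applicable since a LH system is in particular a Lie system) to obtain smooth coefficients $b_\alpha(t)$ with $X_t=\sum_\alpha b_\alpha(t)X_\alpha$, and then set $h(t,x):=\sum_\alpha b_\alpha(t)h_\alpha(x)$. This $h$ is smooth, satisfies $-\widehat\Lambda(\dd h_t)=\sum_\alpha b_\alpha(t)X_\alpha=X_t$, and obeys $\mathrm{Lie}(\{h_t\}_{t\in\mathbb{R}},\{\cdot,\cdot\}_\Lambda)\subseteq\mathcal{H}$, so it is finite-dimensional; thus $(\mm,\Lambda,h)$ is the required Lie--Hamiltonian structure, completing the equivalence.
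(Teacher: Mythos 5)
Your proof is correct. Note that the paper itself does not prove Theorem~\ref{teorLHalgebra}: it is quoted as a known result with the proof deferred to the cited monograph \cite{LS20}. Your reconstruction follows the standard argument from that literature — the sufficiency direction via the (anti-)morphism $f\mapsto -\widehat\Lambda(\dd f)$, and the necessity direction hinging on the observation that the discrepancies $\xi_{\alpha\beta}=\{h_\alpha,h_\beta\}_\Lambda+\sum_\gamma c_{\alpha\beta}^\gamma h_\gamma$ are Casimir, hence central, so that adjoining finitely many of them closes a finite-dimensional Lie algebra of Hamiltonian functions. The only point you gloss over is the smoothness of the coefficients $b_\alpha(t)$, which follows routinely by evaluating $X_t$ at finitely many points where the diagonal prolongations of $X_1,\ldots,X_r$ are linearly independent; this is a standard technicality and does not affect the validity of the argument.
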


In the above theorem and definition, $\bigl(\ham,\{\cdot,\cdot\}_\Lambda \bigr)\equiv {\cal{H}}_\Lambda$ is called a {\em LH algebra} of 
$X$. Moreover, $\bigl({\rm T}^* \mathbb{R}^n_0,\Lambda,h_{\rm SW}\bigr)$ is a Lie--Hamiltonian structure for the LH system (\ref{LieS}) and $ h_1,h_2,h_3  $ span an associated LH algebra.

\begin{proposition}\label{IntLieHam0} A $t$-independent function is a constant of the
motion for a LH system $X$ if it Poisson commutes with all the elements of a certain LH algebra ${\cal{H}}_\Lambda$ of the system. If $X$ possesses a Vessiot--Guldberg Lie algebra of Hamiltonian vector fields with respect to a Poisson bivector $\Lambda$, the family $\mathcal{I}^X$ of its $t$-independent constants of the
motion form a Poisson algebra $\bigl(\mathcal{I}^X,\cdot,\{\cdot,\cdot\}_\Lambda\bigr)$.
\end{proposition}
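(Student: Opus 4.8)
The plan is to treat the two assertions separately, each reducing to the derivation properties of Hamiltonian vector fields together with the Jacobi identity for the Poisson bracket. For the first claim, I would begin by invoking Theorem~\ref{teorLHalgebra} to produce a Lie--Hamiltonian structure $(\mm,\Lambda,h)$ with $X_t=-\widehat\Lambda(\dd h_t)$ for every $t\in\mathbb{R}$, where the LH algebra is $\mathcal{H}_\Lambda=\mathrm{Lie}(\{h_t\}_{t\in\mathbb{R}},\{\cdot,\cdot\}_\Lambda)$. Since a $t$-independent function $f$ satisfies $\partial f/\partial t=0$, it is a constant of the motion precisely when $X_t(f)=0$ for all $t$. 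The key identity is $X_t(f)=-\widehat\Lambda(\dd h_t)(f)=-\{h_t,f\}_\Lambda=\{f,h_t\}_\Lambda$, which converts the dynamical condition into a purely algebraic one. As each $h_t$ lies in $\mathcal{H}_\Lambda$, the hypothesis that $f$ Poisson-commutes with every element of $\mathcal{H}_\Lambda$ forces $\{f,h_t\}_\Lambda=0$, hence $X_t(f)=0$ for all $t$, and $f$ is a constant of the motion.

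For the second claim, I would characterise $\mathcal{I}^X=\{f\in C^\infty(\mm): X_t(f)=0\ \forall t\}$ and verify stability under both operations. Closure under the pointwise product is immediate because every $X_t$ is a derivation of $C^\infty(\mm)$: if $X_t(f)=X_t(g)=0$, then $X_t(fg)=X_t(f)\,g+f\,X_t(g)=0$, so $fg\in\mathcal{I}^X$; this step uses no Hamiltonian hypothesis. The substantive point is closure under $\{\cdot,\cdot\}_\Lambda$. Here I would exploit that, by assumption, $X_t=\sum_\alpha b_\alpha(t)X_\alpha$ is a $t$-dependent linear combination of Hamiltonian vector fields $X_\alpha=-\widehat\Lambda(\dd h_\alpha)$, and is therefore itself Hamiltonian at each fixed $t$, with generator $h_t=\sum_\alpha b_\alpha(t)h_\alpha$, so that $X_t(\cdot)=\{\cdot,h_t\}_\Lambda$.

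The main obstacle—really the only non-routine step—is establishing that a Hamiltonian vector field acts as a derivation of the Poisson bracket, that is $X_t(\{f,g\}_\Lambda)=\{X_t(f),g\}_\Lambda+\{f,X_t(g)\}_\Lambda$. Written out via $X_t(\cdot)=\{\cdot,h_t\}_\Lambda$, this is exactly the Jacobi identity for $\{\cdot,\cdot\}_\Lambda$ rearranged into $\{\{f,g\},h_t\}=\{\{f,h_t\},g\}+\{f,\{g,h_t\}\}$. Granting this, if $f,g\in\mathcal{I}^X$ then $X_t(\{f,g\}_\Lambda)=\{0,g\}_\Lambda+\{f,0\}_\Lambda=0$ for every $t$, so $\{f,g\}_\Lambda\in\mathcal{I}^X$.

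Finally, I would observe that all the Poisson-algebra axioms—associativity and commutativity of the product, antisymmetry and the Jacobi identity for the bracket, and the Leibniz compatibility $\{f,gh\}=\{f,g\}h+g\{f,h\}$—already hold in the ambient Poisson algebra $(C^\infty(\mm),\cdot,\{\cdot,\cdot\}_\Lambda)$, and hence restrict to the subset $\mathcal{I}^X$ shown to be closed under both operations. Thus $(\mathcal{I}^X,\cdot,\{\cdot,\cdot\}_\Lambda)$ is a Poisson subalgebra, as claimed. I expect no difficulty beyond carefully fixing the sign convention $X_t=-\widehat\Lambda(\dd h_t)$ so that $X_t(f)=\{f,h_t\}_\Lambda$ comes out correctly and the Jacobi rearrangement has the stated form.
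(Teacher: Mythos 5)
Your proof is correct and follows essentially the same route the paper takes: the paper defers the proof of this proposition to \cite{LS20}, but its own arguments for the nonlinear generalizations (Propositions \ref{AutLieHam2} and \ref{prop63}) proceed exactly as you do, converting $X_t(f)=0$ into $\{f,h_t\}_\Lambda=0$ and using that each $X_t$ is a derivation of the Poisson bracket. The only cosmetic difference is that the paper packages the derivation property as the invariance $\mathcal{L}_{X_t}\Lambda=0$ (Lemma \ref{AutLieHam}) rather than as a rearrangement of the Jacobi identity; the two are equivalent.
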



\section{The notion of nonlinear Lie--Hamilton systems:    H\'enon--Heiles Hamiltonians}
\label{s3}

This section introduces nonlinear LH systems, which are proved to describe any $t$-dependent Hamiltonian system around a generic point. This will generalize significantly the techniques specific to LH systems to cover a much larger class of $t$-dependent Hamiltonian systems relative to a Poisson bivector that also retrieve, as a particular instance, $t$-independent Hamiltonian systems described by the coalgebra approach~\cite{BBR06,BBHMR09}. Indeed, these results can be understood as a generalization of both previous theories. It is worth noting that our formalism could be extended to many other more general geometric structures admitting Hamiltonian vector fields related to a Poisson algebra of functions. 

Before presenting the formal definition of nonlinear LH systems, let us first  introduce this concept together with its related structures by explicitly constructing the $t$-dependent version of the  well-known H\'enon--Heiles systems, which, to best of our knowledge, has not been considered before. 

 We recall that the original H\'enon--Heiles (HH) system  was introduced in~\cite{HH64} to model a Newtonian axially-symmetric galactic system. It is given by the following two-dimensional Hamiltonian
 $$
{H}=\dfrac{1}{2}\bigl(p_{1}^{2}+p_{2}^{2} \bigr) + \frac{1}{2}\bigl(q_{1}^{2}+q_{2}^{2}\bigr)+\lambda\left(
q_{1}^{2}q_{2}-\frac{1}{3} q_{2}^{3}\right),\qquad \lambda\in \mathbb{R},
\label{HHaut}
$$
where $\{q_1,q_2,p_1,p_2\}$ are the usual canonical variables with Poisson brackets $\{ q_i,p_j\}_\Lambda=\delta_{ij}$ for $i,j=1,2$ with respect to the  Poisson bivector $\Lambda$ (\ref{bivector}) for $n=2$. This system can be seen as  the combination of the isotropic harmonic oscillator with a cubic potential.
Remarkably, it is not  integrable in the sense of Liouville~\cite{Perelomov},  and provides a paradigmatic   nonlinear   system   exhibiting  chaotic behaviour. Nevertheless,  a multiparametric generalization  was subsequently considered in terms of  four real parameters $\Omega_1$,  $\Omega_2$, $\alpha$ and $\beta$  in the form
\begin{equation}
 {H}_{\rm HH}=\frac{1}{2}\bigl(p_{1}^{2}+p_{2}^{2}\bigr)+ \Omega_{1}  q_{1}^{2}+\Omega_{2} q_{2}^{2}+\alpha \bigl(
q_{1}^{2}q_{2}+\beta q_{2}^{3}\bigr) ,
\label{hhmulti}
\end{equation}
 which was  proven to be Liouville integrable only for {\em three} specific sets of values of the real parameters $\Omega_{1},\Omega_{2},\beta$, keeping $\alpha\ne 0$. The resulting distinguished families of integrable HH Hamiltonians have been extensively studied from different approaches~\cite{BSV,CTW,GDP,HietarintaRapid,Fordy83,Wojc,RGG,Conte,HonePLA,BB10,WU,She,Ren}  (see also  references therein), including connections with hierarchies of differential equations,  and are known as:
\begin{itemize}

\item The Sawada--Kotera (SK) Hamiltonian for $\beta=1/3$ and $\Omega_{2}=\Omega_{1}$. It has a constant of the motion quadratic in the momenta.  

\item  The Korteweg--de Vries (KdV) Hamiltonian when $\beta=2$  keeping $\Omega_{1}$ and $\Omega_{2}$ arbitrary.  The constant of the motion is also quadratic.

\item The Kaup--Kupershmidt (KK) Hamiltonian if $\beta=16/3$ and $\Omega_{2}=16\,\Omega_{1}$.  The constant of  the motion is quartic in the momenta.

\end{itemize}

These systems admit a direct interpretation in terms of integrable cubic perturbations of anisotropic oscillators, by introducing two frequencies $\omega_1^2=2\Omega_1$ and $\omega_2^2=2\Omega_2$ in ${H}_{\rm HH}$ (\ref{hhmulti}). 
In addition, we recall that the particular KdV system corresponding to the case $ \Omega_2=4  \Omega_1$, denoted ${\rm KdV}^{1:2}$,  entails  the possibility to add the so-called Ramani--Dorizzi--Grammaticos series of integrable homogeneous polynomial potentials~\cite{RDGprl}. Consequently, under this interpretation, the  SK, ${\rm KdV}^{1:2}$ and KK systems can be regarded as  integrable cubic perturbations of the $1:1$ (isotropic), $1:2$ and $1:4$ superintegrable  anisotropic  oscillators.

Now, we proceed to deduce the $t$-dependent generalization of ${H}_{\rm HH}$ and thus of the three integrable families. Consider a $t$-dependent Hamiltonian in ${\rm T}^*\mathbb{R}^2$ given by
\be
h_{\rm HH}(t,q_1,q_2,p_1,p_2)=\frac 12\bigl(p_1^2+p_2^2 \bigr)+V\bigl(t,q_1^2,q_2\bigr),
\label{HH}
\ee
for an arbitrary function $V:(t,z_1,z_2)\in \mathbb{R}^3\mapsto V(t,z_1,z_2)\in \mathbb{R}$.
This $t$-dependent Hamiltonian function gives rise to a $t$-dependent vector field $X$ on ${\rm T}^*\mathbb{R}^2$ such that each vector field $X_t$ is the Hamiltonian vector field of $(h_{\rm HH})_t$  with respect to the standard symplectic form on ${\rm T}^*\mathbb{R}^2$ for every $t\in \mathbb{R}$. Then, $X$ is related to the $t$-dependent system of differential equations
$$
\frac{\dd q_1}{\dd t}=p_1,\qquad
\frac{\dd q_2}{\dd t}=p_2,\qquad
\frac{\dd p_1}{\dd t}=-2q_1\frac{\partial V}{\partial z_1},\qquad
\frac{\dd p_2}{\dd t}=-\frac{\partial V}{\partial z_2}.
$$
Clearly, the $t$-independent version of $h_{\rm HH}$ (\ref{HH}) encompasses the multiparametric HH  Hamiltonian (\ref{hhmulti}). The functions
\be
h_1=p_1^2,\qquad h_2=q_1^2 ,\qquad h_3=q_1p_1,\qquad h_4=p_2,\qquad h_5=q_2,\qquad h_6=1,
 \label{hamHH}
 \ee
span a Lie algebra of functions, $\mathfrak{W}_{\rm HH}$, isomorphic to direct sum Lie algebra $\mathfrak{sl}(2,\mathbb{R})\oplus \mathfrak{h}_3$, where $\langle h_1,h_2,h_3\rangle \simeq \mathfrak{sl}(2,\mathbb{R})$, while $\langle h_4,h_5,h_6\rangle\simeq \mathfrak{h}_3$, with $\mathfrak{h}_3$ being the  Heisenberg Lie algebra. The non-vanishing Poisson brackets with respect to the Poisson bivector  (\ref{bivector}) (with $n=2$) read
 \be
  \{ h_1,h_2\}_{\Lambda}= - 4 h_3,\qquad  \{ h_1,h_3\}_{\Lambda}= -  2 h_1,\qquad \{ h_2,h_3\}_{\Lambda}=    2 h_2,\qquad\{ h_4,h_5\}_{\Lambda}= - h_6. 
\nonumber
 \ee
 The corresponding vector fields are derived from the condition (\ref{conditions}) and turn out to be
\begin{equation}
\begin{array}{lll}
\displaystyle{X_1 =2p_1\frac{\partial}{\partial q_1} }, &\quad \displaystyle{X_2=-2q_1\frac{\partial}{\partial p_1} }, &\quad  \displaystyle{ X_3=q_1\frac{\partial}{\partial q_1}-p_1\frac{\partial}{\partial p_1}}, \\[10pt]  
\displaystyle{ X_4 =\frac{\partial}{\partial q_2} }, & \quad  \displaystyle{ X_5=-\frac{\partial}{\partial p_2} }, & \quad   X_6=0,
\end{array}
\nonumber
\end{equation}
fulfilling the Lie brackets 
  \be
 [ X_3,X_1]=  - 2 X_1,\qquad [ X_3,X_2]=    2 X_2,\qquad [  X_1,X_2]=  4 X_3,\qquad [   X_4,X_5]= 0. 
 \nonumber
 \ee

  There exists a Lie group action of ${\rm SL}(2,\mathbb{R})\times {\rm H}_3$, where ${\rm H}_3$ is the Heisenberg group,  on ${\rm T}^*\mathbb{R}^2$ given by  
$$
\Phi\left(\left(\begin{array}{cc}A&B\\C&D\end{array}\right),\left(\begin{array}{ccc}1&a&c\\0&1&b\\0&0&1\end{array}\right);(q_1,p_1,q_2,p_2)\right)=(A q_1+B p_1,C q_1+D p_1,q_2+a,p_2-b),
$$
admitting  an associated mapping
$$
J:(q_1,q_2,p_1,p_2)\in {\rm T}^*\mathbb{R}^2\mapsto \sum_{\alpha=1}^6h_\alpha(q_1,q_2,p_1,p_2) e^\alpha\in \bigl(\mathfrak{sl}(2,\mathbb{R})\oplus \mathfrak{h}_3 \bigr)^*,
$$
where $ \{ e^1,\ldots,e^6  \}$ is a basis of $\bigl(\mathfrak{sl}(2,\mathbb{R})\oplus \mathfrak{h}_3\bigr)^*$ and the Hamiltonian functions $h_1,\ldots,h_6$  are given by  (\ref{hamHH}). Let  $\{\lambda_1,\ldots,\lambda_6 \}$ be the linear coordina\-tes related to the basis $ \{ e^1,\ldots,e^6  \} $ and  consider  the function $F$  on   $C^\infty \bigl( \mathbb{R}\times \bigl(\mathfrak{sl}(2,\mathbb{R})^*\oplus \mathfrak{h}_3^*\bigr) \bigr)$ given by
$$
F\bigl(t,\lambda_1 e^1+ \cdots + \lambda_6e^6 \bigr)=\frac 12\bigl(\lambda_1 +\lambda_4^2\bigr)+V(t,\lambda_2,\lambda_5).
$$
In these terms, the $t$-dependent HH Hamiltonian (\ref{HH}) can    be written as
$$
h_{\rm HH}(t,q_1,q_2,p_1,p_2)=F\bigl(t,J(q_1,q_2,p_1,p_2) \bigr),
$$
 which comprises the $t$-dependent version of the multiparametric HH Hamiltonian (\ref{hhmulti}) for a particular choice of the potential $V\bigl(t,q_1^2,q_2\bigr)$ in the form
  \begin{equation}
  \begin{split}
 h_{\rm HH}&=\frac{1}{2}\bigl( h_1+h_4^2\bigr)+ \Omega_{1}(t)  h_2+\Omega_{2}(t) h_5^2+\alpha(t) \bigl(
h_2 h_5+\beta(t) h_{5}^{3}\bigr) \\[2pt]
&=\frac{1}{2}\bigl(p_{1}^{2}+p_{2}^{2}\bigr)+ \Omega_{1}(t)  q_{1}^{2}+\Omega_{2}(t) q_{2}^{2}+\alpha(t) \bigl(
q_{1}^{2}q_{2}+\beta(t) q_{2}^{3}\bigr),
\end{split}
\label{hhmulti2}
\end{equation}
leading, via the Hamilton equations,  to the   $t$-dependent system of nonlinear coupled differential equations
 \begin{equation}
\left\{\begin{aligned}
\, \frac{\dd q_1}{\dd t}&=p_1,\qquad \frac{\dd p_1}{\dd t} =-2  \Omega_{1}(t)  q_1- 2\alpha(t) q_1 q_2  , \\[2pt]
\, \frac{\dd q_2}{\dd t}&=p_2,\qquad  
\frac{\dd p_2}{\dd t}=-2  \Omega_{2}(t)  q_2-  \alpha(t) \bigl( q_1^2 + 3 \beta(t) q_2^2 \bigr).
\end{aligned}\right. 
 \label{hhmulti3}
\end{equation}
   The $t$-dependent counterpart of the    SK, ${\rm KdV}^{1:2}$ and KK   integrable HH systems are shown in Table~\ref{table1}.

 
 \begin{table}[t!]
{\small
\caption{ \small{Three relevant $t$-dependent H\'enon--Heiles   Hamiltonians from the multiparametric system (\ref{hhmulti2}) with Hamilton equations (\ref{hhmulti3}). In all cases  $ p_i={\dd q_i}/{\dd t}$ for $i=1,2$.}}
  \begin{center}
\noindent 
\begin{tabular}{ l}
\hline

\hline
\\[-4pt]
$\bullet$ $t$-dependent Sawada--Kotera:  \  $\beta(t)=\frac 13 $,\quad $\Omega_{1}(t)=\Omega_{2}(t)=\frac 12 \omega^2(t)$. \\[4pt]
   $\displaystyle{\quad h_{\rm SK} =\frac{1}{2}\bigl(p_{1}^{2}+p_{2}^{2}\bigr)+ \frac{1}{2}\omega^2(t)\bigl(q_{1}^{2}+q_{2}^{2}\bigr) +\alpha(t) \biggl(
q_{1}^{2}q_{2}+\frac 13 q_{2}^{3}\biggr) }$
   \\[8pt]
 $\displaystyle{\quad   \frac{\dd p_1}{\dd t}= - \omega^2(t)q_1 - 2 \alpha(t) q_1 q_2 \qquad   \frac{\dd p_2}{\dd t}= -\omega^2(t)q_2 -  \alpha(t) \bigl( q_1^2+ q_2^2\bigr) }$
     \\[14pt]
$\bullet$ $t$-dependent  Korteweg--de Vries:  \ $\beta(t)=2$,\quad $ \Omega_{1}(t)=\frac 12\omega^2(t)$,\quad $ \Omega_{2}(t)=2\omega^2(t)$. \\[4pt]
   $\displaystyle{\quad h_{ {\rm KdV}^{1:2}} =\frac{1}{2}\bigl(p_{1}^{2}+p_{2}^{2}\bigr)+ \frac{1}{2}\omega^2(t)\bigl(q_{1}^{2}+4 q_{2}^{2}\bigr) +\alpha(t) \bigl(
q_{1}^{2}q_{2}+2 q_{2}^{3}\bigr) }$
   \\[8pt]
 $\displaystyle{\quad   \frac{\dd p_1}{\dd t}= - \omega^2(t)q_1 - 2 \alpha(t) q_1 q_2 \qquad   \frac{\dd p_2}{\dd t}= -4 \omega^2(t)q_2 -  \alpha(t) \bigl( q_1^2+ 6 q_2^2\bigr) }$
     \\[14pt]
$\bullet$ $t$-dependent  Kaup--Kupershmidt:  \  $\beta(t)=\frac{16}3$,\quad $ \Omega_{1}(t)=\frac 12\omega^2(t)$,\quad $ \Omega_{2}(t)=8\omega^2(t)$. \\[4pt]
   $\displaystyle{\quad h_{\rm KK} =\frac{1}{2}\bigl(p_{1}^{2}+p_{2}^{2}\bigr)+ \frac{1}{2}\omega^2(t)\bigl(q_{1}^{2}+16 q_{2}^{2}\bigr) +\alpha(t) \biggl(
q_{1}^{2}q_{2}+\frac {16}3 q_{2}^{3}\biggr) }$
   \\[8pt]
 $\displaystyle{\quad   \frac{\dd p_1}{\dd t}= - \omega^2(t)q_1 - 2 \alpha(t) q_1 q_2 \qquad   \frac{\dd p_2}{\dd t}= -16 \omega^2(t)q_2 -  \alpha(t) \bigl( q_1^2+ 16 q_2^2\bigr) }$
     \\[12pt]
 \hline

\hline
\end{tabular}
 \end{center}
\label{table1}
}
\end{table}


These results justify the following natural definition:

\begin{definition}
\label{Def:nonLH} 
A {\em nonlinear Lie--Hamilton system} is a four-tuple 
$$
\left(\mm,\Lambda,X,J:x\in \mm\mapsto \sum_{\alpha=1}^rh_\alpha(x) e^\alpha\in \mathfrak{g}^*\right)
$$ 
such that $\langle h_1,\ldots,h_r \rangle$ is an $r$-dimensional Lie algebra of Hamiltonian functions  $\mathfrak{W}$ relative to $\{\cdot,\cdot\}_\Lambda$ and $X$ admits a $t$-dependent Hamiltonian function of the form
$$
h(t,x)=F(t,J(x)),\qquad \forall t\in \mathbb{R},\qquad \forall x\in \mm,
$$
for some function  $F\in C^\infty(\mathbb{R}\times \mathfrak{g}^*)$, where $\mathfrak{g}$ is a Lie algebra isomorphic to $\mathfrak{W}$, and a basis $ \{ e^1,\ldots,e^r \}$ of $\mathfrak{g}^*$.
\end{definition}

It is worth noting that the function $F\in C^\infty(\mathbb{R}\times \mathfrak{g}^*)$ in Definition \ref{Def:nonLH} can be denoted, for a given basis $ \{ e^1,\ldots,e^r  \}$, in the following manner
$$
F(t,J(x))=F\bigl(t,h_1(x),\ldots,h_r(x) \bigr),\qquad \forall t\in \mathbb{R},\qquad \forall x\in \mm,
$$
where $h_\alpha(x)=\langle J(x),e_\alpha\rangle$ for $\alpha=1,\ldots,r$ and a dual basis $\{e_1,\ldots,e_r\}$ to $\{e^1,\ldots,e^r\}$.
Moreover, the Poisson bivector on $\mm$ allows us to relate $h$ to a $t$-dependent vector field on $\mm$ such that 
$$
X_t=-\widehat{\Lambda}(\dd h_t)=-\Lambda(\dd h_t,\cdot),\qquad \forall t\in \mathbb{R},
$$
where, for each $t\in \mathbb{R}$, one has that $h_t(x)=h(t,x)$, with $x\in \mm$, is a function on $\mm$. The $t$-dependent function on $\mm$ given by $h\in C^\infty(\mathbb{R}\times \mm)$ gives rise to a $t$-dependent Hamiltonian vector field on $\mm$, whose properties can be studied via $h$. It is worth noting that the vector fields of the functions $\mathfrak{W}$ may span a Lie algebra $V$ with $\dim V<\dim \mathfrak{W}$.


\section{Momentum maps}
\label{s4}

Definition \ref{Def:nonLH} is geometrically elegant and of theoretical interest, as it unveils several features of LH systems, integrable systems with a Poisson coalgebra symmetry, as well as collective Hamiltonians. To properly understand its geometric relevance, it is necessary to explain the meaning of the mapping $J$ appearing in the previous section and its geometric properties. This is the main objective of this section.

Consider a Poisson manifold $(\mm, \Lambda)$ consisting of a manifold $\mm$ and a Poisson bracket $\{\cdot,\cdot\}:C^\infty(\mm)\times C^\infty(\mm)\rightarrow C^\infty(\mm)$ associated with a Poisson bivector $\Lambda$, i.e.~a skew-symmetric two-contravariant tensor field such that $[\Lambda,\Lambda]_{\rm SN}=0$ for the Schouten--Nijenhuis bracket~\cite{Va_94}. Consider a Lie group action $\Phi:G\times \mm\rightarrow \mm$ whose fundamental vector fields are Hamiltonian relative to $\{\cdot,\cdot\}$. In particular, take a basis $\{e_1,\ldots,e_r\}$ of the Lie algebra $\mathfrak{g}$ of $G$ and define its fundamental vector fields by
$$
X_\alpha(x)=\frac{\dd}{\dd t}\bigg|_{t=0}\Phi\bigl(\exp(t \, e_\alpha),x \bigr),\qquad \alpha=1,\ldots,r,\qquad \forall x\in \mm.
$$
Let $  h_1,\ldots,h_r  $ be some Hamiltonian functions of  the vector fields $ X_1,\ldots,X_r  $, respectively. It can be proved that such Hamiltonian functions can be enlarged with several Casimir functions (related to zero Hamiltonian vector fields \cite{CGLS14}) to span an  $r'$-dimensional Lie algebra $\mathfrak{W}$ relative to the Poisson bracket $\{\cdot,\cdot\}_\Lambda$. Now, define a mapping $J:x\in \mm\mapsto \sum_{\alpha=1}^{r'}h_\alpha(x)e^\alpha\in  \mathfrak{g}^*$, where $ \{e^1,\ldots,e^{r'}  \}$ is the dual basis of   $\{ e_1,\ldots,e_{r'}\}$ (the latter closing the same commutation relations as $\{ h_1,\ldots, h_{r'}\}$). In other words, we have that $\langle J(x),e_\alpha\rangle=h_\alpha$ is a Hamiltonian function of $X_\alpha$ for $\alpha=1,\ldots,r$. We assume  that
$$
\bigl\{J_\xi,J_\mu\bigr\}_{\Lambda}=J_{[\xi,\mu]},\qquad \forall \xi,\mu\in \mathfrak{g}.
$$
This is why $J$ is called an {\em ${\rm Ad}$-equivariant momentum map}. This condition is not too restrictive and quite ubiquitous in the literature \cite{OR04}. Since $G$ is assumed to be connected, then (see \cite{AM78})
$$
J\circ \Phi(g,x)={\rm Ad}_{g^{-1}}^TJ(\Phi(g,x)),\qquad \forall g\in G,\qquad  \forall x\in \mm .
$$ 
  
It is relevant to stress that $\mathfrak{g}^*$ admits a Poisson bracket: the Kostant--Kirillov--Souriau (KKS in short) bracket. More specifically, consider the dual $\mathfrak{g}^*$ of a Lie algebra $\mathfrak{g}$ with a basis $\{ e_1,\ldots,e_{r'}\}$ such that $[e_\alpha,e_\beta]=\sum_{\gamma=1}^{r'}c_{\alpha\beta}^\gamma e_\gamma$ for $\alpha,\beta=1,\ldots,r'$. Then, the KKS bracket on $\mathfrak{g}^*$ is related to the Poisson bivector  
\be
\Lambda_{\mathfrak{g}^*} =\sum_{1\leq \alpha<\beta}^{r'} \sum_{\gamma=1}^{r'}c_{\alpha\beta}^\gamma e_\gamma\frac{\partial}{\partial e_\alpha}\wedge \frac{\partial}{\partial e_\beta},
\label{biv}
\ee 
where we understand the elements of $\mathfrak{g}$ as coordinates in the dual $\mathfrak{g}^*$ via the known isomorphism $\mathfrak{g}\simeq \mathfrak{g}^{**}$. Moreover, it can be proved that $J_*\Lambda=\Lambda_{\mathfrak{g}^*}|_{J(\mm)}$.

To show that every $J$ in Definition \ref{Def:nonLH} can be understood as a momentum map, at least locally, which is our main concern in this work, we prove the following statement. It allows us to relate a LH algebra to a momentum map whose coordinates give a basis of the LH algebra.

\begin{proposition}
\label{Prop:EqLieMom} 
Let $\mathfrak{W}$ be  a $p$-dimensional Lie algebra of Hamiltonian functions   relative to a Poisson bracket on $C^\infty(\mm)$, and let $\mathfrak{g}$ be the abstract Lie algebra isomorphic to $\mathfrak{W}$. Assume, without  loss of generality, a basis $\{h_1,\ldots,h_p\}$ of $\mathfrak{W}$ so that $X_{h_1}=\ldots=X_{h_s}=0$ and $X_{h_{s+1}},\ldots,X_{h_{s+r}}$ span an $r$-dimensional Lie algebra of vector fields. Choose a basis $\{ e_1,\ldots,e_p\} $ closing the same commutation relations as $\{ h_1,\ldots,h_p\}$. 
Then, there exists a Lie group action $\Phi:G\times \mm\rightarrow \mm$ admitting a momentum map $J:\mm
\rightarrow \mathfrak{g}^*$ such that $\langle J(x),e_\alpha\rangle=h_\alpha(x)$ for $\alpha=1,\ldots,p$.  
\end{proposition}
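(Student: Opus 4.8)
The plan is to exhibit the desired momentum map by integrating the Hamiltonian vector fields $X_{h_1},\ldots,X_{h_p}$ to a group action, handling with care the degeneracy introduced by the Casimir elements. First I would consider the linear map $\phi:\mathfrak{g}\to\mathfrak{X}(\mm)$ fixed on the chosen basis by $\phi(e_\alpha)=X_{h_\alpha}=-\widehat{\Lambda}(\dd h_\alpha)$. Since $\{h_\alpha,h_\beta\}_\Lambda=\sum_\gamma c^\gamma_{\alpha\beta}h_\gamma$, where the $c^\gamma_{\alpha\beta}$ are the structure constants common to $\{h_\alpha\}$ and $\{e_\alpha\}$, and since $h\mapsto X_h$ intertwines the Poisson bracket with the commutator (up to the usual sign, consistently with the anti-homomorphism property of fundamental vector fields for a left action), the map $\phi$ is a Lie algebra morphism onto the $r$-dimensional Lie algebra $V=\langle X_{h_{s+1}},\ldots,X_{h_p}\rangle$. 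Its kernel is exactly $\mathfrak{z}:=\langle e_1,\ldots,e_s\rangle$, because $X_{h_\alpha}=0$ is equivalent to $h_\alpha$ being a Casimir of $\Lambda$; such Casimirs Poisson-commute with all of $\mathfrak{W}$, so $\mathfrak{z}$ is a central ideal of $\mathfrak{g}$ and $\mathfrak{g}/\mathfrak{z}\cong V$.

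Next I would integrate this infinitesimal data. Let $G$ be the connected, simply connected Lie group with Lie algebra $\mathfrak{g}$, and let $\bar G$ be the connected, simply connected group with Lie algebra $V\cong\mathfrak{g}/\mathfrak{z}$. The surjection $\mathfrak{g}\to V$ integrates to a (surjective) homomorphism $\pi:G\to\bar G$. As $V$ is a finite-dimensional Lie algebra of vector fields on $\mm$, it integrates to a Lie group action $\bar\Phi:\bar G\times\mm\to\mm$ whose fundamental vector fields are precisely the elements of $V$ — globally if the vector fields of $V$ are complete, and as a local action otherwise. Setting $\Phi:=\bar\Phi\circ(\pi\times\mathrm{id}):G\times\mm\to\mm$ then produces a (local) $G$-action whose fundamental vector field associated with $e_\alpha$ equals $\phi(e_\alpha)=X_{h_\alpha}$; in particular the central directions $e_1,\ldots,e_s$ act trivially, consistently with $X_{h_1}=\cdots=X_{h_s}=0$.

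Finally I would check that $J:x\mapsto\sum_{\alpha=1}^p h_\alpha(x)\,e^\alpha$ is an $\mathrm{Ad}$-equivariant momentum map for $\Phi$. By construction $\langle J(x),e_\alpha\rangle=h_\alpha(x)$ is a Hamiltonian function of the fundamental vector field $X_{h_\alpha}$; for $\alpha\le s$ this says $h_\alpha$ is a Casimir, i.e.\ a Hamiltonian function of the zero vector field, which is exactly the required compatibility. The equivariance then follows from the infinitesimal identity $\{J_{e_\alpha},J_{e_\beta}\}_\Lambda=\{h_\alpha,h_\beta\}_\Lambda=\sum_\gamma c^\gamma_{\alpha\beta}h_\gamma=J_{[e_\alpha,e_\beta]}$, which holds precisely because $\{e_\alpha\}$ closes the same commutation relations as $\{h_\alpha\}$; extending bilinearly and using that $G$ is connected upgrades this to the global equivariance relation recalled in Section~\ref{s4}.

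The main obstacle is the absence of a Lie algebra isomorphism between $\mathfrak{W}$ and $V$: naively integrating $X_{h_1},\ldots,X_{h_p}$ only yields an action of a group of dimension $r=p-s$, whose momentum map could never carry $p$ independent components. The technical heart is therefore the observation that the Casimirs span the central ideal $\mathfrak{z}$, which allows me to manufacture a genuine $p$-dimensional $G$-action by letting the corresponding central subgroup act trivially (the factorization through $\pi$), so that all $p$ functions $h_\alpha$ reappear as the components of $J$. A secondary, more routine, point is the integration step itself, where completeness of the vector fields in $V$ decides whether the constructed action is global or merely local.
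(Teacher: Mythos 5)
Your proposal follows essentially the same route as the paper's proof: build the surjective morphism from $\mathfrak{g}$ onto the Lie algebra $V$ of Hamiltonian vector fields, observe that its kernel is the central ideal spanned by the Casimir directions, integrate $V$ to a (local) action and pull it back through the projection $G\to G/H$, and then verify the momentum-map identity $\{J_{e_\alpha},J_{e_\beta}\}_\Lambda=J_{[e_\alpha,e_\beta]}$ from the matching structure constants. The argument is correct, and your explicit handling of the sign/anti-homomorphism convention and of the global-versus-local integration issue matches the caveats the paper relegates to a footnote.
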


\begin{proof}The Hamiltonian functions $  h_1,\ldots,h_p $ admit Hamiltonian vector fields $ X_1,\ldots,X_p $, respectively. The map $\phi:h\in \mathfrak{W}
\mapsto -X_h\in V$ relating each element in $\mathfrak{W}$ into minus its associated Hamiltonian vector field is a Lie algebra surjective morphism and, in consequence, $V\simeq \mathfrak{W}/\ker \phi$. Let $\mathfrak{g}'$ be the abstract Lie algebra isomorphic to $V$. Then, there exists a Lie group\footnote{Strictly speaking, it is only a local Lie group action, but as we are focused on local aspects, the difference is not relevant and it will be skipped for simplicity.} action $\varphi:G'\times \mm\rightarrow \mm$ such that the Lie algebra of $G'$ is $\mathfrak{g}'$ and whose fundamental vector fields are given by $V$. 

Now, let $G$ be the Lie group associated with the Lie algebra $\mathfrak{g}$. The kernel $\ker\phi$ is an ideal of $\mathfrak{W}\simeq \mathfrak{g}$ that can be integrated to give a normal Lie subgroup $H$ of $G$. Hence, $G/H$ is a Lie group isomorphic to $G'$. Moreover, consider the Lie group projection $\pi:G\rightarrow G/H\simeq G'$, which is a Lie group morphism. This gives rise to a new Lie group action $\Phi:G\times \mm\rightarrow \mm$  such that $\Phi_g=\varphi_{\pi(g)}$ for every $g\in G$. It follows that the Lie algebra of fundamental vector fields of $\Phi$ is also $V$. 

Let us construct $J$ and prove that is a momentum map for $\Phi$. 
By assumption, $ h_1,\ldots,h_s $   are Casimir functions and  $ h_{s+1},\ldots,h_{s+r}$ complete them to give a basis of $\mathfrak{W}$ so that $s+r=p$. Then, the isomorphism between $\mathfrak{W}$ and $\mathfrak{g}$ shows that $ e_1,\ldots,e_s$ are related to functions corresponding to a zero vector field. Hence, $\langle e_1,\ldots,e_s\rangle$ span an ideal of $\mathfrak{g}$ contained in the centre of $\mathfrak{g}$. 
Define 
$$
J(x)=h_1(x)e^1+\ldots+h_s(x)e^s+h_{s+1}(x)e^{s+1}+\ldots+h_{s+r}(x)e^{s+r}
$$
 for the dual basis $ \{ e^1,\ldots,e^{s+r} \}$ to $\{e_1,\ldots,e_{s+r}\}$. Note that $\varphi$, and therefore $\Phi$, can be constructed so that   $\langle e_{s+1},\ldots,e_{s+r}\rangle$ have fundamental vector fields $\langle X_1,\ldots,X_r \rangle$. In fact, the commutation relations of $\{ X_1,\ldots,X_r\}$ and $\{ e_{s+1},\ldots,e_{s+r}\}$ are opposite, where $\{ e_{s+1},\ldots,e_{s+r}\}$ is the basis of $\mathfrak{g}'$ obtained by projecting $e_{s+1},\ldots,e_{s+r}$ in $\mathfrak{g}$ to $\mathfrak{g}'$. Recall that $\langle e_1,\ldots,e_s\rangle $ have zero fundamental vector fields. The coordinates of $J$ span a Lie algebra isomorphic to $\mathfrak{g}$ and $\langle J(x),e_\alpha\rangle=h_\alpha(x)$ for $\alpha=1,\ldots,s+r=p$ and every $x\in \mm$. Note that
$$
J_{[e_\alpha,e_\beta]}=\{J_{e_\alpha},J_{e_\beta}\},\qquad 1\leq \alpha<\beta\leq s+r.
$$
Hence, $J:G\rightarrow \mathfrak{g}^*$ is a momentum map for $\Phi$ and the result follows. 
\end{proof}

Proposition \ref{Prop:EqLieMom} has a relevant meaning: a finite-dimensional Lie algebra of Hamiltonian functions relative to a Poisson bracket always gives rise to an Ad-invariant momentum map, at least locally.

Let us comment on the inequality $  r\le p$  in Proposition \ref{Prop:EqLieMom}. When $\mathfrak{W}\simeq \mathfrak{g}$ is a  semisimple Lie algebra, then $s=0$ and $r=p$, as it has no ideals. In contrast, when $\mathfrak{W}\simeq \mathfrak{g}$ is not semisimple, it may admit one or more non-trivial central extensions, implying that it may happen that $s\ge 1$ and $ r<p$, as is the case for  $\mathfrak{sl}(2,\mathbb{R})\oplus \mathfrak{h}_3$  in the previous section (with $s=1$).
In this respect, let us  illustrate with more detail  the latter situation by focusing on $ \mathfrak{h}_3$.


\subsection{Heisenberg Lie algebra}
\label{s41}

  Consider the functions $  h_0=1,h_1=q, h_2=p $ on ${\rm T}^*\mathbb{R}$ spanning a Lie algebra $\mathfrak{W}$  relative to the standard Poisson bracket in $C^\infty({\rm T}^*\mathbb{R})$ (as in (\ref{hamHH})). Then, $\mathfrak{W}$ is isomorphic to  the Heisenberg Lie algebra $\mathfrak{h}_3$  and the basis $\{ h_0,h_1,h_2\}$ is associated with the Hamiltonian vector fields on ${\rm T}^*\mathbb{R}$ given by $X_0=0, X_1=-\partial_p, X_2=\partial_q$, with respect to the standard symplectic structure on ${\rm T}^*\mathbb{R}$. The Lie algebra of vector fields $V=\langle X_0,X_1,X_2\rangle$ is isomorphic to the two-dimensional abelian Lie algebra, giving rise to a Lie group action of $\mathbb{R}^2$ on ${\rm T}^*\mathbb{R}$
$$
\varphi(\mu_1,\mu_2;q,p)=(q+\mu_2,p-\mu_1),\qquad \forall (\mu_1,\mu_2)\in \mathbb{R}^2, \qquad \forall (q,p)\in {\rm T}^*\mathbb{R},
$$
and whose fundamental vector fields are given by $V$. Note that the isomorphism $\phi:h\in \mathfrak{W}\mapsto -X_h\in V$ has $\ker \phi=\langle h_0\rangle$. The isomorphism   $\mathfrak{W}\simeq  \mathfrak{h}_3$ mapping $\phi(h_\alpha)=e_\alpha$ for a basis $\{ e_0,e_1,e_2\}$ of $\mathfrak{h}_3$ shows that $\langle e_0\rangle$ is an ideal of $\mathfrak{h}_3$. Note that one can assume a basis of $\mathfrak{h}_3$ as $3\times 3$ strictly upper triangular real matrices of the form 
$$
e_0=\left(\begin{array}{ccc}0&0&1\\0&0&0\\0&0&0\end{array}\right),\qquad
e_1=\left(\begin{array}{ccc}0&1&0\\0&0&0\\0&0&0\end{array}\right),\qquad e_2=\left(\begin{array}{ccc}0&0&0\\0&0&1\\0&0&0\end{array}\right),
$$
satisfying the  commutation relations $[e_1,e_2]=e_0$ and $[e_0,\cdot]=0$. Hence, consider the Lie group   ${\rm H}_3$  of $\mathfrak{h}_3$ given by the matrices
$$
{\rm H}_3=\left\{\left(\begin{array}{ccc}1&a&c\\0&1&b\\0&0&1\end{array}\right):a,b,c\in \mathbb{R}\right\}
$$
and the Lie subgroup related to the elements in the kernel of $\mathfrak{h}_3$  
$$
H=\left\{\left(\begin{array}{ccc}1&0&c\\0&1&0\\0&0&1\end{array}\right):c\in \mathbb{R}\right\}.
$$
Then, there exists a projection from the Heisenberg group to $\mathbb{R}^2$:
$$
\pi:{\rm H}_3=\left\{\left(\begin{array}{ccc}1&a&c\\0&1&b\\0&0&1\end{array}\right):a,b,c\in \mathbb{R}\right\}\mapsto (a,b)\in {\rm H}_3/H\simeq \mathbb{R}^2,
$$
which is also a surjective Lie group morphism.
We obtain a Lie group action 
$
\Phi:{\rm H}_3\times \mathbb{R}^2\rightarrow \mathbb{R}^2
$
such that
$$
\Phi\left(\left(\begin{array}{ccc}1&a&c\\0&1&b\\0&0&1\end{array}\right);(q,p)\right)=(q+b,p-a),\qquad \forall a,b,c\in \mathbb{R}, \qquad \forall q,p\in \mathbb{R},
$$
which has a  momentum map  given by
$$
J:(q,p)\in {\rm T}^*\mathbb{R}\mapsto e^0+qe^1+pe^2\in \mathfrak{h}_3^*,
$$
with $ \langle e^0,e^1,e^2 \rangle$ being the dual basis of $\langle e_0,e_1,e_2\rangle$. Indeed, 
$$
e^0+(q+c)e^1+(p-a)e^2=J\circ \Phi_A={\rm Ad}_{A^{-1}}^T\circ J,\qquad \forall A=\left(\begin{array}{ccc}1&a&c\\0&1&b\\0&0&1\end{array}\right)\in {\rm H}_3,
$$
 which shows that $\Phi$ is ${\rm Ad}$-invariant.


\section{Fundamental properties}
\label{s5}

We now focus our attention on the relations between nonlinear LH systems and other previous developments. We also illustrate the generality of the notion introduced in Definition~\ref{Def:nonLH}, which is potentially of use to study generic $t$-dependent Hamiltonian systems around a generic point.

Let $F\in C^\infty(\mathbb{R}\times \mathfrak{g}^*)$  be a $t$-dependent function on $\mathfrak{g}^*$ related to a nonlinear LH system $(\mm,\Lambda,X,J)$. The function $F$ leads to a $t$-dependent function $h$ on $\mm$ of the form $h(t,x)=F(t,J(x))$ for every $t\in \mathbb{R}$ and $x\in \mm$. More specifically, considering the coordinates of $J$ relative to a basis $ \{e^1,\ldots,e^r \}$  of $\mathfrak{g}^*$, one can write
\begin{equation}
\label{Eq:NonHamLie}
h(t,x)=F(t,J(x))=F\bigl(t,h_1(x),\ldots,h_r(x)\bigr),\qquad \forall t\in \mathbb{R},\qquad \forall x\in \mm,
\end{equation}
where $\{ h_1,\ldots,h_r\} $ form a basis of a Lie algebra of Hamiltonian functions relative to the Poisson bracket on $\mm$ given by $h_\alpha(x)=\langle J(x),e_\alpha\rangle$ for $\alpha=1,\ldots,r$ and any $x\in \mm$.

It is worth remarking  that (\ref{Eq:NonHamLie})  retrieves, when the dependence on $t$ is removed, the Hamiltonian functions studied in the theory of classical  integrable systems with Poisson coalgebra symmetries, as developed in many works (see \cite{LS20,BBR06,BH07,BBHMR09,BBH09,BB10} and references therein). This theory considers an $r$-dimensional Lie algebra $\langle h_1,\ldots,h_r\rangle$ of Hamiltonian functions on a Poisson manifold relative to its Poisson bracket, and studies the properties of Hamiltonian systems related to a function of the form $h=F(h_1,\ldots,h_r)$.

On the other hand, (\ref{Eq:NonHamLie}) is related to other theories. For instance,
$
h(t,x)=\sum_{\alpha=1}^rb_\alpha(t)h_\alpha,
$
with arbitrary $t$-dependent functions $\{b_1(t),\ldots,b_r(t)\}$, i.e.,~a linear combination with $t$-dependent coefficients of the functions $ h_1,\ldots,h_r $, gives rise to a Lie--Hamiltonian structure $(\mm,\Lambda,h)$, which is associated with LH systems, as presented in Definitions~\ref{defLH} and \ref{defLHS}.  Hence, this is a particular case of (\ref{Eq:NonHamLie}), which furthermore allows for considering much more general $t$-dependent functions of the basis $\{ h_1,\ldots,h_r\}$.  

Finally, the function $h$ in (\ref{Eq:NonHamLie}) can be considered as a $t$-dependent generalization of the so-called {\em collective Hamiltonians} \cite{GS80}. A collective Hamiltonian is a Hamiltonian function on a Poisson manifold $(\mm,\Lambda)$ of the form $h\circ J$, where $J$ is a certain momentum map $J:\mm\rightarrow \mathfrak{g}^*$. The most important difference of the definition of collective Hamiltonians with Definition \ref{Def:nonLH} is just the fact that the coordinates of $J$ do not need to span a Lie algebra of functions of dimension $\dim\mathfrak{g}$, as in our theory, but a Lie algebra of smaller dimension. Nevertheless, this is, in general, a technical issue and, in practice, one can say that both definitions are concerned with the same type of systems.

It should be noted  that this last connection does not seem to have been considered in the Poisson coalgebra symmetry theory, despite the relations to collective Hamiltonians that are to be displayed next. This important relation has also been omitted so far in the literature concerning the theory of LH systems (cf.~\cite{LS20}).

In view of Proposition \ref{Prop:EqLieMom},
the Lie algebra $\mathfrak{W}$ gives rise to a Lie algebra of Hamiltonian vector fields whose Hamiltonian functions are given by the elements of $\mathfrak{W}$. Locally, such vector fields can be integrated via Proposition \ref{Prop:EqLieMom} to a Hamiltonian Lie group action $\Phi:G\times \mm\rightarrow \mm$ with a momentum map $J:x\in \mm\mapsto \sum_{\alpha=1}^rh_\alpha(x)e^\alpha\in  \mathfrak{g}^*$ and $\mathfrak{g}\simeq \mathfrak{W}$.  Then, every $t$-dependent function $F\in C^\infty(\mathbb{R}\times \mathfrak{g}^*)$  yields a $t$-dependent Hamiltonian on $\mm$  of the form (\ref{Eq:NonHamLie}).
 In particular, standard collective Hamiltonians are recovered when $F$ is $t$-independent.

Let us now turn to study   general geometric properties of nonlinear LH systems.

\begin{proposition} 
\label{prop51}
Given a nonlinear LH system $(\mm,\Lambda,X,J:\mm\rightarrow \mathfrak{g}^*)$, the $t$-dependent vector field $X$ gives rise to a $t$-parametric family of vector fields on $\mm$ taking values in the generalized integrable distribution
$$
\mathcal{D}_x=\langle X_{h_1}(x),\ldots,X_{h_r}(x)\rangle,\qquad \forall x\in \mm,
$$
where $h_\alpha(x)=\langle J(x),e_\alpha\rangle$ for $\alpha=1,\ldots,r$, while $\{ e_1,\ldots,e_r\}$ is a basis for $\mathfrak{g}$.
\end{proposition}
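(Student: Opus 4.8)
The plan is to differentiate the composite Hamiltonian $h_t=F(t,\cdot)\circ J$ and then exploit the $C^\infty(\mm)$-linearity of the bundle morphism $\widehat{\Lambda}$. By Definition \ref{Def:nonLH} (and the relation $X_t=-\widehat{\Lambda}(\dd h_t)$ recorded just after it), the $t$-dependent vector field $X$ satisfies $X_t=-\widehat{\Lambda}(\dd h_t)$, where $h_t(x)=F\bigl(t,h_1(x),\ldots,h_r(x)\bigr)$ and $h_\alpha(x)=\langle J(x),e_\alpha\rangle$. First I would apply the chain rule to obtain, for each fixed $t$,
$$
\dd h_t=\sum_{\alpha=1}^r \frac{\partial F}{\partial \lambda_\alpha}\bigl(t,h_1(x),\ldots,h_r(x)\bigr)\,\dd h_\alpha,
$$
where $\{\lambda_1,\ldots,\lambda_r\}$ are the linear coordinates on $\mathfrak{g}^*$ dual to $\{e_1,\ldots,e_r\}$.

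Next I would apply $-\widehat{\Lambda}$ to this identity. Since $\widehat{\Lambda}:{\rm T}^*\mm\to {\rm T}\mm$ is a morphism of vector bundles over the identity, it is $C^\infty(\mm)$-linear and hence commutes with multiplication by the smooth functions $x\mapsto (\partial F/\partial\lambda_\alpha)\bigl(t,h_1(x),\ldots,h_r(x)\bigr)$. Writing $X_{h_\alpha}=-\widehat{\Lambda}(\dd h_\alpha)$ for the Hamiltonian vector field of $h_\alpha$, this yields
$$
X_t=\sum_{\alpha=1}^r \frac{\partial F}{\partial \lambda_\alpha}\bigl(t,h_1,\ldots,h_r\bigr)\, X_{h_\alpha},\qquad \forall t\in\mathbb{R}.
$$
Evaluating at any $x\in\mm$ shows that $X_t(x)$ is a linear combination of $X_{h_1}(x),\ldots,X_{h_r}(x)$, so $X_t(x)\in\mathcal{D}_x$ for every $t\in\mathbb{R}$ and every $x\in\mm$, as claimed.

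Finally I would check that $\mathcal{D}$ is a generalized integrable distribution. Since $\{h_1,\ldots,h_r\}$ spans a Lie algebra of functions relative to $\{\cdot,\cdot\}_\Lambda$, one has $\{h_\alpha,h_\beta\}_\Lambda=\sum_{\gamma=1}^r c_{\alpha\beta}^\gamma h_\gamma$, and (as established in the proof of Proposition \ref{Prop:EqLieMom}) the assignment $h\mapsto -X_h$ is a Lie algebra morphism, whence $[X_{h_\alpha},X_{h_\beta}]=-X_{\{h_\alpha,h_\beta\}_\Lambda}=-\sum_\gamma c_{\alpha\beta}^\gamma X_{h_\gamma}$ again lies in the span of the generators. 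Thus $\mathcal{D}$ is generated by the finite-dimensional Lie algebra of vector fields $\langle X_{h_1},\ldots,X_{h_r}\rangle$ and is therefore involutive. The one technical point requiring care is that the $X_{h_\alpha}$ need not be pointwise linearly independent, so the rank of $\mathcal{D}$ may jump across $\mm$; consequently the classical Frobenius theorem does not apply, and I would instead invoke the Stefan--Sussmann theorem, which guarantees that an involutive generalized distribution spanned by a family of vector fields is integrable. This singular integrability is the main (though standard) subtlety, the chain-rule computation being otherwise routine.
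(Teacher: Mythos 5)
Your proposal is correct and follows essentially the same route as the paper: the chain rule applied to $h_t=F(t,h_1,\ldots,h_r)$ combined with the $C^\infty(\mm)$-linearity of $\widehat{\Lambda}$ yields exactly the paper's decomposition (\ref{5xx}), from which $X_t(x)\in\mathcal{D}_x$ is immediate. The only divergence is in justifying integrability of $\mathcal{D}$: the paper observes that $\langle X_{h_1},\ldots,X_{h_r}\rangle$ is the Lie algebra of fundamental vector fields of the group action associated with $J$ (Proposition \ref{Prop:EqLieMom}), so its orbits furnish the integral leaves, whereas you invoke Stefan--Sussmann; note only that involutivity alone does not suffice for singular distributions, so what actually carries your argument is that the generators span a \emph{finite-dimensional} Lie algebra of vector fields, which is the same fact the paper uses.
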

\begin{proof}
    By assumption, $X$ has a $t$-dependent Hamiltonian function $h=F(t,h_1,\ldots,h_r)$ of the form (\ref{Eq:NonHamLie}). Hence, the relation $X_t=-\widehat{\Lambda}(\dd h_t)$ for every $t\in \mathbb{R}$ implies that 
\be
X(t,x)= \sum_{\alpha=1}^r\frac{\partial }{\partial e_\alpha}F\bigl(t,h_1(x),\ldots,h_r(x) \bigr)X_{h_\alpha}(x),\qquad \forall t\in \mathbb{R},\qquad \forall x\in \mm,
\label{5xx}
\ee
where $\{e_1,\ldots,e_r\}$ are considered as linear coordinates in $\mathfrak{g}^*$.
Consequently, each vector field $X_t$ takes values in the generalized distribution spanned by the vector fields $ X_{h_1},\ldots,X_{h_r} ,
$ namely
$$
\mathcal{D}_x=\langle X_{h_1}(x),\ldots,X_{h_r}(x)\rangle,\qquad \forall x\in \mm.
$$
Moreover, $\langle X_{h_1},\ldots,X_{h_r} \rangle$ is the Lie algebra of fundamental vector fields of the Lie group action associated with $J$. Hence, they span a generalized integrable distribution.
 \end{proof}

 Observe that for a standard (linear) LH system $h=\sum_{\alpha=1}^rb_\alpha(t)h_\alpha$, the relation~(\ref{5xx}) leads to the usual $t$-dependent vector field $X = \sum_{\alpha=1}^rb_\alpha(t)X_\alpha$, 
  with $  X_\alpha \equiv X_{h_\alpha}$, closing on the corresponding Vessiot--Guldberg Lie algebra of Hamilton vector fields.

Proposition \ref{Prop:GenNLH} below shows, on the one hand, how general nonlinear LH systems are. On the other hand, it proves that it is the mapping $J$, and its associated Lie algebra of functions, what really matters in practice to study $X$. For this purpose, we consider the  {\em Heisenberg Lie algebra in $n$ dimensions}, i.e., the $(2n+1)$-dimensional  Lie algebra $\mathfrak{h}(n)$ with a basis $\{ e_0,e_1,\ldots,e_{2n}\}$  and non-zero commutation relations 
$$
[e_{2i-1},e_{2i}]=e_0,\qquad i=1,\ldots,n.
$$
Here, $\mathfrak{h}(1) \equiv \mathfrak{h}_3$ corresponds to the usual Heisenberg Lie algebra.

 \begin{proposition}\label{Prop:GenNLH}For every $t$-dependent Hamiltonian vector field $X$ on a Poisson manifold $(\mm,\Lambda)$, there exists, around every generic point $x
\in \mm$ where $\Lambda$ is regular, an open neighbourhood $U$ of $x$ and an associated nonlinear LH system of the form $\bigl(U,\Lambda|_U,X|_U,J:U\rightarrow \mathfrak{h}(n)^*\oplus \mathbb{R}^{s*} \bigr)$, where $\dim \mm=2n+s$. 
 \end{proposition}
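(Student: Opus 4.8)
The plan is to exploit the Darboux theorem for regular Poisson structures, which gives local canonical coordinates in which the Poisson bivector takes a flat normal form, and then to realize the obvious quadratic/linear functions in those coordinates as the momentum map $J$. Since $\Lambda$ is regular at $x$, there is an open neighbourhood $U$ of $x$ and coordinates $(q_1,\ldots,q_n,p_1,\ldots,p_n,z_1,\ldots,z_s)$ such that
\begin{equation}
\Lambda|_U=\sum_{i=1}^n\frac{\partial}{\partial q_i}\wedge\frac{\partial}{\partial p_i},
\nonumber
\end{equation}
so that $\{q_i,p_j\}_\Lambda=\delta_{ij}$ while the $z_1,\ldots,z_s$ are Casimir coordinates (their Hamiltonian vector fields vanish). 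This splits $U$ into a symplectic factor of dimension $2n$ and an $s$-dimensional space of Casimirs, matching $\dim\mm=2n+s$.

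Next I would write down the candidate Lie algebra of functions. On the symplectic factor the functions $q_1,\ldots,q_n,p_1,\ldots,p_n$ together with the constant function $1$ close, under the canonical Poisson bracket, exactly the commutation relations of $\mathfrak{h}(n)$: one has $\{q_i,p_i\}_\Lambda=1$ and all other brackets among the $q$'s, $p$'s and the constant vanish, so identifying $e_0\leftrightarrow 1$, $e_{2i-1}\leftrightarrow q_i$, $e_{2i}\leftrightarrow p_i$ reproduces $[e_{2i-1},e_{2i}]=e_0$. The Casimir coordinates $z_1,\ldots,z_s$ each Poisson-commute with everything, so $\langle z_1,\ldots,z_s\rangle$ is a central abelian piece contributing the $\mathbb{R}^{s}$ summand. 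Thus $\mathfrak{W}:=\langle 1,q_1,\ldots,q_n,p_1,\ldots,p_n,z_1,\ldots,z_s\rangle$ is a finite-dimensional Lie algebra of Hamiltonian functions isomorphic to $\mathfrak{h}(n)\oplus\mathbb{R}^{s}$, and by Proposition~\ref{Prop:EqLieMom} it integrates to a (local) Hamiltonian Lie group action with an associated momentum map $J:U\rightarrow \mathfrak{h}(n)^*\oplus\mathbb{R}^{s*}$ whose coordinates are precisely these functions.

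It remains to express the given $t$-dependent Hamiltonian $h$ of $X$ as a function of $J$. Since $(q_1,\ldots,q_n,p_1,\ldots,p_n,z_1,\ldots,z_s)$ are coordinates on $U$, any smooth $t$-dependent Hamiltonian $h(t,\cdot)$ generating $X$ on $U$ can trivially be written as a smooth function of these coordinates, i.e.\ of the coordinates $h_\alpha(x)=\langle J(x),e_\alpha\rangle$ of $J$; choosing $F\in C^\infty(\mathbb{R}\times(\mathfrak{h}(n)^*\oplus\mathbb{R}^{s*}))$ to be this dependence gives $h(t,x)=F(t,J(x))$ and hence $X|_U=-\widehat{\Lambda|_U}(\dd h_t)$, so that $(U,\Lambda|_U,X|_U,J)$ is a nonlinear LH system in the sense of Definition~\ref{Def:nonLH}. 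The main obstacle, and the only point requiring care, is the passage from Darboux coordinates to an honest momentum map: one must check that the functions built from Darboux coordinates really do span the abstract Lie algebra $\mathfrak{h}(n)\oplus\mathbb{R}^{s}$ rather than a proper quotient, and that Proposition~\ref{Prop:EqLieMom} applies with the constant function $1$ and the Casimirs $z_j$ playing the role of the central, zero-vector-field generators (the $s$ in that proposition). This is precisely the non-semisimple situation flagged after Proposition~\ref{Prop:EqLieMom}, where the centre forces $\dim V<\dim\mathfrak{W}$; invoking that proposition locally resolves it, and the restriction to a neighbourhood where $\Lambda$ is regular is exactly what guarantees the Darboux normal form needed to start.
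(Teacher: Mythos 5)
Your proposal is correct and follows essentially the same route as the paper: the local Darboux/splitting normal form for a regular Poisson bivector, the identification of $\langle 1,q_1,\ldots,q_n,p_1,\ldots,p_n\rangle\oplus\langle z_1,\ldots,z_s\rangle$ with $\mathfrak{h}(n)\oplus\mathbb{R}^s$, and the trivial rewriting of $h_t$ as a function of these coordinates. Your version is in fact slightly more careful than the paper's, which omits the constant function $1$ from the displayed span and does not explicitly cite Proposition~\ref{Prop:EqLieMom} for the construction of $J$, both of which you rightly supply.
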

 \begin{proof}
  The Poisson bivector  $\Lambda$ gives rise to a local decomposition of $\mm$ into symplectic submanifolds. If the decomposition is a foliation, which happens on a neighbourhood $U$ of any point in $\mm$ where $\Lambda$ is regular (cf.~\cite{AM78}), then one can choose coordinates $\{q_1,\ldots,q_n,p_1,\ldots,p_n,z_1,\ldots,z_s\}$, where $2n+s=\dim \mm$, so that $\Lambda $ is given by (\ref{bivector}). Hence,  the Hamiltonian function of $X$ can always be written as $h=h(t,q_1,\ldots,q_n,p_1,\ldots,p_n,z_1,\ldots,z_s)$. But   $\langle q_1,\ldots,q_n,p_1,\ldots,p_n\rangle$ is a Lie algebra isomorphic to the $(2n+1)$-dimensional Heisenberg Lie algebra $\mathfrak{h}(n)$. In fact, the only non-zero commutation relations read
$\{q_i,p_j\}_\Lambda=\delta_{ij}$ for $i,j=1,\ldots,n$.
In addition, the functions $\langle z_1,\ldots,z_s\rangle$ Poisson commute among themselves and with all the other functions. Thus, $X$ can be considered as a nonlinear LH system and $\mathfrak{g}\simeq \mathfrak{h}(n)\oplus \mathbb{R}^s$. 
\end{proof}

Although every $t$-dependent vector field can be considered, under very mild conditions, as a nonlinear LH system, the way of doing it can be difficult as it relies on putting a Poisson bivector into canonical form. Moreover, the determination of  the coordinates $\{q_1,\ldots,q_n,p_1,\ldots,p_n,z_1,\ldots,z_s\}$ can be quite complicated, which makes the previous result insufficient to study certain $t$-dependent Hamiltonian systems. Finally, constants of the motion and other characteristics of nonlinear LH systems can be obtained provided the Lie algebra $\mathfrak{g}$ is of a particular type, e.g., simple or reductive. This also may restrict the range and applicability of Proposition \ref{Prop:GenNLH}.  
  
 
 \subsection{{\em t}-Dependent Hamiltonian formulation of Painlev\'e equations}
\label{s51}

As an application of the above results, let us consider  the case of the $t$-dependent Hamiltonian formulation for Painlev\'e invariants \cite{Pain,Decio}. The second Painlev\'e invariant is determined by the second-order differential equation in canonical form 
$$
\frac{\dd^2y}{\dd t^2}=2y^3+t y+b-\frac 12,\qquad\forall t\in \mathbb{R},\qquad \forall y\in \mathbb{R},\qquad b\in \mathbb{R} .
$$
This equation can be reformulated in terms of $t$-dependent Hamiltonian systems on ${\rm T}^*\mathbb{R}$ by considering 
$$
q=y, \qquad p=\frac{\dd y}{\dd t}+y^2+\frac{t}{2},
$$
 from which the equations of the motion 
$$
\left\{
\begin{aligned} 
\frac{\dd q}{\dd t}&=\frac{\partial h_{\rm II}}{\partial p}=p-q^2-\frac t2,\\
\frac{\dd p}{\dd t}&=-\frac{\partial h_{\rm II}}{\partial q}=2qp+b ,
\end{aligned}\right.
$$
are obtained with respect to the canonical symplectic form on ${\rm T}^*\mathbb{R}$ and the $t$-dependent Hamiltonian function given by
$$
h_{\rm II}(t,q,p)=\frac 12\, p\bigl(p-2q^2-t \bigr) -bq.
$$
Hence, we can consider the standard Lie group action of ${\rm H}_3$ on ${\rm T}^*\mathbb{R}$ of the form 
 $$
 \Phi:\bigl((\mu_0,\mu_1,\mu_2);q,p  \bigr)\in {\rm H}_3\times {\rm T}^*\mathbb{R} \mapsto (q+\mu_2,p-\mu_1) \in {\rm T}^*\mathbb{R}
 $$ 
 and then $J:(q,p)\in {\rm T}^*\mathbb{R}\mapsto  e^0+qe^1+p e^2\in  \mathfrak{h}(1)^* \equiv \mathfrak{h}_3^*$, as in Section~\ref{s41}. Therefore, $\mathfrak{W}\simeq \mathfrak{h}_3$ is spanned by 
 $\langle h_0=1,h_1=q, h_2=p\rangle$ such that  $\{h_1,h_2\}_\Lambda=h_0$ with respect to the 
 canonical symplectic form on ${\rm T}^*\mathbb{R}$. The associated Hamiltonian vector fields, via the canonical symplectic form on ${\rm T}^*\mathbb{R}$,  leads to  $V=\langle X_0=0, X_1=-\partial_p, X_2=\partial_q\rangle \simeq \mathbb{R}^2$.
 
 Finally, considering the function   $F\in C^\infty( \mathbb{R}\times \mathfrak{h}^*_3)$  given by
 $$
  F\bigl(t,\lambda_0e^0+\lambda_1e^1+\lambda_2e^2\bigr)=\frac 12 \lambda_2\bigl(\lambda_2-2\lambda_1^2-t\bigr) -b\lambda_1,
 $$
the $t$-dependent Painlev\'e  Hamiltonian is recovered  from the generic expression (\ref{Eq:NonHamLie})
 $$
F(t,J(q,p))= F(t, h_0,h_1,h_2)= \frac 12 h_2\bigl(h_2-2h_1^2-t\bigr) -bh_1 \equiv h_{\rm II}(t,q,p) ,
$$
allowing us to determine the corresponding $t$-dependent vector field $X_{\rm II}$  from the expression (\ref{5xx}), namely
\be
\begin{split}
X_{\rm II}& = - (2 h_1 h_2 + b)X_1+ \bigl(h_2 - h_1^2 -\tfrac 12 t \bigr) X_2\\[2pt]
&=   (2 q p+ b)\frac{\partial }{\partial p}+ \bigl(p - q^2 -\tfrac 12 t \bigr) \frac{\partial }{\partial q} .
\end{split}
\nonumber
\ee
 In this case, each vector field $(X_{\rm II})_t$  takes values in the generalized distribution   spanned by 
$X_1 ,  X_2$, according to Proposition~\ref{prop51}.


\section{Constants of the motion}
\label{s6}

  In this section, we  determine constants of the motion for nonlinear LH systems and analyze their properties.
In particular, we state that the space of constants of the motion for the $t$-dependent vector field $X$ of a nonlinear LH system $(\mm,\Lambda,X,J)$ admits a natural Poisson algebra structure, which may be used to derive new constants of the motion for $X$ from known ones. Among other achievements, our theory extends to nonlinear LH systems previous results relative to $t$-dependent and $t$-independent constants of the motion for LH systems, as described in \cite{BCHLS13,LS20} (the proofs of some of the following results can be found in these references). Moreover, we relate nonlinear LH systems with Hamiltonian systems on duals to Lie algebras.
 
A $t$-dependent constant of the motion for a system $X$ on a manifold $\mm$ is a first integral $\I\in C^\infty(\mathbb{R}\times \mm)$ of the autonomization $\bar X$ of $X$, namely
\begin{equation}
\bar X\I:=\frac{\partial \I}{\partial t}+X\I=0,
\label{sx}
\end{equation}
where $\bar X$ and $X$ are understood as vector fields on $\mathbb{R}\times \mm$ in the natural manner, while $t$ is the variable in $\mathbb{R}$. This fact leads to the following proposition, which was proved in~\cite[Proposition 8]{BCHLS13} and that  applies, in particular, to nonlinear LH systems.

\begin{proposition}\label{Prop:Ralg} The space $\bar{\mathcal{I}}^X$ of $t$-dependent constants of the motion for a system $X$ forms an $\mathbb{R}$-algebra $\bigl(\bar{\mathcal{I}}^X,\boldsymbol{\cdot}\, \bigr)$ relative to the product of functions ``$\,\boldsymbol{\cdot}$".
\end{proposition}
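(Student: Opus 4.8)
The plan is to exploit the fact that the autonomization $\bar X=\partial_t+X$ is a genuine vector field on the extended manifold $\mathbb{R}\times \mm$, and therefore acts on $C^\infty(\mathbb{R}\times \mm)$ as a derivation. From this viewpoint the space of $t$-dependent constants of the motion is nothing but the kernel of that derivation, $\bar{\mathcal{I}}^X=\{\I\in C^\infty(\mathbb{R}\times \mm):\bar X\I=0\}$, and the whole statement reduces to the elementary fact that the kernel of a derivation of a commutative algebra is a subalgebra.

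First I would verify that $\bar{\mathcal{I}}^X$ is a real vector subspace of $C^\infty(\mathbb{R}\times \mm)$. Since $\bar X$ acts $\mathbb{R}$-linearly, for any $\I_1,\I_2\in\bar{\mathcal{I}}^X$ and $a,b\in\mathbb{R}$ one has $\bar X(a\I_1+b\I_2)=a\,\bar X\I_1+b\,\bar X\I_2=0$, so $a\I_1+b\I_2\in\bar{\mathcal{I}}^X$.

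Next I would establish closure under the pointwise product, which is where the Leibniz (derivation) property of $\bar X$ enters: for $\I_1,\I_2\in\bar{\mathcal{I}}^X$,
$$\bar X(\I_1\I_2)=(\bar X\I_1)\,\I_2+\I_1\,(\bar X\I_2)=0,$$
so $\I_1\I_2\in\bar{\mathcal{I}}^X$. Because the product of functions is commutative, associative and distributes over the vector-space sum, and because the constant function $1$ lies in $\bar{\mathcal{I}}^X$ (as $\bar X1=0$), this endows $\bar{\mathcal{I}}^X$ with the structure of a unital commutative $\mathbb{R}$-algebra.

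The only point requiring care — rather than a genuine obstacle — is to justify that $\bar X$ is indeed a well-defined vector field, hence a derivation of $C^\infty(\mathbb{R}\times \mm)$, which legitimizes both the linearity and the Leibniz rule used above; this is immediate once $X$ and $\partial_t$ are regarded as vector fields on $\mathbb{R}\times \mm$ in the natural way, as specified in the definition preceding (\ref{sx}). I would emphasize that no feature of the nonlinear LH structure is invoked, so the conclusion holds for an arbitrary system $X$, which is why the statement is phrased at this level of generality.
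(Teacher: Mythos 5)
Your proof is correct and follows the standard argument: the autonomization $\bar X$ is a vector field on $\mathbb{R}\times\mm$, hence a derivation of $C^\infty(\mathbb{R}\times\mm)$, and $\bar{\mathcal{I}}^X=\ker\bar X$ is therefore a unital commutative subalgebra. The paper does not reproduce a proof but cites \cite[Proposition 8]{BCHLS13}, whose argument is essentially the one you give, so there is nothing further to compare.
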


To generalize the second statement of Proposition \ref{IntLieHam0}   to $t$-dependent
 constants of the motion, the space $\mathbb{R}\times \mm$ can be endowed with a Poisson structure, which turns $\bar{\mathcal{I}}^X$ into a Poisson algebra (see~\cite[Lemma 9]{BCHLS13}).

\begin{lemma}\label{Le::NewBrak} Every Poisson manifold $(\mm,\Lambda)$ induces a Poisson manifold $\bigl(\mathbb{R}\times \mm,\bar\Lambda \bigr)$ with Poisson bracket
\begin{equation}
\label{newBrack}\nonumber
\{f,g\}_{\bar\Lambda}(t,x)=\{f_t,g_t\}_{\Lambda}(x),\qquad \forall (t,x)\in
\mathbb{R}\times \mm,\qquad \forall f,g\in C^\infty(\mathbb{R}\times \mm),
\end{equation}
where we define as previously $f_t:x\in M\mapsto f(t,x)\in \mathbb{R}$ for every $f\in C^\infty(\mathbb{R}\times M)$. 
\end{lemma}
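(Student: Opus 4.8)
The plan is to verify directly that the fibrewise formula defines a Poisson bracket on $\mathbb{R}\times\mm$, i.e., that it is $\mathbb{R}$-bilinear, skew-symmetric, a derivation in each argument, and satisfies the Jacobi identity. The guiding observation, which makes everything go through, is that restriction to a fixed time slice intertwines the two brackets: writing $\iota_t:x\in\mm\mapsto(t,x)\in\mathbb{R}\times\mm$, one has $\iota_t^*f=f_t$ for every $f\in C^\infty(\mathbb{R}\times\mm)$, and the defining formula reads $\iota_t^*\{f,g\}_{\bar\Lambda}=\{f_t,g_t\}_\Lambda=\{\iota_t^*f,\iota_t^*g\}_\Lambda$ for every $t\in\mathbb{R}$.

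First I would check that $\{\cdot,\cdot\}_{\bar\Lambda}$ indeed comes from a genuine smooth bivector $\bar\Lambda$ on $\mathbb{R}\times\mm$. Since $\{f_t,g_t\}_\Lambda(x)$ depends only on the differentials of $f$ and $g$ along $\mm$, and not on their $t$-derivative, the assignment $(f,g)\mapsto\{f,g\}_{\bar\Lambda}$ is, at each point $(t,x)$, a skew bilinear expression in $(\dd f,\dd g)$ with no $\partial/\partial t$ leg; in adapted coordinates it has exactly the same (smooth) component functions as $\Lambda$, now regarded as independent of $t$. Hence $\bar\Lambda$ is a well-defined smooth bivector field, and in particular $t$ is a Casimir function of it. The $\mathbb{R}$-bilinearity and skew-symmetry are then immediate from the corresponding properties of $\{\cdot,\cdot\}_\Lambda$, while the Leibniz rule follows from $(gh)_t=g_th_t$ together with the Leibniz rule for $\{\cdot,\cdot\}_\Lambda$, giving $\{f,gh\}_{\bar\Lambda}=\{f,g\}_{\bar\Lambda}\,h+g\,\{f,h\}_{\bar\Lambda}$.

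The only substantive point is the Jacobi identity, and here the slice-wise compatibility does all the work. Since $\{g,h\}_{\bar\Lambda}$ is again a function on $\mathbb{R}\times\mm$ with $(\{g,h\}_{\bar\Lambda})_t=\{g_t,h_t\}_\Lambda$, evaluating the nested bracket at $(t,x)$ gives
$$
\{f,\{g,h\}_{\bar\Lambda}\}_{\bar\Lambda}(t,x)=\bigl\{f_t,\{g_t,h_t\}_\Lambda\bigr\}_\Lambda(x),
$$
and likewise for the two cyclic permutations. Summing these and invoking the Jacobi identity for $\{\cdot,\cdot\}_\Lambda$ applied to the three smooth functions $f_t,g_t,h_t\in C^\infty(\mm)$ shows that the cyclic sum vanishes at every $(t,x)$, hence identically. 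Therefore $\{\cdot,\cdot\}_{\bar\Lambda}$ is a Poisson bracket and $(\mathbb{R}\times\mm,\bar\Lambda)$ is a Poisson manifold.

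The main (and essentially only) obstacle is the bookkeeping: one must be careful that ``restriction to fixed $t$'' commutes with the bracket, i.e., that $(\{g,h\}_{\bar\Lambda})_t$ really equals $\{g_t,h_t\}_\Lambda$, since this is precisely what reduces the Jacobi identity on $\mathbb{R}\times\mm$ to the already-known Jacobi identity of $\{\cdot,\cdot\}_\Lambda$. Equivalently, one could phrase the whole argument through the Schouten--Nijenhuis bracket by checking $[\bar\Lambda,\bar\Lambda]_{\rm SN}=0$, which again reduces to $[\Lambda,\Lambda]_{\rm SN}=0$ because $\bar\Lambda$ carries no $\partial/\partial t$ component; but the direct verification above is shorter and self-contained.
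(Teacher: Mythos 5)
Your proof is correct. The paper itself does not prove this lemma---it simply cites Lemma~9 of the reference \cite{BCHLS13}---so there is no in-text argument to compare against; your direct fibrewise verification (smoothness of the induced bivector with no $\partial/\partial t$ component, bilinearity, skew-symmetry, Leibniz rule, and reduction of the Jacobi identity on $\mathbb{R}\times\mm$ to that of $\Lambda$ on each slice via $(\{g,h\}_{\bar\Lambda})_t=\{g_t,h_t\}_\Lambda$) is exactly the standard argument and is complete. The key step you correctly isolate---that restriction to a fixed $t$ intertwines the two brackets---is indeed the whole content of the lemma, and your remark that the same conclusion follows from $[\bar\Lambda,\bar\Lambda]_{\rm SN}=0$ reducing to $[\Lambda,\Lambda]_{\rm SN}=0$ is a valid equivalent phrasing.
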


This result motivates the following notion:

\begin{definition} Given a Poisson manifold $(\mm,\Lambda)$,
the associated Poisson manifold $\bigl(\mathbb{R}\times \mm,\bar \Lambda\bigr)$ is called the {\em autonomization} of $(\mm,\Lambda)$. Likewise,
 the Poisson bivector $\bar {\Lambda}$ is called the {\em autonomization} of $\Lambda$.
\end{definition}

The following lemma allows us to show that $\bigl(\bar{\mathcal{I}}^X,\boldsymbol{\cdot}\, ,\{\cdot,\cdot\}_{\bar\Lambda} \bigr)$
is a Poisson algebra~\cite{LS20}.

\begin{lemma}\label{AutLieHam} Let $(\mm,\Lambda)$ be a Poisson manifold and
let $X$ be a Hamiltonian vector field on $\mm$ relative to it. Then, the Lie derivative satisfies 
$\mathcal {L}_{\bar{X}}\bar{\Lambda}=0$.
\end{lemma}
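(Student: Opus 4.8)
The plan is to reduce the statement to the standard fact that a Hamiltonian vector field on a Poisson manifold is an infinitesimal Poisson automorphism, hence annihilates the Poisson bivector, and then to check that passing to the autonomization $(\mathbb{R}\times\mm,\bar\Lambda)$ introduces no new obstruction. First I would record the two objects explicitly. By (\ref{sx}) the autonomized field is $\bar X=\partial_t+X$ on $\mathbb{R}\times\mm$, where for each fixed $t$ the field $X_t$ is Hamiltonian on $(\mm,\Lambda)$. By Lemma~\ref{Le::NewBrak} the autonomized bivector $\bar\Lambda$ is just the lift of $\Lambda$: in coordinates $\{t,x^1,\ldots,x^n\}$ with $\Lambda=\tfrac12\sum\Lambda^{ij}(x)\,\partial_{x^i}\wedge\partial_{x^j}$ one has $\bar\Lambda=\tfrac12\sum\Lambda^{ij}(x)\,\partial_{x^i}\wedge\partial_{x^j}$, with the same $t$-independent coefficients and, crucially, no leg along $\partial_t$.

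Next I would exploit additivity of the Lie derivative in its vector-field argument to split
\[
\mathcal{L}_{\bar X}\bar\Lambda=\mathcal{L}_{\partial_t}\bar\Lambda+\mathcal{L}_{X}\bar\Lambda,
\]
where in the second term $X=\sum_i X^i(t,x)\,\partial_{x^i}$ is viewed as a $t$-dependent vector field on $\mathbb{R}\times\mm$ with no $\partial_t$-component. The first term vanishes because the coefficients of $\bar\Lambda$ do not depend on $t$ and there is no $\partial_t$-leg for $\partial_t$ to act on; equivalently, the flow $(t,x)\mapsto(t+s,x)$ manifestly preserves $\bar\Lambda$. For the second term I would write out $(\mathcal{L}_X\bar\Lambda)^{\mu\nu}=X^\rho\partial_\rho\bar\Lambda^{\mu\nu}-\bar\Lambda^{\rho\nu}\partial_\rho X^\mu-\bar\Lambda^{\mu\rho}\partial_\rho X^\nu$ and observe that, since $\bar\Lambda$ has no $\partial_t$-leg and $X$ has no $\partial_t$-component, every surviving index $\rho$ is spatial and every $\partial_\rho$ is a spatial derivative. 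Hence the components carrying a $t$-index vanish identically, while the purely spatial components reproduce exactly $(\mathcal{L}_{X_t}\Lambda)^{ij}(x)$ at each fixed $t$.

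It then remains to invoke that each $X_t$, being Hamiltonian, satisfies $\mathcal{L}_{X_t}\Lambda=0$. I would justify this from the integrability condition $[\Lambda,\Lambda]_{\rm SN}=0$: using $\mathcal{L}_{X_t}\Lambda=[X_t,\Lambda]_{\rm SN}$ together with the graded Jacobi identity for the Schouten--Nijenhuis bracket, the expression $[[\Lambda,h_t]_{\rm SN},\Lambda]_{\rm SN}$ vanishes precisely when $[\Lambda,\Lambda]_{\rm SN}=0$. Equivalently, and perhaps more transparently, $\mathcal{L}_{X_t}\Lambda=0$ is a restatement of the Jacobi identity $\{h_t,\{f,g\}\}=\{\{h_t,f\},g\}+\{f,\{h_t,g\}\}$, i.e.\ that $X_t$ acts as a derivation of the Poisson bracket. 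Combining the two vanishing contributions yields $\mathcal{L}_{\bar X}\bar\Lambda=0$.

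The only genuinely delicate point, and the one I would take care over, is the bookkeeping in the second paragraph. Because $X$ carries $t$-dependent coefficients, one might worry about spurious terms involving $\partial_t X^i$; but such terms never arise, since $\bar\Lambda$ has no $\partial_t$-leg and so no contraction ever pairs a $t$-derivative of $X$ with $\bar\Lambda$. Once this is checked, everything reduces to the routine implication ``Hamiltonian $\Rightarrow$ Poisson vector field'' applied fibrewise in $t$, plus the trivial invariance of $\bar\Lambda$ under $\partial_t$.
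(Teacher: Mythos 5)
Your proof is correct. Note that the paper itself does not prove Lemma~\ref{AutLieHam}: it is stated with a citation to the monograph \cite{LS20}, so there is no in-paper argument to compare against. Your route --- splitting $\mathcal{L}_{\bar X}\bar\Lambda=\mathcal{L}_{\partial_t}\bar\Lambda+\mathcal{L}_{X}\bar\Lambda$, observing that $\bar\Lambda$ has no $\partial_t$-leg and $t$-independent coefficients so that both the $\partial_t$-term and all components with a $t$-index vanish, and reducing the rest fibrewise to $\mathcal{L}_{X_t}\Lambda=0$ via the Jacobi identity --- is the standard argument and correctly fills the gap, including the one genuinely delicate bookkeeping point about $\partial_t X^i$ never pairing with $\bar\Lambda$.
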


This enables us to formulate and prove a slight generalization to nonlinear LH systems from a similar result that holds for LH systems~\cite[Proposition 12]{LS20}.

\begin{proposition}\label{AutLieHam2}  Let $(\mm,\Lambda,X,J)$ be a nonlinear LH system. Then, $\bigl(\overline{\mathcal{I}}^X,\boldsymbol{\cdot}\, ,\{\cdot,\cdot\}_{\bar\Lambda} \bigr)$
is a Poisson algebra.
\end{proposition}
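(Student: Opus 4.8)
The plan is to establish the two defining properties of a Poisson algebra for $\bigl(\overline{\mathcal{I}}^X,\boldsymbol{\cdot}\, ,\{\cdot,\cdot\}_{\bar\Lambda} \bigr)$: first that it is closed under the ordinary product, and second that it is closed under the bracket $\{\cdot,\cdot\}_{\bar\Lambda}$, with the Leibniz compatibility then being inherited automatically from the ambient Poisson algebra $\bigl(C^\infty(\mathbb{R}\times\mm),\boldsymbol{\cdot}\,,\{\cdot,\cdot\}_{\bar\Lambda}\bigr)$ furnished by Lemma~\ref{Le::NewBrak}. The closure under products is exactly the content of Proposition~\ref{Prop:Ralg}, so that half is already available and I would merely cite it. The real work is showing that if $\I_1,\I_2\in\overline{\mathcal{I}}^X$, then $\{\I_1,\I_2\}_{\bar\Lambda}\in\overline{\mathcal{I}}^X$, i.e.\ that the bracket of two $t$-dependent constants of the motion is again a $t$-dependent constant of the motion.

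The key computation I would carry out is to check the defining condition $\bar X\,\{\I_1,\I_2\}_{\bar\Lambda}=0$ from equation~(\ref{sx}). First I would invoke Lemma~\ref{AutLieHam}, which gives $\mathcal{L}_{\bar X}\bar\Lambda=0$; this is the crucial input and is precisely where the Hamiltonian nature of $X$ (guaranteed here because $(\mm,\Lambda,X,J)$ is a nonlinear LH system, so $X_t=-\widehat\Lambda(\dd h_t)$ and hence each $X_t$ is Hamiltonian relative to $\Lambda$) enters. I would then expand the Lie derivative of a bracket using the derivation property of $\mathcal{L}_{\bar X}$ over the Poisson bracket, schematically
\begin{equation}
\mathcal{L}_{\bar X}\{\I_1,\I_2\}_{\bar\Lambda}
=\bigl(\mathcal{L}_{\bar X}\bar\Lambda\bigr)(\dd\I_1,\dd\I_2)
+\{\mathcal{L}_{\bar X}\I_1,\I_2\}_{\bar\Lambda}
+\{\I_1,\mathcal{L}_{\bar X}\I_2\}_{\bar\Lambda}.
\label{plan:leibniz}
\end{equation}
The first term vanishes by Lemma~\ref{AutLieHam}, and since $\mathcal{L}_{\bar X}\I_i=\bar X\I_i=0$ for $i=1,2$ because $\I_1,\I_2$ are constants of the motion, the remaining two terms vanish as well. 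Thus $\bar X\{\I_1,\I_2\}_{\bar\Lambda}=0$ and the bracket lies in $\overline{\mathcal{I}}^X$, establishing closure.

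The main obstacle, such as it is, is bookkeeping rather than conceptual: I must be careful that the identity~(\ref{plan:leibniz}) is applied on the autonomized manifold $\bigl(\mathbb{R}\times\mm,\bar\Lambda\bigr)$ rather than on $\mm$ itself, so that the $t$-derivative piece of $\bar X$ is correctly accounted for, and that $\bar X$ acting on a function coincides with $\mathcal{L}_{\bar X}$ applied to it. Once closure under both operations is in hand, I would conclude by noting that $\overline{\mathcal{I}}^X$ is a linear subspace stable under $\boldsymbol{\cdot}$ and $\{\cdot,\cdot\}_{\bar\Lambda}$, and that the Leibniz rule relating the two operations is simply the restriction of the corresponding identity on $C^\infty(\mathbb{R}\times\mm)$, so $\bigl(\overline{\mathcal{I}}^X,\boldsymbol{\cdot}\,,\{\cdot,\cdot\}_{\bar\Lambda}\bigr)$ is a Poisson algebra, completing the proof.
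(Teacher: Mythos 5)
Your proposal is correct and follows essentially the same route as the paper: closure under the product is delegated to Proposition~\ref{Prop:Ralg}, and closure under $\{\cdot,\cdot\}_{\bar\Lambda}$ is obtained from Lemma~\ref{AutLieHam} together with the Leibniz/derivation property of the Lie derivative on the autonomized Poisson manifold. The only cosmetic difference is that the paper evaluates pointwise at each $t'$ using the frozen vector fields $\bar X_{t'}$ rather than writing the identity directly for $\bar X$, which is the same bookkeeping issue you correctly flag.
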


\begin{proof} Proposition \ref{Prop:Ralg} ensures that $\bigl(\bar{\mathcal{I}}^X,\boldsymbol{\cdot}\, \bigr)$ is an $\mathbb{R}$-algebra. To prove that $\bigl(\bar{\mathcal{I}}^X,\boldsymbol{\cdot}\, ,\{\cdot,\cdot\}_{\bar\Lambda} \bigr)$ is a Poisson algebra, let us show that  $\bar{\mathcal{I}}^X$ is closed with respect to the Poisson bracket $\{\cdot,\cdot\}_{\bar{\Lambda}}$, i.e.~$\bar{X}\{f,g\}_{\bar{\Lambda}}=0$ for all $f,g\in \bar{\mathcal{I}}^X$.
 As the vector fields $\{X_t\}_{t\in\mathbb{R}}$ are Hamiltonian relative to $(\mm,\Lambda)$, and using 
 Lemma \ref{AutLieHam}, we have that $\bar\Lambda$ is invariant under the autonomization of each vector field $X_{t'}$ with $t'\in\mathbb{R}$, i.e.~$
 \mathcal{L}_{ \bar{X}_{t'}}\bar\Lambda=0$. Therefore,
 \begin{equation}
\begin{split}
   \bar X\{f,g\}_{\bar{\Lambda}}(t',x)&=\bar {X}_{t'}\{f,g\}_{\bar{\Lambda}}(t',x)= \bigl\{\bar{X}_{t'}f,g \bigr\}_{\bar{\Lambda}}(t',x)+
\bigl \{f,\bar{ X}_{t'}g \bigr\}_{\bar{\Lambda}}(t',x)\\[2pt]
 &=\{\bar{X}f,g\}_{\bar{\Lambda}}(t',x)+
 \{f,\bar{X}g\}_{\bar{\Lambda}}(t',x)=0.
\end{split}
\nonumber
\end{equation}
It follows at once from this relation that $\{f,g\}_{\bar{\Lambda}}$ is a $t$-dependent constant of the motion for $X$.
 \end{proof}

The following statement will be a key to obtain $t$-independent constants of the motion for nonlinear LH systems. In particular, it generalizes Proposition \ref{IntLieHam0} to the case of nonlinear LH systems and, furthermore,  also constitutes the cornerstone for extending  the Poisson coalgebra method to nonlinear LH systems, which will be addressed in the next section.

\begin{proposition} 
\label{prop63}
Let $(\mm,\Lambda,X,J)$ be a nonlinear LH system such that $X_t$ admits a Hamiltonian function $h_t$ for every $t\in \mathbb{R}$. A function $\I\in C^\infty(\mm)$ is a constant of the motion for $X$ if and only if $\I$ Poisson commutes with all elements of ${\rm Lie}\bigl(\{h_t\}_{t\in \mathbb{R}}, \{\cdot,\cdot\}_\Lambda \bigr)$.
\end{proposition}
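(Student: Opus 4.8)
The plan is to reduce the notion of a $t$-independent constant of the motion to a purely Poisson-algebraic commutation condition, and then exploit the Jacobi identity to upgrade commutation with the generators $\{h_t\}_{t\in\mathbb{R}}$ to commutation with the whole Lie algebra they generate.

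First I would unpack the definition of constant of the motion encoded in (\ref{sx}). Since $\I\in C^\infty(\mm)$ is $t$-independent, $\partial\I/\partial t=0$ and the autonomization condition collapses to $X_t\I=0$ for every $t\in\mathbb{R}$. Using that $X$ is Hamiltonian with $X_t=-\widehat{\Lambda}(\dd h_t)$, I would rewrite $X_t\I=-\widehat{\Lambda}(\dd h_t)(\dd\I)=-\Lambda(\dd h_t,\dd\I)=\{\I,h_t\}_\Lambda$. Hence $\I$ is a constant of the motion if and only if $\{\I,h_t\}_\Lambda=0$ for all $t\in\mathbb{R}$, i.e.\ if and only if $\I$ Poisson commutes with every generator $h_t$ of $\mathcal{H}_\Lambda:={\rm Lie}\bigl(\{h_t\}_{t\in\mathbb{R}},\{\cdot,\cdot\}_\Lambda\bigr)$. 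Note that only the existence of the $h_t$ (guaranteed by the nonlinear LH structure) is used here; the nonlinear dependence $h_t=F(t,J(\cdot))$ plays no further role.

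The ``if'' implication is then immediate: if $\I$ Poisson commutes with every element of $\mathcal{H}_\Lambda$, then in particular with each $h_t\in\mathcal{H}_\Lambda$, so $\I$ is a constant of the motion by the previous paragraph. For the ``only if'' implication, which is the substantive part, I would introduce the centralizer $\mathcal{C}_\I:=\{g\in C^\infty(\mm):\{\I,g\}_\Lambda=0\}$ and show that it is a Lie subalgebra of $\bigl(C^\infty(\mm),\{\cdot,\cdot\}_\Lambda\bigr)$. Bilinearity of the bracket makes $\mathcal{C}_\I$ a linear subspace, and for $g_1,g_2\in\mathcal{C}_\I$ the Jacobi identity gives $\{\I,\{g_1,g_2\}_\Lambda\}_\Lambda=\{\{\I,g_1\}_\Lambda,g_2\}_\Lambda+\{g_1,\{\I,g_2\}_\Lambda\}_\Lambda=0$, so $\{g_1,g_2\}_\Lambda\in\mathcal{C}_\I$. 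If $\I$ is a constant of the motion, then $h_t\in\mathcal{C}_\I$ for every $t$; since $\mathcal{C}_\I$ is a Lie subalgebra containing the full generating set $\{h_t\}_{t\in\mathbb{R}}$, the minimality of ${\rm Lie}\bigl(\{h_t\}_{t\in\mathbb{R}},\{\cdot,\cdot\}_\Lambda\bigr)$ recalled in Section~\ref{s2} forces $\mathcal{H}_\Lambda\subseteq\mathcal{C}_\I$, i.e.\ $\I$ Poisson commutes with all of $\mathcal{H}_\Lambda$.

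The only step requiring genuine care is the Jacobi-identity argument establishing that the centralizer is closed under the bracket; once that is in place, the conclusion follows purely from the defining minimality property of ${\rm Lie}(\cdot,\cdot)$, and no finite-dimensionality of $\mathcal{H}_\Lambda$ nor regularity of $\Lambda$ is needed. This is precisely the mechanism that makes the statement the natural generalization of Proposition~\ref{IntLieHam0} to the nonlinear setting.
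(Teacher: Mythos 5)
Your proposal is correct and follows essentially the same route as the paper: reduce the constancy condition to $\{\I,h_t\}_\Lambda=0$ for all $t$, then use the Jacobi identity to propagate commutation from the generators $\{h_t\}_{t\in\mathbb{R}}$ to the whole of ${\rm Lie}\bigl(\{h_t\}_{t\in\mathbb{R}},\{\cdot,\cdot\}_\Lambda\bigr)$. Your packaging of the recursion as ``the centralizer of $\I$ is a Lie subalgebra containing the generating set, hence contains the generated Lie algebra by minimality'' is merely a cleaner phrasing of the paper's inductive argument, not a different proof.
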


\begin{proof} The function $\I$ is a $t$-independent constant of the motion for $X$ if and only if
\begin{equation}
0 = X_t \I = \{\I,h_t\}_{\Lambda}, \qquad \forall t \in \mathbb{R}. 
\nonumber
\end{equation}
Hence,
$$
\bigl\{\I,\{h_t,h_t'\}_\Lambda \bigr\}_\Lambda =\bigl\{\{\I,h_t\}_{\Lambda},h_t' \bigr\}_\Lambda +\bigl\{h_t,
\{\I,h_t'  \}_\Lambda \bigr\}_\Lambda =0, \qquad \forall t,t' \in \mathbb{R},
$$
and by recursion, it follows that $\I$ Poisson commutes with all successive Poisson brackets of elements of $\{h_t\}_{t\in \mathbb{R}}$ and their linear combinations. As these elements span ${\rm Lie}\bigl(\{h_t\}_{t\in \mathbb{R}}, \{\cdot,\cdot\}_\Lambda \bigr)$, we conclude that $\I$ Poisson commutes with Lie$\bigl(\{h_t\}_{t\in \mathbb{R}} , \{\cdot,\cdot\}_\Lambda\bigr)$.
Conversely, if $\I$ Poisson commutes with Lie$\bigl(\{h_t\}_{t\in \mathbb{R}}, \{\cdot,\cdot\}_\Lambda \bigr)$, it Poisson commutes with the elements $\{h_t\}_{t\in \mathbb{R}}$, and, in view of \eqref{sx}, it becomes a constant of the motion for $X$.
\end{proof}

Nonlinear LH systems are naturally related to the duals to Lie algebras by means of a momentum map, although we have not described so far the precise role that such duals play in the study of nonlinear LH systems.  In the following, we prove that any nonlinear LH system is associated with a $t$-dependent Hamiltonian system on the dual to a Lie algebra with respect to the KKS  bracket. 

Recall that each nonlinear LH system $(\mm,\Lambda,X,J:\mm\rightarrow \mathfrak{g}^*)$ is related to a $t$-dependent function  $F\in C^\infty(\mathbb{R}\times\mathfrak{g}^*)$ given in (\ref{Eq:NonHamLie}). Then, one can define, for every value $t\in \mathbb{R}$, a function  $F_t:  \mathfrak{g}^* \rightarrow \mathbb{R}$. The functions $\{F_t\}_{t\in \mathbb{R}}$ give rise to a $t$-dependent Hamiltonian related to a $t$-dependent vector field $X^{\mathfrak{g}^*}$ on $\mathfrak{g}^*$ relative to the KKS bracket on $\mathfrak{g}^*$ (see (\ref{biv})). Thus, $F$ yields a $t$-dependent Hamiltonian system on $\mathfrak{g}^*$ from the $t$-dependent vector field $X^{\mathfrak{g}^*}$ induced by   $\bigl\{X^{\mathfrak{g}^*}_t \bigr\}_{t\in \mathbb{R}}$. This leads to the following result.

\begin{proposition} 
\label{prop64}
A nonlinear LH system $(\mm,\Lambda,X,J:\mm
\rightarrow \mathfrak{g}^*)$ associated with a function $F\in C^\infty(\mathbb{R}\times\mathfrak{g}^*)$ yields a $t$-dependent Hamiltonian system $X^{\mathfrak{g}^*}$ on $\mathfrak{g}^*$ linked to $F$ via the KKS bracket on $\mathfrak{g}^*$, namely $X^{\mathfrak{g}^*}_t=-\widehat{
\Lambda}_{\mathfrak{g}^*}(\mbox{\rm \dd} F_t)$, for every $t\in \mathbb{R}$.
\end{proposition}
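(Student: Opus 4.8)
The plan is to reduce everything to the single structural fact proved in Section~\ref{s4}, that the momentum map $J$ is a Poisson morphism onto its image, $J_*\Lambda=\Lambda_{\mathfrak{g}^*}|_{J(\mm)}$, combined with the defining relation $h_t=F_t\circ J$ of a nonlinear LH system. First I would fix $t\in\mathbb{R}$ and observe that, since $F_t\in C^\infty(\mathfrak{g}^*)$, the KKS bivector (\ref{biv}) assigns to it a bona fide Hamiltonian vector field $X^{\mathfrak{g}^*}_t:=-\widehat{\Lambda}_{\mathfrak{g}^*}(\dd F_t)$ on $\mathfrak{g}^*$; letting $t$ vary yields the $t$-dependent vector field $X^{\mathfrak{g}^*}$. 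The real content of the statement is that $X^{\mathfrak{g}^*}$ is genuinely the image of the dynamics of $X$, i.e.\ that $X_t$ and $X^{\mathfrak{g}^*}_t$ are $J$-related: ${\rm T}J\circ X_t=X^{\mathfrak{g}^*}_t\circ J$.

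The heart of the argument is a short pairing computation. For arbitrary $f\in C^\infty(\mathfrak{g}^*)$ and $x\in\mm$, using $X_t=-\widehat{\Lambda}(\dd h_t)$, one has
\begin{equation}
\bigl\langle \dd f|_{J(x)},{\rm T}J\bigl(X_t(x)\bigr)\bigr\rangle=\bigl(X_t(f\circ J)\bigr)(x)=\{f\circ J,h_t\}_\Lambda(x).
\nonumber
\end{equation}
Substituting $h_t=F_t\circ J$ and applying the Poisson-morphism property of $J$ turns the right-hand side into $\{f,F_t\}_{\Lambda_{\mathfrak{g}^*}}(J(x))=\bigl\langle \dd f|_{J(x)},X^{\mathfrak{g}^*}_t(J(x))\bigr\rangle$. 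Since $f$ is arbitrary, the two tangent vectors paired against every covector agree, forcing ${\rm T}J(X_t(x))=X^{\mathfrak{g}^*}_t(J(x))$, which is precisely the asserted link and the formula $X^{\mathfrak{g}^*}_t=-\widehat{\Lambda}_{\mathfrak{g}^*}(\dd F_t)$.

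The point requiring care --- more a subtlety than an obstacle --- is that $J$ need not be surjective, so $F_t$ is not determined by $h_t$ off $J(\mm)$. This is harmless: by ${\rm Ad}$-equivariance $J(\mm)$ is a union of coadjoint orbits, which are exactly the symplectic leaves of the KKS structure, and the Hamiltonian vector field $-\widehat{\Lambda}_{\mathfrak{g}^*}(\dd F_t)$ at a point depends only on the restriction of $\dd F_t$ to the tangent space of the leaf through that point. Hence $X^{\mathfrak{g}^*}_t$ is tangent to $J(\mm)$ and its value there is intrinsic to $h_t$, independent of the chosen smooth extension $F$, whereas global smoothness of $F$ on $\mathfrak{g}^*$ guarantees that $X^{\mathfrak{g}^*}$ is defined on all of $\mathfrak{g}^*$. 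Finally, because $t$ enters only as a parameter, no analytic issues arise and $X^{\mathfrak{g}^*}$ is a well-defined $t$-dependent Hamiltonian system for every $t\in\mathbb{R}$.
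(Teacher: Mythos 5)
Your proposal is correct. Note, however, that the paper offers no proof of Proposition~\ref{prop64} at all: the statement is set up by the preceding paragraph as essentially a definition --- given $F$, one simply declares $X^{\mathfrak{g}^*}_t:=-\widehat{\Lambda}_{\mathfrak{g}^*}(\dd F_t)$, which is automatically a Hamiltonian vector field for the KKS structure --- and the substantive fact you prove, namely that $X_t$ and $X^{\mathfrak{g}^*}_t$ are $J$-related, is deferred to the first line of the proof of Proposition~\ref{prop65}, where it is asserted in one sentence from $J_*\Lambda_x=\Lambda_{\mathfrak{g}^*,J(x)}$ without the pairing computation. Your argument therefore does not diverge from the paper so much as it supplies the verification the paper leaves implicit: the computation $\langle \dd f,{\rm T}J(X_t)\rangle=\{f\circ J,F_t\circ J\}_\Lambda=\{f,F_t\}_{\Lambda_{\mathfrak{g}^*}}\circ J$ is exactly the standard way to deduce $J_*X_t=X^{\mathfrak{g}^*}_t$ from the Poisson-morphism property of $J$, and your closing remark that the restriction of $X^{\mathfrak{g}^*}_t$ to $J(\mm)$ is independent of the chosen extension $F$ (since coadjoint orbits are the symplectic leaves of the KKS bivector) is a genuine clarification absent from the paper. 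In short: same underlying mechanism, but your write-up makes explicit, and slightly strengthens, what the paper takes for granted.
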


The properties of the system $X^{\mathfrak{g}^*}$ are fundamental to understanding those of the nonlinear LH system $(\mm,\Lambda,X,J:\mm\rightarrow \mathfrak{g}^*)$.  In particular,   if $S$ is a constant of the motion of  $X^{\mathfrak{g}^*}$, then $J^*S$ is a constant of the motion of $X$ as shown next.

\begin{proposition} 
\label{prop65}
Let $(\mm,\Lambda,X,J:\mm\rightarrow \mathfrak{g}^*)$ be a nonlinear LH system and let $X^{\mathfrak{g}^*}$ be its associated Hamiltonian system on $\mathfrak{g}^*$. Then, $S\in C^\infty(\mathbb{R}\times \mathfrak{g}^*)$ is a $t$-dependent constant of the motion of $X^{\mathfrak{g}^*}$ on $J(\mm)$ if and only if $J^*S$ is a $t$-dependent constant of the motion of $X$.
\end{proposition}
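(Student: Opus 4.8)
The plan is to reduce the equivalence to a single intertwining identity relating the autonomized dynamics on $\mm$ and on $\mathfrak{g}^*$ through the pullback $J^*$. Writing $\bar X\,\I=\partial_t\I+X\I$ as in (\ref{sx}), and letting $\overline{X^{\mathfrak{g}^*}}$ denote the autonomization of $X^{\mathfrak{g}^*}$, I would prove that
\begin{equation}
\bar X(J^*S)=J^*\bigl(\overline{X^{\mathfrak{g}^*}}S\bigr),\qquad \forall S\in C^\infty(\mathbb{R}\times\mathfrak{g}^*).
\nonumber
\end{equation}
Granting this, both implications are immediate: since $J$ surjects onto its image, a function $g$ on $\mathbb{R}\times\mathfrak{g}^*$ satisfies $J^*g=0$ if and only if $g$ vanishes on $\mathbb{R}\times J(\mm)$. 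Hence $\bar X(J^*S)=0$ if and only if $\overline{X^{\mathfrak{g}^*}}S$ vanishes on $J(\mm)$, which is precisely the claim that $J^*S$ is a $t$-dependent constant of the motion of $X$ iff $S$ is one of $X^{\mathfrak{g}^*}$ on $J(\mm)$.

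To establish the identity I would split $\bar X$ into its $t$-derivative and spatial parts. The $t$-derivative commutes trivially with $J^*$ because $J$ is $t$-independent, so $\partial_t(J^*S)=J^*(\partial_t S)$. For the spatial part I use that $J$ is a Poisson map: the Ad-equivariance recorded in Section~\ref{s4} gives $J_*\Lambda=\Lambda_{\mathfrak{g}^*}|_{J(\mm)}$, equivalently $\{f\circ J,g\circ J\}_\Lambda=\{f,g\}_{\mathfrak{g}^*}\circ J$ for all $f,g\in C^\infty(\mathfrak{g}^*)$. With the sign convention $X_t=-\widehat\Lambda(\dd h_t)$ one has $X_t\,g=\{g,h_t\}_\Lambda$, and since $h_t=F_t\circ J$ and $(J^*S)_t=S_t\circ J$, the Poisson-map property yields
\begin{equation}
X_t(J^*S)_t=\{S_t\circ J,F_t\circ J\}_\Lambda=\{S_t,F_t\}_{\mathfrak{g}^*}\circ J=\bigl(X^{\mathfrak{g}^*}_t S_t\bigr)\circ J,
\nonumber
\end{equation}
which is exactly $J^*(X^{\mathfrak{g}^*}S)$ evaluated at $(t,x)$. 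Adding the two parts gives the claimed intertwining.

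The main obstacle is not the computation but making the Poisson-map step rigorous and interpreting the restriction ``on $J(\mm)$'' correctly. First, one must invoke the equality $J_*\Lambda=\Lambda_{\mathfrak{g}^*}|_{J(\mm)}$, whose validity rests on the Ad-equivariance of the momentum map furnished by Proposition~\ref{Prop:EqLieMom}; this is where the structure of $J$ as a genuine momentum map, rather than an arbitrary map into $\mathfrak{g}^*$, is essential. Second, in the converse direction the vanishing of $\bar X(J^*S)$ only controls $\overline{X^{\mathfrak{g}^*}}S$ along $J(\mm)$, and since $J(\mm)$ is generally a proper Ad$^*$-invariant subset of $\mathfrak{g}^*$, a union of coadjoint orbits and hence of symplectic leaves of the KKS bracket to which the Hamiltonian field $X^{\mathfrak{g}^*}$ is tangent, the strongest conclusion one can draw is the constancy of $S$ on $J(\mm)$. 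I would therefore emphasize that $X^{\mathfrak{g}^*}$ preserving $J(\mm)$ is exactly what makes ``constant of the motion on $J(\mm)$'' well posed.
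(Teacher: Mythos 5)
Your argument is correct and follows essentially the same route as the paper: both hinge on the fact that $J$ is a Poisson map onto its image (so that $J_*X_t=X^{\mathfrak{g}^*}_t$, which you express equivalently through $\{S_t\circ J,F_t\circ J\}_\Lambda=\{S_t,F_t\}_{\mathfrak{g}^*}\circ J$), and then pull back the autonomized equation $\partial_tS+X^{\mathfrak{g}^*}S$ along $J$. Your additional remarks on the bi-implication and on why the restriction to $J(\mm)$ is the right statement only make explicit what the paper leaves implicit in ``from which the assertion follows''.
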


\begin{proof} As $J_*\Lambda_x=\Lambda_{\mathfrak{g}^*,J(x)}$  holds for every $x\in \mm$, we have  that $J_*{X_t}=X_t^{\mathfrak{g}^*}$ for every $t\in \mathbb{R}$. Hence, 
$$
\left(\frac{\partial S}{\partial t}+X^{\mathfrak{g}^*}S\right)(t,J(x))=\frac{\partial S_t}{\partial t}(J(x))+(X^{\mathfrak{g}^*}_tS_t)(J(x))=\frac{\partial J^*S_t}{\partial t}(x)+(X_tJ^*S_t)(x), 
$$
for all $(t,x)\in \mathbb{R}\times \mm$, from which the assertion follows.
\end{proof}


\subsection{Isotropic harmonic oscillator with {\em t}-dependent frequency}
\label{s61}

Let us illustrate Propositions~\ref{prop63}--\ref{prop65} through the $n$-dimensional isotropic harmonic oscillator (HO)  with a $t$-dependent frequency. We recall that the $t$-dependent HO is defined as the Hamiltonian system on ${\rm T}^*\mathbb{R}^n$ given by
\be
h_{\rm HO}(t,\>q,\>p)=   \frac 12 \sum_{i=1}^{n} p_i^2 +\frac 12 \omega^2(t) \sum_{i=1}^{n}q_i^2  ,
\label{hamHO}
\ee
where $\>q=(q_1,\dots,q_n)$ and $\>p=(p_1,\dots,p_n)$, which yields
 the following system of  differential equations
\be
\frac{\dd q_i}{\dd t}=p_i,\qquad \frac{\dd p_i}{\dd t}=-\omega^2(t)q_i,\qquad i=1,\dots, n.
\nonumber
\ee
Hence, it is associated with the $t$-dependent vector
field
\be
X_{\rm HO}(t,\>q,
\>p)=\sum_{i=1}^n\left(p_i\frac{\partial}{\partial
q_i} -\omega^2(t)q_i \frac{\partial}{\partial
p_i}\right)
\label{xswB}
\ee
with respect to the standard Poisson bivector  $\Lambda$ given in  (\ref{bivector}).

Observe that  the $t$-dependent HO is just the particular case of the  $t$-dependent SW system in Section~\ref{s21} with all $c_i=0$, hence without any Rosochatius--Winternitz potential. Thus, $h_{\rm HO}(t,\>q,\>p) = h_3+\omega^2(t)h_1$, 
where  $  h_1,h_2,h_3$, given by (\ref{hSW}) for all $ c_i=0$, span a Lie algebra of functions   isomorphic to $\mathfrak{sl}(2,\mathbb{R})$ satisfying  the Poisson brackets (\ref{commSW}) with respect to   $\Lambda$.   In these conditions, we can define a momentum map
$$
J:(\>q,\>p)\in {\rm T}^*\mathbb{R}^n\mapsto h_1(\>q,\>p)
e^1+h_2(\>q,\>p)e^2+h_3(\>q,\>p)e^3\in \mathfrak{sl}(2,\mathbb{R})^*,
$$
where $\{ e^1,e^2,e^3\}$ is the dual basis of   $\{ e_1,e_2,e_3\}$ such that the latter satisfy opposite commutation rules to (\ref{xSW}) (and formally similar to (\ref{commSW})), namely
\be
[e_1,e_2]=-e_1,\qquad  [e_1,e_3]=- 2 e_2,\qquad [e_2,e_3]=-e_3 .
\label{SWe}
\ee
If we consider  the $t$-dependent linear function
$$
F:\bigl(t,\lambda_1e^1+\lambda_2e^2+\lambda_3e^3 \bigr)\in \mathbb{R}\times \mathfrak{sl}(2,\mathbb{R})^*\mapsto \lambda_3+\omega^2(t)\lambda_1\in\mathbb{R},
$$
 it follows that the $t$-dependent HO Hamiltonian (\ref{hamHO}) is reproduced as
 $$
 h_{\rm HO}(t,\>q,\>p)=F(t,J(\>q,\>p))=h_3(\>q,\>p) + \omega^2(t)h_1(\>q,\>p) .
 $$
 Note also that  the associated  $t$-dependent vector field $X_{\rm HO}$ (\ref{xswB}) can be  recovered consistently from  the relation (\ref{5xx}), i.e.,
 $$
 X_{\rm HO}= X_3 +  \omega^2(t) X_1,
 $$
 where $X_1$ and  $X_3 $ are given in (\ref{SWxs}) for all $c_1=\ldots=c_n=0$.  Therefore, the $t$-dependent HO  can be  considered as   the  nonlinear LH system $\bigl({\rm T}^*\mathbb{R}^n,\Lambda,X_{\rm HO},J:{\rm T}^*\mathbb{R}^n\rightarrow \mathfrak{sl}(2,\mathbb{R})^* \bigr)$ with  associated $t$-dependent function $F$.

In addition, from Proposition~\ref{prop64} we find that $F=e_3+\omega^2(t) e_1$ induces the $t$-dependent Hamiltonian system on ${\rm T}^*\mathbb{R}^n$
\be
X^{\mathfrak{sl}(2,\mathbb{R})^{\ast}}=-2e_2\frac{\partial}{\partial e_1}-e_3\frac{\partial}{\partial e_2}+\omega^2(t)\left(e_1\frac{\partial}{\partial e_2}+2e_2\frac{\partial}{\partial e_3}\right)
\nonumber
\ee
relative to the Poisson bivector (\ref{biv})
\be
\Lambda_{\mathfrak{sl}(2,\mathbb{R})^{\ast}}=-e_1\frac{\partial}{\partial e_1}\wedge \frac{\partial}{\partial e_2}-2e_2\frac{\partial}{\partial e_1}\wedge \frac{\partial}{\partial e_3}-e_3\frac{\partial }{\partial e_2}\wedge \frac{\partial}{\partial e_3}.
\nonumber
\ee
A short calculation further shows that
$$
{\rm T}_{(\>q,\>p)}J(X_{\rm  HO})_t=(X^{\mathfrak{sl}(2,\mathbb{R})^{\ast}})_t({J(\>q,\>p)}),\qquad \forall (\>q,\>p)\in {\rm T}^*\mathbb{R}^n.
$$
The above expression amounts to the fact that if $(\>q(t),\>p(t))$ is a particular solution to $X_{\rm HO}$, then $J(\>q(t),\>p(t))$ is a particular solution to  $X^{\mathfrak{sl}(2,\mathbb{R})^*}$.

   The Casimir invariant of the Lie algebra $\mathfrak{sl}(2,\mathbb{R})$ with commutators (\ref{SWe}) provides a   constant of the motion of $X^{\mathfrak{sl}(2,\mathbb{R})^{\ast}}$, explicitly 
  $$
\C= e_1e_3 -e_2^2 ,
  $$  
 and from Proposition~\ref{prop65}, it gives rise to a $t$-independent constant of the motion of $X_{\rm HO}$ (\ref{xswB}) as $\I =J^*\C$:
 $$
 \I(\>q,\>p)=h_1h_3-h_2^2=\frac 14 \sum_{1\le i<j}^n \bigl(q_i p_j-q_j p_i \bigr)^2 ,
 $$
 recovering the well-known angular momentum symmetry of  the isotropic HO system. Note that $\I$ Poisson commutes with the Hamiltonian functions $\langle  h_1,h_2,h_3\rangle$, as expected by Proposition~\ref{prop63}.
 
This construction, in turn, can be seen as an example of the nonlinear LH formalism applied to (linear) LH systems. Notwithstanding, if we restrict now to the one-dimensional case, setting $n=1$ above, the $t$-dependent   HO can be studied via a proper nonlinear LH system in another manner. Consider the   Lie algebra   of functions on ${\rm T}^*\mathbb{R}$ spanned by 
$$ 
g_1=  q , \qquad g_2=p, \qquad  g_3= qp_ , \qquad  g_0=1 ,
$$
with commutation relations
\be
\{g_3,g_1\}_\Lambda= - g_1,\qquad \{g_3,g_2\}_\Lambda=  g_2,\qquad \{g_1,g_2\}_\Lambda= g_0,\qquad \{g_0,\cdot\,\}_\Lambda=0,
\label{h4com}
\ee
again with respect to the standard Poisson bracket  $\Lambda$ on ${\rm T}^*\mathbb{R}$. Then, $\langle g_0,g_1 ,g_2,g_3\rangle$ is  the oscillator Lie algebra $\mathfrak{h}_4 $, with $g_3$  playing the role of the ``counting" function, while $g_1$, $g_2$ behave as ``ladder" functions. 
Their associated Hamiltonian vector fields are obtained through the canonical symplectic form on ${\rm T}^*\mathbb{R}$, via (\ref{conditions}), giving   $V=\langle X_0=0, X_1=-\partial_p, X_2=\partial_q,X_3=q \partial_q- p \partial_p\rangle \simeq \mathfrak{iso}(1,1)$, i.e., the (1+1)-dimensional Poincar\'e algebra.
  
Thus $\mathfrak{h}_4 $ contains as a Lie subalgebra the Heisenberg Lie algebra $\mathfrak{h}_3$ considered in Section~\ref{s41} and applied to the second Painlev\'e invariant  in Section~\ref{s51}. As a consequence, we have an alternative momentum map 
 $$
 J':(q,p)\in{\rm T}^*\mathbb{R}\mapsto g_0(q,p)e^0+g_1(q,p)e^1+g_2(q,p)e^2+g_3(q,p)e^3\in \mathfrak{h}_4^* ,
 $$
 where  $\{ e^0,e^1,e^2,e^3\}$ is the dual basis to $\{ e_0,e_1,e_2,e_3\}$ in $\mathfrak{h}_4$, the latter satisfying  commutation rules formally similar to (\ref{h4com}).  Then, by introducing the function 
 $$
  F': \bigl(t, \lambda_0 e^0+\lambda_1 e^1+\lambda_2 e^2+\lambda_3 e^3 \bigr)\in \mathbb{R}\times \mathfrak{h}_4^*\mapsto \frac 12 \lambda_2^2 + \frac 12 \omega^2(t)\lambda_1^2 \in\mathbb{R}, 
  $$
  we can express the $t$-dependent HO Hamiltonian (\ref{hamHO}) for $n=1$ in a second form
\be
h_{\rm HO}(t,q,p)=F'(t,J'(q,p))=\frac 12 g_2^2(q,p) +\frac 12 \omega^2(t)g_1^2(q,p) =  \frac 12 p^2+\frac 12 \omega^2(t) q^2,
\label{h4HO}
\ee
and the corresponding  $t$-dependent vector field $X_{\rm HO}$ (\ref{xswB}) ($n=1$) can be  retrieved   applying (\ref{5xx}), 
 $$
 X_{\rm HO}=    \omega^2(t) g_1 X_1 + g_2 X_2 =- \omega^2(t) q \frac{\partial }{\partial p} + p \frac{\partial }{\partial q}.
 $$
We conclude that the HO  can also be considered as a nonlinear LH system $\bigr({\rm T}^*\mathbb{R},\Lambda,X_{\rm HO},J':{\rm T}^*\mathbb{R}\rightarrow \mathfrak{h}_4^*\bigl)$ with associated $t$-dependent function $F'$.  
 
Moreover, the system induced in $\mathfrak{h}_4^*$ by the function   $F' =\frac 12 e_2^2  +\frac 12 \omega^2(t)e_1^2$ and the   Poisson bivector (\ref{biv})
$$
\Lambda_{\mathfrak{h}_4^*}=e_0\frac{\partial}{\partial e_1}\wedge\frac{\partial}{\partial e_2}+e_1\frac{\partial}{\partial e_1}\wedge \frac{\partial}{\partial e_3}- e_2\frac{\partial}{\partial e_2}\wedge\frac{\partial}{\partial e_3}
$$
reads
$$
X^{\mathfrak{h}_4^*}=e_0e_2\frac{\partial}{\partial e_1}+e_2^2\frac{\partial}{\partial e_3}-\omega^2(t)\left( e_0e_1\frac{\partial}{\partial e_2}+e_1^2\frac{\partial}{\partial e_3}\right).
$$
Meanwhile, the induced system on ${\rm T}^*\mathbb{R}$ is  (\ref{h4HO}). Concerning the constants of the motion derived from this second construction, we find that the non-trivial Casimir invariant of the Lie algebra $\mathfrak{h}_4$ in the basis $\{ e_0,e_1,e_2,e_3\}$ (with formal commutation rules similar to (\ref{h4com})) is given by  
 $$
  \C'=e_0e_3-e_1 e_2 ,
  $$ 
 yielding a constant of the motion of  $X^{\mathfrak{h}_4^{\ast}}$. 
   
In conclusion, the one-dimensional HO with $t$-dependent frequency admits two different approaches as a nonlinear LH system, based in the Lie algebras  $\mathfrak{sl}(2,\mathbb{R})$  and  $\mathfrak{h}_4$.


\section{The Poisson coalgebra formalism  applied  to integrability}
\label{s7}

The Poisson coalgebra method was formerly introduced in~\cite{BBR06} as a procedure for the construction of constants of the motion for $t$-independent Hamiltonian systems relative to Poisson manifolds, and was also shown to be a powerful tool to obtain completely integrable systems in the Liouville sense~\cite{Perelomov}. It is worth recalling that this formalism can be applied to classical and quantum integrable systems, as well as to their quantum deformations~\cite{BBR06,Chains1999}.
  Later on, it was proven that  this approach provides, in fact,  {\em quasi-maximal superintegrability}~\cite{BHMR04}, meaning that for an $n$-dimensional classical Hamiltonian, not only  $(n- 1)$ functionally independent constants of the motion in involution can be found (besides the Hamiltonian itself), but  $(2n- 3)$ independent ones, so that there is only one remaining constant of the motion needed to reach {\em maximal superintegrability}. The latter does not come from the coalgebra approach but from ``hidden" symmetries such as the Laplace--Runge--Lenz vector. For applications of this formalism  to superintegrable systems we refer to~\cite{BBHMR09,BH07,BBH09,BB10} and references therein.
  
    In the context of LH systems, the Poisson coalgebra method was   adapted to obtain constants of the motion and superposition rules in~\cite{BCHLS13}, thus generalizing the formalism  to deal with Lie systems admitting a Vessiot--Guldberg Lie algebra of Hamilton vector fields relative to more general geometric structures.  In addition, the coalgebra superintegrability for LH systems was recently established in~\cite{BCFHL21}.
  In this section, we extend the Poisson coalgebra formalism to the very general realm of nonlinear LH systems. 
    
 Let $\{ v_1,\dots,v_r\} $ be a basis spanning a Lie--Poisson algebra  $\mathfrak{g}$ satisfying the commutation relations
   \be
\bigl\{v_\alpha,v_\beta \bigr\}_{\mathfrak{g}^{*}} = \sum_{{\gamma=1}}^{r} c_{\alpha\beta}^{\gamma}v_{\gamma},\qquad  \alpha,\beta=1,\dots, r,
\label{71}
\ee
   and let $S(\mathfrak{g})$ be its associated symmetric algebra, i.e.,~the space of commutative polynomials in the elements of the chosen basis. As in the previous sections, the isomorphism $\mathfrak{g}\simeq \mathfrak{g}^{**}$ allows us to consider $  v_1,\ldots,v_r $ as functions on $\mathfrak{g}^*$ and then $S(\mathfrak{g})$ can be understood as the subspace of polynomial functions of $C^\infty(\mathfrak{g}^*)$. Since $C^\infty(\mathfrak{g}^*)$ admits the KKS  bracket $\{\cdot,\cdot\}_{\mathfrak{g}^*}$, which for polynomial functions on $  v_1,\ldots,v_r $  is also polynomial, the Poisson bracket can be restricted to $S(\mathfrak{g})$ endowing it with a natural Poisson algebra structure with Poisson bracket $\{\cdot,\cdot\}_{
S(\mathfrak{g})}$. Moreover,  $S(\mathfrak{g})$ admits a  trivial (non-deformed) Poisson coalgebra structure $\bigl(S(\mathfrak{g}),\Delta\bigr)$ given by the coproduct  map   ${\Delta} : S(\mathfrak{g})\rightarrow
S(\mathfrak{g}) \otimes S(\mathfrak{g})$    defined by
 \begin{equation}
{\Delta}(v)=v\otimes 1+1\otimes v,\qquad \forall  v\in\mathfrak {g}\subset S(\mathfrak{g}),
 \label{72}
\end{equation}
which is a Poisson algebra homomorphism 
$$
 \Delta\bigl( \{u,v\}_{S(\mathfrak{g})} \bigr) = \bigl\{  \Delta(u),\Delta(v)
\bigr\}_{S(\mathfrak{g}) \otimes S(\mathfrak{g}) },\qquad \forall  u,v\in\mathfrak {g}\subset S(\mathfrak{g}),
$$
such that
$$
 \bigl\{ v_\alpha\otimes v_\beta,  v_\gamma\otimes v_\delta  \bigr\}_{S(\mathfrak{g}) \otimes S(\mathfrak{g}) }=
\bigl\{v_\alpha, v_\gamma \bigr\}_{S(\mathfrak{g})}\otimes v_\beta  v_\delta +
 v_\alpha  v_\gamma \otimes \bigl\{v_\beta, v_\delta \bigr\}_{S(\mathfrak{g})}  .
$$
As for any Hopf algebra,  the  coproduct $\Delta\equiv \Delta^{(2)}$ must fulfil  the coassociativity condition (see \cite{CP94,Abe} for details)
 \begin{equation}
({\rm Id} \otimes \Delta) \circ \Delta=(\Delta \otimes {\rm Id}) \circ \Delta ,
\label{73}
 \end{equation}
 which for the non-deformed case with (\ref{72})  yields  the third-order coproduct   
 $\Delta^{(3)}: S(\mathfrak{g})\rightarrow
S(\mathfrak{g}) \otimes S(\mathfrak{g})\otimes S(\mathfrak{g})\equiv   S^{(3)}(\mathfrak{g})$ uniquely defined by the condition
$$
{\Delta}^{(3)}(v)=v\otimes 1\otimes 1 +1\otimes v\otimes 1+1\otimes 1\otimes v,\qquad \forall  v\in\mathfrak {g}\subset S(\mathfrak{g}).
\label{3co}
$$
The coassociativity relation (\ref{73}) enables us to obtain recursively a $k^{\rm th}$-order coproduct map,   
$$
\Delta ^{(k)}: \ S(\mathfrak{g}) \rightarrow {\stackrel{ k\ {\rm times} } {\overbrace{   S(\mathfrak{g}) \otimes\cdots\otimes   S(\mathfrak{g})  }}  }\equiv  S^{(k)}(\mathfrak{g}),\qquad k\ge 2 ,
$$
in two ways:
\be
\begin{split}
& \bullet  \mbox{ Left-coproduct:} \qquad {\Delta}_{\rm L}^{(k)}:= \bigr(\,{\stackrel{(k-2)\ {\rm times}}{\overbrace{\, {\rm
Id}\otimes\cdots\otimes{\rm Id}\, }}}    \otimes {\Delta^{(2)}}  \bigr)\circ \Delta_{\rm L}^{(k-1)} .\\[2pt]
& \bullet  \mbox{ Right-coproduct:}   \quad\  {\Delta}_{\rm R}^{(k)}:= \bigr(  {\Delta^{(2)}} \otimes  {\stackrel{(k-2)\ {\rm times}}{\overbrace{\,{\rm
Id}\otimes\cdots\otimes{\rm Id} \, }}}     \,\bigr) \circ \Delta_{\rm R}^{(k-1)} .
 \end{split}
 \label{74}
\ee
For $k=2$, they are just the coproduct $\Delta\equiv \Delta^{(2)}$ in (\ref{72}).
 It is straightforward to verify that both $\Delta_{\rm L}^{(k)}$ and $\Delta_{\rm R}^{(k)}$ are also    Poisson algebra  morphisms. Moreover, both  $k^{\rm th}$-order coproducts  can be embedded into a bigger $(m+1)^{\rm th}$-order  tensor product space in the form~\cite{BHMR04,BBHMR09}
\be
\begin{split}
&  {\Delta}_{\rm L}^{ (k) }: \quad   S^{(k)}(\mathfrak{g}) \otimes\! \!
 {\stackrel{ {(m+1-k)} \ {\rm times} } {\overbrace{1\otimes \cdots\otimes 1}}} \quad \sim \quad {\rm space}\ 1\otimes  2\otimes 
\cdots \otimes k ,\\[2pt]
&    {\Delta}_{\rm R}^{(k)}: \quad   {\stackrel{ {(m+1-k)} \ {\rm times} } {\overbrace{1\otimes \cdots\otimes 1}}}  \! \!  \otimes S^{(k)}(\mathfrak{g})  \quad \sim \quad {\rm space}\ (m-k+2)\otimes(m-k+3)\otimes   \cdots\otimes (m+1), 
 \end{split}
\nonumber
\ee
  for $k=2,\dots,m+1$, such that  $ {\Delta}_{\rm L}^{ (m+1) }\equiv  {\Delta}_{\rm R}^{ (m+1) }$.
 
 Now,  let $X$ be a nonlinear LH system related to a LH algebra ${\cal{H}}_\Lambda$ spanned by the basis given by the linearly independent Hamiltonian functions  $ \{h_1,\dots,h_r\}$ verifying the same commutation relations (\ref{71}) with respect to a Poisson bivector $\Lambda$ on an $n$-dimensional  manifold $\mm$ with coordinates $x = \{ x^1 ,\dots,  x^n \}$.  Hence,  we define  the Lie algebra morphism  
\be
\psi:\mathfrak{g} \rightarrow {\cal{H}}_\Lambda\subset  C^\infty(\mmm),\qquad h_\alpha:=\psi(v_\alpha),\qquad \alpha=1,\ldots,r ,
\nonumber
\ee 
and  introduce a family of Poisson algebra morphisms  
  \be
\begin{split}
D:&\  P\in S(\mathfrak{g}) \mapsto J^*P\in C^\infty(\mmm) ,\\[2pt]
 D^{(k)}:&\  P\in S^{(k)}(\mathfrak{g}) \mapsto
{\stackrel{ k\ {\rm times} }  {\overbrace{    (J\otimes  {\cdots} \otimes J)^* }   }  } P\in  
 {\stackrel{ k\ {\rm times} }  {\overbrace{   C^\infty(\mmm) \otimes {\cdots}\otimes   C^\infty(\mmm)   }   }  }\subset  C^\infty \bigl(\mmm^{k} \bigr),
 \end{split}
 \label{76}
\ee
where   $k\ge 2$.   By taking into account that the coproduct is primitive, together with the expressions (\ref{74}) and (\ref{76}), we define the following Hamiltonian functions on $  C^\infty \bigl(\mmm^{m+1} \bigr),$
  \be
\begin{split}
h^{(1)}_\alpha&:=D( v_\alpha)= h_\alpha \bigl( x_{(1)} \bigr)   , \qquad   \alpha=1,\dots , r ,\\[2pt]
 h^{(k)}_{{\rm L},\alpha}&:=  
 D^{(k)} \!\left( \Delta_{\rm L}^{(k)}(v_\alpha) \right)=\sum_{\ell =1}^k h_\alpha\bigl(x_{(\ell)}\bigr) ,\\[2pt]
 h^{(k)}_{{\rm R},\alpha}&:=    D^{(k)} \!\left( \Delta_{\rm R}^{(k)}(v_\alpha) \right)=\sum_{\ell =m-k+2}^{m+1} h_\alpha\bigl(x_{(\ell)}\bigr) ,   \end{split}
 \label{77}
\ee
where  $x_{(\ell)}=\bigl\{ x^1_{(\ell)},\dots,  x^n_{(\ell)}\bigr\}$ denotes the coordinates in the  $\ell$-th copy  manifold $\mm$ within $ {\mmm}^{m+1}$ and 
\be
h^{(m+1)}_{{\rm L},\alpha}= h^{(m+1)}_{{\rm R},\alpha}\equiv 
 h^{(m+1)}_{\alpha},\qquad \alpha = 1,\dots, r.
\nonumber
\ee

By construction, for a given value of $k$, each set  
$h^{(k)}_{{\rm L},\alpha}$ and $h^{(k)}_{{\rm R},\alpha}$ close on  the commutation rules (\ref{71}) with respect to the Poisson bivector $\Lambda^{m+1}$ on $ {\mmm}^{m+1}$ provided by $\Lambda$ in ${\cal{H}}_\Lambda$, so that  a Poisson bracket induced by $\Lambda^{m+1}$  is given by
\be
 \bigl\{\cdot\,,\cdot \bigr\}_{\Lambda^{m+1}}:C^\infty \bigl(\mmm^{m+1} \bigr)\times C^\infty \bigl(\mmm^{m+1} \bigr)\rightarrow C^\infty \bigl(\mmm^{m+1} \bigr).
\label{78}
\ee

We already have all the ingredients to establish the main result of this section.

\begin{theorem} 
\label{teor7}
Let $(\mm,\Lambda,X,J:\mm\rightarrow \mathfrak{g}^*)$ be a nonlinear LH system and 
let $\C$ be  a Casimir invariant of the Poisson algebra $S(\mathfrak{g})$, i.e., $\C=\C(v_1,\dots,v_r)$ is a polynomial function and $\{\C,v_\alpha\}=0$   for $\alpha=1,\ldots,r$, with respect to the commutation relations (\ref{71}).  Consider  the functions defined, through (\ref{74})--(\ref{77}), by
  \be
\begin{split}
\I_{\rm L}^{(k)}\!\left( h^{(k)}_{{\rm L},1},\dots, h^{(k)}_{{\rm L},r} \right)  &:=  D^{(k)}\!\left[\Delta_{\rm L}^{(k)} \bigl({\C( v_1,\dots,v_r)} \bigr) \right]\in C^\infty \bigl( {\mmm}^{k} \bigr)\subset C^\infty \bigl({\mmm}^{m+1}  \bigr), \\[4pt]
\I_{\rm R}^{(k)}\!\left( h^{(k)}_{{\rm R},1},\dots, h^{(k)}_{{\rm R},r} \right)  &:=  D^{(k)}\!\left[\Delta_{\rm R}^{(k)} \bigl({\C( v_1,\dots,v_r)} \bigr) \right]\in C^\infty \bigl( {\mmm}^{k} \bigr)\subset C^\infty \bigl({\mmm}^{m+1}  \bigr),
  \end{split}
 \label{79}
\ee
with $k=2,\dots,m+1$ and $\I^{(m+1)}\equiv \I_{\rm L}^{(m+1)}=\I_{\rm R}^{(m+1)}$. Then, $D(\C)=J^*\C$ is a constant of the motion for $X$ and if  all $\I_{\rm L}^{(k)}$ and $\I_{\rm R}^{(k)}$  are non-constant, the following is satisfied: \\
(i) Each set $\I_{\rm L}^{(k)}$ and $\I_{\rm R}^{(k)}$ is formed by $m$  $t$-independent ``left-" and ``right-constants" of the motion in $C^\infty \bigl(\mmm^{m+1} \bigr)$, which are $m$ functionally independent functions in involution with respect to the Poisson bracket (\ref{78}), thus   fulfilling the following vanishing Poisson brackets:
\be
\begin{split}
\left\{ \I_{\rm L}^{(k)}, \I_{\rm L}^{(l)}\right\}_{\Lambda^{m+1}}&= \left\{ \I_{\rm R}^{(k)}, \I_{\rm R}^{(l)}\right\}_{\Lambda^{m+1}}=0,\qquad k,l=2,\dots, m+1,\\[4pt]
 \left\{ \I_{\rm L}^{(k)}, h^{(m+1)}_{\alpha} \right\}_{\Lambda^{m+1}}&= \left\{ \I_{\rm R}^{(k)},  h^{(m+1)}_{\alpha} \right\}_{\Lambda^{m+1}}=0,\qquad  \alpha=1,\dots, r,\\[4pt]
  h^{(m+1)}_{\alpha}&=  h^{(m+1)}_{\alpha}\bigl(x_{(1)},\dots, x_{(m+1)} \bigr), 
\end{split}
 \label{710}
\ee
where  $x_{(\ell)}=\bigl\{ x^1_{(\ell)},\dots,  x^n_{(\ell)}\bigr\}$ for $\ell=1,\dots, m+1$.
\\
(ii) Both sets,  $\I_{\rm L}^{(k)}$ and $\I_{\rm R}^{(k)}$, together yield $2m-1$  
functionally independent constants of  the motion for  the  diagonal prolongation
$\widetilde X$ to the  manifold $\mm^{m+1}$, which can be understood as the nonlinear LH system 
\be
\bigl(\mm^{m+1},\Lambda^{m+1},\widetilde X,   J\otimes  {\cdots} \otimes J :\mm^{m+1}\rightarrow \mathfrak{g}^* \bigr) ,
\nonumber
\ee
giving rise to a nonlinear $t$-dependent Hamiltonian via a smooth function $F\in C^\infty(\mathbb{R}\times \mathfrak{g}^*)$ as
\be
h^{(m+1)}\bigl(t,x_{(1)},\dots, x_{(m+1)} \bigr)= F\left(t,h^{(m+1)}_{1}\bigl(x_{(1)},\dots, x_{(m+1)} \bigr),\dots, h^{(m+1)}_{r}\bigl(x_{(1)},\dots, x_{(m+1)} \bigr)\right).
 \label{712}
\ee
\end{theorem}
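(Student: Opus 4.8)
The plan is to reduce the entire statement to three facts already available: that $D=J^{*}$ and the tensor morphisms $D^{(k)}$ of (\ref{76}) intertwine $\{\cdot,\cdot\}_{\Lambda^{m+1}}$ with the bracket on $S^{(m+1)}(\mathfrak{g})$ (because $J$ is a Poisson map, $J_{*}\Lambda=\Lambda_{\mathfrak{g}^{*}}|_{J(\mm)}$); that the coproducts $\Delta_{\rm L}^{(k)},\Delta_{\rm R}^{(k)}$ of (\ref{74}) are Poisson algebra morphisms; and that, by Proposition~\ref{prop63}, a $t$-independent function is a constant of the motion for the nonlinear LH system iff it Poisson commutes with every $h_{\alpha}$, hence---by the derivation property of the bracket---with every Hamiltonian $h_{t}=F(t,h_{1},\dots,h_{r})$.

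I would first note that $D(\C)=J^{*}\C$ is a constant of the motion: from $\{\C,v_{\alpha}\}_{S(\mathfrak{g})}=0$ and the morphism property of $D$ one gets $\{J^{*}\C,h_{\alpha}\}_{\Lambda}=J^{*}\{\C,v_{\alpha}\}_{S(\mathfrak{g})}=0$, and the Leibniz rule propagates this to $h_{t}$, so Proposition~\ref{prop63} applies. The heart of the argument is then the coalgebra-symmetry identity. Since $\Delta_{\rm L}^{(k)}$ is an algebra morphism and $\C$ polynomial, one has $\I_{\rm L}^{(k)}=\C\bigl(h_{{\rm L},1}^{(k)},\dots,h_{{\rm L},r}^{(k)}\bigr)$ with $h_{{\rm L},\alpha}^{(k)}=\sum_{\ell=1}^{k}h_{\alpha}(x_{(\ell)})$, and similarly for the right family. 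I would prove
\[
\bigl\{\I_{\rm L}^{(k)},h_{\alpha}^{(m+1)}\bigr\}_{\Lambda^{m+1}}=0,\qquad \alpha=1,\dots,r,
\]
by splitting $h_{\alpha}^{(m+1)}=h_{{\rm L},\alpha}^{(k)}+\sum_{\ell=k+1}^{m+1}h_{\alpha}(x_{(\ell)})$: the second summand is supported on the last $m+1-k$ copies of $\mm$, disjoint from those carrying $\I_{\rm L}^{(k)}$, hence Poisson-orthogonal to it, while the first gives $\{\I_{\rm L}^{(k)},h_{{\rm L},\alpha}^{(k)}\}=D^{(k)}\Delta_{\rm L}^{(k)}(\{\C,v_{\alpha}\}_{S(\mathfrak{g})})=0$; the right family is treated by the mirror decomposition. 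Feeding this into the prolonged Hamiltonian (\ref{712}) via the Leibniz rule yields $\{\I_{\rm L}^{(k)},h^{(m+1)}_{t}\}=\{\I_{\rm R}^{(k)},h^{(m+1)}_{t}\}=0$, so all the $\I_{\rm L}^{(k)},\I_{\rm R}^{(k)}$ are $t$-independent constants of the motion of $\widetilde X$, and the vanishing brackets (\ref{710}) follow at once.

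For (i) it remains to obtain the involution within each family. For $k<l$ the function $\I_{\rm L}^{(l)}$ is a function of the generators $h_{{\rm L},\alpha}^{(l)}$, and the identity above (with $l$ replacing $m+1$) gives $\{\I_{\rm L}^{(k)},h_{{\rm L},\alpha}^{(l)}\}=0$; the Leibniz rule then yields $\{\I_{\rm L}^{(k)},\I_{\rm L}^{(l)}\}_{\Lambda^{m+1}}=0$, and symmetrically for the right family. Functional independence of the $m$ members of each family follows from their staircase structure, $\I_{\rm L}^{(k)}$ being the first function to involve the coordinates $x_{(k)}$ (and $\I_{\rm R}^{(k)}$ the first to involve $x_{(m+2-k)}$); since $\I_{\rm L}^{(m+1)}=\I_{\rm R}^{(m+1)}$, the two families overlap only at the top, so together they furnish exactly $2m-1$ functionally independent constants of the motion, establishing (ii).

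I expect the main obstacle to be twofold and mostly organizational. The heavier technical point is the joint functional independence of the $2m-1$ functions, which is the delicate genericity statement underlying the quasi-maximal superintegrability; I would handle it through the triangular dependence on the blocks $x_{(\ell)}$ (as in the classical Poisson-coalgebra references) under the standing hypothesis that all the $\I_{\rm L}^{(k)},\I_{\rm R}^{(k)}$ are non-constant. The subtler conceptual point is that, for a genuinely nonlinear $F$, the object $\widetilde X$ in (\ref{712}) must be read as the Hamiltonian vector field of $F(t,h_{1}^{(m+1)},\dots,h_{r}^{(m+1)})$, i.e. the momentum-map (coalgebra) prolongation obtained by prolonging $J$ to $J\otimes\cdots\otimes J$, which coincides with the naive diagonal prolongation of $X$ only when $F$ is linear; with this reading the passage to the nonlinear, $t$-dependent regime costs nothing beyond Proposition~\ref{prop63} and the derivation property of the Poisson bracket.
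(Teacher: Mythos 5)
Your proposal is correct, and it rests on the same underlying mechanism as the paper's argument (the Poisson-morphism property of $D^{(k)}\circ\Delta^{(k)}_{\rm L,R}$, the Casimir condition, and the reduction of ``constant of the motion'' to commutation with the generators $h_\alpha$). The difference is one of completeness rather than of route: the paper's proof of Theorem~\ref{teor7} is essentially a two-step sketch --- it cites \cite{BBR06,BHMR04,BBHMR09} for the $t$-independent case and \cite{BCHLS13,LS20,BCFHL21} for linear LH systems, and for the genuinely nonlinear case it invokes Proposition~\ref{prop65} to transport the Casimir from $\mathfrak{g}^*$ to $\mm$ and then asserts that the tensor-product extension $(J\otimes\cdots\otimes J)^*\C$ yields the $\I^{(k)}_{\rm L,R}$ --- whereas you prove the involution relations (\ref{710}) directly, via the decomposition $h^{(l)}_{{\rm L},\alpha}=h^{(k)}_{{\rm L},\alpha}+\sum_{\ell=k+1}^{l}h_\alpha(x_{(\ell)})$, the Poisson-orthogonality of functions supported on disjoint copies of $\mm$, and the Leibniz rule, and you use Proposition~\ref{prop63} rather than Proposition~\ref{prop65} for the base step. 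Both buys are legitimate: the paper's version is shorter but outsources all of (i) and most of (ii) to the literature; yours is self-contained and, importantly, makes explicit the one conceptual subtlety that the paper only states \emph{after} its proof, namely that for nonlinear $F$ the field $\widetilde X$ in (\ref{712}) is the Hamiltonian vector field of $F(t,h^{(m+1)}_1,\dots,h^{(m+1)}_r)$ and not the diagonal prolongation of $X$ in the sense of Definition~\ref{def23}. The only point where you are no more rigorous than the paper is the functional independence of the $2m-1$ invariants, which you (like the cited references) justify by the triangular dependence on the blocks $x_{(\ell)}$ under the non-constancy hypothesis; that is acceptable here, since the paper itself does not prove it either.
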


\begin{proof} 
For the $t$-independent case with $h^{(m+1)}\bigl(x_{(1)},\dots, x_{(m+1)} \bigr)$, the assertions have already been proven through the coalgebra formalism of superintegrable Hamiltonians in~\cite{BBR06,BHMR04,BBHMR09}, while the corresponding proofs for   linear LH systems, $h^{(m+1)} =\sum_{\alpha=1}^rb_\alpha(t)h^{(m+1)}_\alpha$, can be found in~\cite{BCHLS13,LS20,BCFHL21}. 
For the proper nonlinear situation, we apply Proposition~\ref{prop65}.   Since $\C\in C^\infty(\mathfrak{g}^*)$ is a Casimir invariant of $S(\mathfrak{g})$, then $D(\C)=J^*\C\in C^\infty(\mmm)$  is a $t$-independent constant of the motion of $X$. This can be  extended in a consistent way  by the   $k^{\rm th}$-order  tensor product  $ (J\otimes  {\cdots} \otimes J)^*\C$, leading to the constants of the motion $\I_{\rm L}^{(k)}$ and $\I_{\rm R}^{(k)}$ (satisfying  Proposition~\ref{prop63}) by means of (\ref{76}) in  $ C^\infty \bigl(\mmm^{k} \bigr)\subset  C^\infty \bigl(\mmm^{m+1} \bigr)$ for $\widetilde X$; in fact, they can be considered naturally as functions on $\mm^{m+1}$ for every $k\le   m+1$.
\end{proof}

 In addition, note that the  left- and right-constants of the motion $\I_{\rm L}^{(k)}$ and $\I_{\rm R}^{(k)}$
   allow us to obtain some other 
 constants of the motion as
\begin{equation} 
\I_{ ij} ^{(k)}=\mS_{ij} \bigl( \I^{(k)}   \bigr) , \qquad 1\le  i<j\le  m+1,\qquad k=2,\ldots,m+1,
\label{713}
\end{equation}
where $\mS_{ij}$ is the permutation of variables $x_{(i)}\leftrightarrow
x_{(j)}$ of the coordinates $\bigl(x_{(1)},\ldots,x_{(m+1)} \bigr)$ in $\mm^{m+1}$. As $\widetilde X$ is invariant under such permutations, all  $\I_{ij}^{(k)}$  are  also $t$-independent  constants of the motion  for the diagonal prolongation $\widetilde X$ to $\mm^{m+1}$. 

 It is worth observing that, given a nonlinear Hamiltonian   $h^{(m+1)}$ (\ref{712}) coming from an initial $h^{(1)}\equiv h(t,x)$,  the $(2m-1)$ functionally independent constants of the motion (\ref{710}), although they fulfill the condition (\ref{Eq:Con}),  do not lead to a proper superposition principle  (see Definition~\ref{def21})
 $$
 x_{(m+1)}=\Phii\bigl(  x_{(1)}, \ldots,x_{(m)} ; k_1,\ldots,k_n \bigr),
 $$
because $h^{(m+1)}$ is no longer a sum of $(m+1)$ copies of $h(t,x)$, as happens for the usual (linear) LH systems. In this respect, $h^{(m+1)}$ must in fact be considered as a nonlinear Hamiltonian by itself, whereas  
 its corresponding constants of the motion  should be regarded as  integrability conditions among the 
variables within $h^{(m+1)}$.


\section{Nonlinear Lie--Hamilton systems with   Poisson \texorpdfstring{$\mathfrak{sl}(2,\mathbb{R})$}{sl(2,R)}-coalgebra symmetry}
\label{s8}

The formalism of nonlinear LH systems provides a large number of applications, due to the freedom in the choice of a Poisson coalgebra $(\mathfrak{g},\Delta )$, as well as of functions $F\in C^\infty(\mathbb{R}\times \mathfrak{g}^*)$. In order to illustrate Theorem~\ref{teor7}, and as motivation for applications that will be developed in the next section, we consider the Poisson $\mathfrak{sl}(2,\mathbb{R})$-coalgebra spanned by the basis $\{ v_-,v_+,v_3\}$ with Poisson brackets
\be
\bigl\{v_3,v_+ \bigr\} = 2 v_+,\qquad \bigl\{v_3,v_- \bigr\} =- 2 v_-, \qquad\bigl\{v_-,v_+ \bigr\} = 4 v_3,
\label{w1}
\ee
Casimir invariant
\be
\C =v_-v_+-v_3^2,
\label{w2}
\ee
and  primitive coproduct given by (\ref{72}).   Hereafter, we will assume the notation in Section~\ref{s21} by taking canonical position and momenta coordinates, $\>q:=(q_1,\dots,q_n)$  and  $\>p:=(p_1,\dots,p_n)$, defined on the manifold ${\rm T}^*\mathbb{R}^n_0:=\{(\>q,\>p)\in {\rm T}^*\mathbb{R}^n:   \>\prod_{i=1}^n|q_i|^{|c_i|}\neq 0\}$, and the Poisson bivector (\ref{bivector}) inducing the canonical Poisson bracket,  given by
\be
\{f,g\}_\Lambda=\sum_{i=1}^n\left(\frac{\partial f}{\partial q_i}
\frac{\partial g}{\partial p_i}
-\frac{\partial g}{\partial q_i} 
\frac{\partial f}{\partial p_i}\right),\qquad \forall f,g\in C^\infty\bigl({\rm T}^*\mathbb{R}^n_0 \bigr).
\label{w3}
\ee
Then, by using (\ref{77}) with $(m+1)\equiv n$, we construct the following Hamiltonian functions on $C^\infty\bigl({\rm T}^*\mathbb{R}^n_0 \bigr)$:
  \be
\begin{split}
h^{(1)}_-&:=D( v_-)= q_1^2  , \qquad   h^{(1)}_+:=D( v_+)= p_1^2 +\frac{c_1}{q_1^2} , \qquad h^{(1)}_3:=D( v_3)= q_1p_1  , \qquad \\ 
 h^{(k)}_{{\rm L},-}&:=   \sum_{i =1}^k q_i^2  ,\qquad h^{(k)}_{{\rm L},+}:=   \sum_{i =1}^k\left( p_i^2+\frac{c_i}{q_i^2} \right),\qquad h^{(k)}_{{\rm L},3}:=   \sum_{i =1}^k q_i p_i ,\\[2pt]
  h^{(k)}_{{\rm R},-}&:= \sum_{i =n-k+1}^{n}  q_i^2  ,\qquad h^{(k)}_{{\rm R},+}:=   \sum_{i =n-k+1}^{n}  \left(  p_i^2+\frac{c_i}{q_i^2} \right),\qquad h^{(k)}_{{\rm R},3}:=    \sum_{i =n-k+1}^{n}  q_i p_i ,
 \end{split}
 \label{w4}
\ee
with $k=2,\dots,n$, and where $c_i$ are arbitrary real constants. As a shorthand notation, let us denote
\be
\>q^2:=  \sum_{i =1}^n q_i^2  ,\qquad \>p^2:=  \sum_{i =1}^n p_i^2  ,\qquad \tilde{ \>p}^2:=   \>p^2+\sum_{i =1}^n   \frac{c_i}{q_i^2},\qquad \>q\scal \>p:=  \sum_{i =1}^n q_i p_i ,
 \label{w5}
\ee
such that the functions (\ref{w4}) for $k=n$ are simply written as
\be
h^{(n)}_{-}= \>q^2,\qquad h^{(n)}_{+}=  \tilde{ \>p}^2  , \qquad h^{(n)}_{3}=  \>q\scal  \>p ,
 \label{w6}
\ee
fulfilling the same Poisson brackets (\ref{w1}) with respect to (\ref{w3}). Then, the most general nonlinear LH system (\ref{712})  with such  Poisson $\mathfrak{sl}(2,\mathbb{R})$-coalgebra symmetry turns out to be
\be
h^{(n)}\bigl(t,  \>q,  \>p \bigr)= F\!\left(t,h^{(n)}_{-} ,h^{(n)}_{+},h^{(n)}_{3}\right)= F\!\left(t, \>q^2 ,\tilde{ \>p}^2 , \>q\scal  \>p\right), \qquad \forall ( \>q, \>p)\in {\rm T}^*\mathbb{R}^n_0,
 \label{w7}
\ee
for any  smooth function $F\in C^\infty(\mathbb{R}\times \mathfrak{sl}(2,\mathbb{R})^*)$.
 
 The fundamental vector fields associated with the Hamiltonian functions (\ref{w6}) follow from the relation (\ref{conditions}), namely 
 \begin{equation}
X_-= -2\sum_{i=1}^n q_i\frac{\partial}{\partial p_i},\qquad 
X_+=2\sum_{i=1}^n\left(p_i\frac{\partial}{\partial
q_i}+\frac{c_i}{q_i^3}\frac{\partial}{\partial p_i}\right),\qquad X_3= \sum_{i=1}^n\left(q_i\frac{\partial}{\partial
q_i}-p_i\frac{\partial}{\partial p_i}\right),
\label{w8}
\end{equation}
which satisfy the following commutators
\be
[X_3,X_+]=-2X_+,\qquad [X_3,X_-]= 2X_-,\qquad [X_-,X_+]=-4X_3.
\nonumber
\ee
 Therefore, the $t$-dependent   vector field provided by (\ref{w7}) is obtained by applying 
  (\ref{5xx}), leading to
 \be
X(t,\>q,\>p)=  \frac{\partial F}{\partial h_-^{(n)} } \, X_{-}+\frac{\partial F}{\partial h_+^{(n)} } \, X_{+}
+\frac{\partial F}{\partial h_3^{(n)} } \, X_{3} ,\qquad \forall t\in \mathbb{R},\qquad \forall (\>q,\>p)\in {\rm T}^*\mathbb{R}^n_0.
\label{w9}
\ee

From Theorem~\ref{teor7}, the $t$-independent constants of the motion (\ref{79}) for $h^{(n)}$ (\ref{w7}), coming from the Casimir invariant (\ref{w2}),  are found to be~\cite{BHMR04}  
 \be
 \begin{split} 
  \I_{\rm L}^{(k)}&     = h^{(k)}_{{\rm L},-}  h^{(k)}_{{\rm L},+}-  \left( h^{(k)}_{{\rm L},3}  \right)^2\\[2pt]
    &= \sum_{1\leq i<j}^k \left( ({q_i}{p_j} -
{q_j}{p_i})^2 +  
c_i\frac{q_j^2}{q_i^2}+c_j\frac{q_i^2}{q_j^2} \right)
+\sum_{i=1}^k c_i  ,\\[2pt]
  \I_{\rm R}^{(k)}&     = h^{(k)}_{{\rm R},-}  h^{(k)}_{{\rm R},+}-  \left( h^{(k)}_{{\rm R},3}  \right)^2\\[2pt]
  &= \sum_{n-k+1\leq i<j}^n \left( ({q_i}{p_j} -
{q_j}{p_i})^2 +  
c_i\frac{q_j^2}{q_i^2}+c_j\frac{q_i^2}{q_j^2} \right)
+\!\sum_{i=n-k+1}^n\! c_i ,
 \end{split}
\label{w10}
\ee
with $k=2,\dots, n$, while $I^{(n)}_{L}=I^{(n)}_{R}$ and
\be
  \I^{(1)}      = h^{(1)}_{-}  h^{(1)}_{+}-  \left( h^{(1)}_{3}  \right)^2 = c_1,
\label{I1}
\ee
 which is the constant that labels the representation of $ \mathfrak{sl}(2,\mathbb{R})^\ast$. Thus,  the  value of $c_1$ yields   three types  of submanifolds corresponding to the surfaces with constant value   of the Casimir $\mathcal{C}$ (\ref{w2}) for the Poisson structure of $ \mathfrak{sl}(2,\mathbb{R})$. Moreover, this determines the three non-diffeomorphic classes of $\mathfrak{sl}(2,\mathbb{R})$-LH systems in the classification of LH systems of the plane~\cite{LS20}.  Observe that, in fact, for an $i$-th copy of $S( \mathfrak{sl}(2,\mathbb{R})^\ast)$ within the  $n^{\rm th}$-order tensor product $S( \mathfrak{sl}(2,\mathbb{R})^\ast)\otimes\dots \otimes S( \mathfrak{sl}(2,\mathbb{R})^\ast)$ it follows that
\be
  \I^{(i)}      = h^{(i)}_{-}  h^{(i)}_{+}-  \left( h^{(i)}_{3}  \right)^2 = c_i,\qquad i=1,\dots,n .
\label{wwx}
\ee
Therefore, the nonlinear $ \mathfrak{sl}(2,\mathbb{R})$-Hamiltonian  (\ref{w7}) is constructed 
 over a mixture of such representations of $ \mathfrak{sl}(2,\mathbb{R})^\ast$ that only lead to a single one whenever all $c_i=c$.

The  expressions (\ref{w10}) can be better understood as the sum of ``generalized" angular momentum components defined by
\be
 \mL_{ij}:=({q_i}{p_j} -
{q_j}{p_i})^2 +  
c_i\frac{q_j^2}{q_i^2}+c_j\frac{q_i^2}{q_j^2} ,\qquad 1\le i<j\le n,
\label{LLa}
 \ee
 in such a manner that when all $c_i=0$, they reduce to the square of the usual  angular momentum functions $\mJ_{ij}$ in the form
 \be
  \mL_{ij}=\mJ_{ij}^2,\qquad  \mJ_{ij}:=q_{i}{p_j} -
{q_j}{p_i} ,
\label{Lb}
  \ee
 the latter closing on a  Lie--Poisson algebra isomorphic to  $\mathfrak{so}(n)$. If, for instance, we take $n=4$, by 
 Theorem~\ref{teor7} we get $5=(2m-1)=(2n-3)$ $t$-independent constants of the motion (\ref{w10}), which are functionally independent, and explicitly given by 
 \be
 \begin{split}
  \I_{\rm L}^{(2)}&= \mL_{12}+c_1+c_2 ,\qquad  \I_{\rm L}^{(3)}= \mL_{12}+ \mL_{13}+ \mL_{23}+c_1+c_2+c_3,\\[2pt]
   \I_{\rm R}^{(2)}&= \mL_{34}+c_3+c_4 ,\qquad  \I_{\rm R}^{(3)}= \mL_{23}+ \mL_{24}+ \mL_{34}+c_2+c_3+c_4,\\[2pt]
     \I^{(4)}&\equiv \I_{\rm L}^{(4)}\equiv \I_{\rm R}^{(4)}=\mL_{12}+ \mL_{13}+ \mL_{14}+  \mL_{23}+ \mL_{24}+ \mL_{34}+c_1+c_2+c_3+c_4 .
    \end{split}
\nonumber
\ee
 thus providing  integrability relations among the $(4+4)$ variables $(\>q(t),\>p(t))$.   Observe that by permutation of indices (\ref{713}), other $t$-independent constants of the motion can be deduced, although  they are not  functionally independent with respect to all the previous five, e.g.,
\be
\mS_{13} \!\left(   \I_{\rm L}^{(2)}  \right) = \mL_{23}+c_2+c_3,\qquad
 \mS_{13} \! \left(   \I_{\rm R}^{(2)}  \right) =\mL_{14}+c_1+c_4.
 \nonumber
\ee
 
 Summing up, {\em any} $t$-dependent Hamiltonian $h^{(n)}\bigl(t,\>q, \>p \bigr)$ of type (\ref{w7}), determined by a specific function $F$, is endowed with the $(2n-3)$ $t$-independent constants of the motion (\ref{w10}) and, in this sense, they can be regarded as its ``universal" constants of the motion~\cite{BH07}.  Recall  also that  the Racah algebra for $\mathfrak{sl}(2,\mathbb R)$, with such constants of the motion,  has been  studied in detail in~\cite{Latini2019,Latini2021,Latini2021b}.


\subsection{Generalized {\em t}-dependent central potentials on \texorpdfstring{$\mathfrak{sl}(2,\mathbb{R})$}{sl(2,R)}-coalgebra spaces}
\label{s81}

 The  Poisson $\mathfrak{sl}(2,\mathbb{R})$-coalgebra allows us to construct directly $t$-dependent central potentials on Riemannian spaces of (generically) non-constant curvature. Therefore, among others, $t$-dependent  (curved) oscillators and Kepler--Coulomb (KC) potentials emerge as particularly relevant systems of this type, which will be addressed   in  the following section, thus extending already known $t$-independent Hamiltonians  to this framework. 
 
Let us consider a Hamiltonian in classical mechanics expressed in terms of a $t$-independent kinetic energy $\mT$ and a $t$-dependent potential $\mU$ on a Riemannian $n$-dimensional manifold $\mm$. For the sake of simplicity,   we will omit the index ``$(n)$" from now on, as we will always deal with  $n$ dimensions, unless otherwise stated. We start with the most general expression of the kinetic energy 
for the  Hamiltonian $h$ (\ref{w7}) which, by setting all constants $c_i=0$ in the Hamiltonian function $ h_{+}$ (\ref{w6}),  reads as~\cite{sl2spaces}
\be
\mT(\>q,\>p)= \mA  (h_{-}  )h_{+}+\mB  (h_{-}  ) h_{3} ^2=  \mA  \bigl(\>q^2 \bigr)\>p^2+\mB  \bigl(\>q^2 \bigr) ( \>q\scal  \>p )^2 ,
\label{za}
\ee
where $\mA$ and $\mB$ are arbitrary functions which determine the specific underlying manifold $\mm$. It is worth recalling that $\mT$  is associated with the so-called $\mathfrak{sl}(2,\mathbb{R})$-coalgebra spaces~\cite{sl2spaces,annals2009}  which include, as a remarkable class, the spherically symmetric spaces with metric given by
\be
\dd s^2=\ff(|\>q|)^2\dd\>q^2,\qquad |\>q|:=\sqrt{\>q^2},\qquad \dd\>q^2:=\sum_{i=1}^n \dd q_i^2 ,
\label{zb}
\ee
where $\ff(|\>q|)$ is an arbitrary smooth function playing the role of  a conformal factor of the Euclidean metric $\dd\>q^2$; in hyperspherical coordinates, the variable $|\>q|$ is just a radial coordinate $r$. The scalar curvature $R$ of the metric  (\ref{zb}) is determined by the conformal factor and the dimension of the manifold, turning out to be 
\be
R(|\bq|)=-(n-1)\,\frac{    (n-4)\ff '(|\bq|)^2+ \ff(|\bq|)  \left(    2\ff''(|\bq|)+2(n-1)|\bq|^{-1}\ff'(|\bq|)  \right)}   {\ff(|\bq|)^4  }   .
\label{zc}
\ee
Then, the metric  (\ref{zb})  leads to a free Hamiltonian  $\mT$ (\ref{za}) describing the geodesic motion in  a spherically symmetric space $\mm$ by choosing  
\be
 \mA  (h_{-}  )=\frac{1}{2 \ff\bigl(\!\sqrt{h_{-} } \,\bigr)^2},\qquad \mB  (h_{-}  )=0,
 \nonumber
 \ee
 so that
 \be
 \mT(\>q,\>p)=\frac{h_{+}}{2\ff\bigl(\!\sqrt{h_{-} }\, \bigr)^2 }=\frac{\>p^2}{2 \ff(|\>q|)^2}.
 \label{Th}
 \ee
From this, a $t$-dependent central potential $\mU$ can be added, obtaining a family of $t$-dependent Hamiltonians given by
\be
h  (t,  \>q,  \>p  )=\frac{h_{+}}{2\ff\bigl(\!\sqrt{h_{-} }\, \bigr)^2 }+\mU\bigl(t, \! \sqrt{h_{-} }\, \bigr)
=\frac{\>p^2}{2 \ff(|\>q|)^2}+\mU (t, |\>q|    ),
\nonumber
\ee
 which entails, among others, $t$-dependent curved oscillators and KC systems.
A straightforward generalization follows by keeping arbitrary constants $c_i$ in $h_+$, yielding
\be
\begin{split}
h  (t,  \>q,  \>p  )&= F\!\left(t,h_{-} ,h_{+},h_{3}\right)=\frac{h_{+}}{2\ff\bigl(\!\sqrt{h_{-} }\, \bigr)^2 }+\mU\bigl(t,\!\sqrt{h_{-} }\, \bigr)\\
&=\frac{\>p^2}{2 \ff(|\>q|)^2}+\mU (t, |\>q|  )+\frac{1}{{2 \ff(|\>q|)^2}}\sum_{i=1}^n \frac{c_i}{q_i^2},
\end{split}
\label{gH}
\ee
which  provides a  Hamiltonian describing the motion of  a particle with unit mass under the superposition of a   $t$-dependent central potential with $n$ arbitrary Rosochatius--Winternitz potentials  in a spherically symmetric space.
 Therefore, from this construction, the appearance of such ``curved" Rosochatius--Winternitz potentials is a direct consequence of the choice of $h_+$ in the kinetic term, which eventually relies on the representation of the   $i$-th copy of $S( \mathfrak{sl}(2,\mathbb{R})^\ast)$ determined by each $c_i$ (\ref{wwx}) \cite{annals2009}.
 It is worthy to stress that the Hamiltonian (\ref{gH}) admits an alternative relevant physical interpretation as 
 a position-dependent mass (PDM) system~\cite{Plastino} with $t$-dependent potential by simply identifying the square of the conformal factor with   a radial PDM, $\ff(|\>q|)^2\equiv m(|\>q|)$, so that the PDM naturally arises in the Rosochatius--Winternitz potentials; explicitly  
 \be
 h  (t,  \>q,  \>p  )=\frac{\>p^2}{2 m(|\>q|)}+\mU (t, |\>q|   )+\frac{1}{{2 m(|\>q|)}}\sum_{i=1}^n \frac{c_i}{q_i^2}.
 \label{PDMg}
 \ee
Recall that $t$-independent PDM systems have been widely considered for classical and quantum oscillators~\cite{CrNN07,Quesne07,CrR09,BurgosAnnPh11,GhoshRoy15,Quesne15Jmp} as well as for KC systems~\cite{QuesneKC}.

 The Hamilton equations of  $h  (t,  \>q,  \>p  )$ (\ref{gH})  lead to the system of differential equations given by
\begin{equation}
\left\{\begin{aligned}
 \frac{\dd q_i}{\dd t}&=\frac{p_i}{ \ff(|\>q|)^2 }, \qquad i=1,\ldots,n,\\[2pt]
  \frac{\dd p_i}{\dd t}&=- \frac{q_i}{|\>q|}\,\frac{\partial\, \mU (t, |\>q|    )}{\partial |\>q|} + \frac{q_i}{|\>q|}\,
  \frac{\ff'(|\>q|)}{ \ff(|\>q|)^3 }\left(\>p^2+\sum_{i=1}^n \frac{c_i }{q_i^2} \right)+\frac{c_i }{ q_i^3 \ff(|\>q|)^2}.
  \end{aligned}\right.  
\label{eqs}
\end{equation}
The associated $t$-dependent vector field is then obtained by using (\ref{w9}) with the function $F$    in (\ref{gH}):
\be
 \begin{aligned}
 X&= \frac{1}{2 \ff\bigl(\!\sqrt{h_{-} } \,\bigr)^2 }\, X_+ +
 \left(  \frac{1}{2\sqrt{h_{-}  }} \,\frac{\partial\, \mU \bigl(t,\!\sqrt{h_{-} } \, \bigr) }{\partial \sqrt{h_{-} }  } -     \frac{h_+}{ 2\sqrt{h_{-} } \, \ff\bigl(\!\sqrt{h_{-} } \,\bigr)^3}\,
\frac{\dd \ff\bigl(\!\sqrt{h_{-} } \,\bigr)}{\dd    \sqrt{h_{-} } } \right)\! X_-  
 \\[4pt]
&= \sum_{i=1}^n \left[ \frac {p_i}{ \ff(|\>q|)^2 } \, \frac{\partial}{\partial q_i}- \left(  \frac{q_i}{|\>q|}\,\frac{\partial\, \mU (t, |\>q|    )}{\partial |\>q|} - \frac{q_i}{|\>q|}\,
  \frac{\ff'(|\>q|)}{ \ff(|\>q|)^3 }\left(\>p^2+\sum_{i=1}^n \frac{c_i}{q_i^2} \right)-\frac{c_i}{ q_i^3 \ff(|\>q|)^2}
  \right)\!\frac{\partial}{\partial p_i} \right],
 \end{aligned}
\label{gX}
\ee
which, consistently, gives rise to the system of equations (\ref{eqs}). 
  
 Consequently, the  $t$-dependent nonlinear Hamiltonian  $h$ (\ref{gH}), or its associated $t$-dependent vector field $X$ (\ref{gX}), is formed by a ``main" $t$-dependent potential $\mU$ plus $n$ arbitrary  $t$-independent ``centrifugal" terms.

  \newpage

\section{Applications} 
\label{s9}

We present some specific relevant  $t$-dependent ``generalized" Hamiltonians systems,  for which   the terminology ``generalized" refers to the addition of   $t$-independent  $n$ Rosochatius--Winternitz  terms to a $t$-dependent main central potential (\ref{gH}). 
    We remark that such applications  correspond to curved oscillators in Section~\ref{s91} and KC systems in Section~\ref{s92}, which  are just  the $t$-dependent counterpart of well-known $t$-independent maximally superintegrable curved systems, i.e., all of them are endowed, besides the $(2n-3)$ ``universal"  $t$-independent constants  of the motion  (\ref{w10}), with   a Demkov--Fradkin tensor for oscillators or a Laplace--Runge--Lenz $n$-vector for KC systems.

 
\subsection{Generalized curved oscillators with {\em t}-dependent frequency} 
\label{s91}

We focus here on three types of $n$-dimensional  generalized  oscillator potentials with  $t$-dependent frequency $\omega(t)$, for which their  Hamiltonian (\ref{gH}),   equations of the motion (\ref{eqs}) together with their associated vector field (\ref{gX}) are  explicitly displayed in Table~\ref{table2}.  Let us   describe, separately,  their main features and   relationships with the literature.


\subsubsection{Generalized Euclidean isotropic  oscillator} 
\label{s911}

For completeness and to explain the role of $c_i$-potentials in detail, let us first  consider  the
isotropic   oscillator with $t$-dependent frequency $\omega(t)$ in Section~\ref{s61}. The main potential in  (\ref{gH})  is  
\be
\mU(t,|\>q|)=\frac 12 \omega^2({t}) h_- =\frac 12 \omega^2({t})\>q^2 ,
\label{Usw}
\ee
 which is endowed with the $t$-independent angular momentum constants of the motion (\ref{Lb}). Recall that $t$-dependent harmonic oscillators have been applied a wide variety of contexts  (see, e.g.,~\cite{LewisHR,LiLi,Guasti,Soliani,Kovacic,Robnik,Prencel} and references therein).
    The linear LH system is defined on the Euclidean space $\bE^n\equiv {\rm T}^*\mathbb{R}^n$ expressed in Cartesian variables $(\>p,\>q)$.    Then, we add the $n$ arbitrary ``centrifugal" $c_i$-potentials 
coming from $\tilde{ \>p}^2$ instead of $ { \>p}^2$  (\ref{w5}), arriving at the generalized  Euclidean isotropic  oscillator or SW system described in Section~\ref{s21}, which is now defined on 
$\bE^n_0\equiv {\rm T}^*\mathbb{R}^n_0$ with $\>q\ne 0$, as summarized in Table~\ref{table2}.
Observe that the correspondence between the expressions in Section~\ref{s21} and those in 
 Section~\ref{s8}, is provided by 
  $ h_1 \to \tfrac12 h_-$,  $h_2 \to - \tfrac12 h_3$ and $h_3 \to \tfrac12 h_+$, thus from Table~\ref{table2} it is found that $h =h_{\rm SW}$ (\ref{hhSW}) and $X =X_{\rm SW}$  (\ref{xsw}).

Let us recover some particular systems. If we  set all $c_i= c$, we obtain the one-dimensional SW system on $\bE_0\equiv {\rm T}^*\mathbb{R}_0$ with variables $(q_1\ne 0,p_1)$, i.e., equivalently,   the Ermakov system (\ref{Ea})~\cite{Ermakov1880,Leach1991,Maamache1995,LS20,Leach2008} or the Milne--Pinney  equation~\cite{Milne1930,Pinney1950}. Then, its $n$-dimensional  version corresponds to  the diagonal prolongation  to $\bE^n_0\equiv {\rm T}^*\mathbb{R}^n_0$. The corresponding superposition principle follows by fixing $(m+1)=3$ in (\ref{Eq:Con}) with the $t$-independent constants of the motion (\ref{w10}) reading as $  \I^{(1)}  =c$ (\ref{I1}) (remind that $c
\in \mathbb{R}$ labels the three non-diffeomorphic classes of    LH systems on the plane~\cite{LS20}),  and the others are expressed on $\bE^3_0\equiv {\rm T}^*\mathbb{R}^3_0$ (with coordinates $(q_i\ne 0,p_i)$ and $i=1,2,3$) as
\be
 \begin{split}
  \I_{\rm L}^{(2)}&=({q_1}{p_2} -
{q_2}{p_1})^2 +  
c\left( \frac{q_2^2}{q_1^2}+  \frac{q_1^2}{q_2^2}\right) +2c=  \mL_{12}  +2c , \\[2pt]
   \I_{\rm R}^{(2)}&= ({q_2}{p_3} -
{q_3}{p_2})^2 +  
c\left( \frac{q_3^2}{q_2^2}+  \frac{q_2^2}{q_3^2}\right) +2c=  \mL_{23} +2c  ,\\[2pt]
     \I^{(3)}&\equiv \I_{\rm L}^{(3)}\equiv \I_{\rm R}^{(3)}= 
       \mL_{12}+ \mL_{13}+  \mL_{23}+ 3c  ,
    \end{split}
\label{z1}
\ee
  with $\mL_{ij}$ given in (\ref{LLa}).  By permutations of the indices (\ref{713})  in $\I_{\rm L}^{(2)}$ or $  \I_{\rm R}^{(2)}$, another constant is found:
$$
 \I_{\rm 13}^{(2)}  =\mS_{23} \!\left(   \I_{\rm L}^{(2)}  \right) = \mS_{12} \!\left(   \I_{\rm R}^{(2)}  \right) =   ({q_1}{p_3} -
{q_3}{p_1})^2 +  
c\left( \frac{q_3^2}{q_1^2}+  \frac{q_1^2}{q_3^2}\right) +2c=\mL_{13}+2c    ,  
$$
such that
$  \I^{(3)} =
      \I_{\rm L}^{(2)}+ \I_{\rm R}^{(2)}+ \I_{\rm 13}^{(2)}-3c .
$
  To obtain  a superposition principle it is necessary, in this case, to consider only two functionally independent constants of the motion, which can be taken in different ways. For instance, if we select $ \I_{\rm L}^{(2)}\equiv k_1$ and $ \I_{\rm 13}^{(2)}\equiv k_2 $, fulfilling (\ref{Eq:Con}), since $(m+1)=3$, the general solution $(q_1(t),p_1(t))$ can be expressed in terms of two particular solutions $(q_2,p_2,q_3,p_3)$ and   two constants $k_1$ and $k_2$, as done in~\cite{BCHLS13}.
 Therefore, the functions  (\ref{w10}), together with the permutations of their indices (\ref{713}), yield  a large number constants of the motion (not all of them functionally independent), but only a few are required to deduce a superposition rule. 
     
   Another well-known particular case is recovered by choosing $c_1=c$ and all the remaining $c_j=0$ $(j=2,\dots,n)$ in the  space  $\bE_{q_1\ne 0}^n\equiv {\rm T}^*\mathbb{R}_{q_1\ne 0}^n$,  thus extending the previous Ermarkov system, namely~\cite{Leach1991,LS20}
\begin{equation} 
\left\{\begin{aligned}
\frac{\dd q_1}{\dd t}&=p_1, \qquad \frac{\dd p_1}{\dd t} =-\omega^2(t)q_1+\frac{c}{q_1^3},\\[2pt]
\frac{\dd q_j}{\dd t}&=p_j , \qquad \frac{\dd p_j}{\dd t} =-\omega^2(t)q_j,\qquad j=2,\dots, n .
\end{aligned}\right. 
\nonumber
\end{equation}
For $n=2$, the set of constants of the motion  (\ref{w10}) lead to a single constant for the Hamiltonian  $h^{(2)} $ (\ref{gH}) in  $\bE_{q_1\ne 0}^2$, 
\be
   \I^{(2)} \equiv \I_{\rm L}^{(2)}\equiv \I_{\rm R}^{(2)}=({q_1}{p_2} -
{q_2}{p_1})^2 +  
c\left( 1+\frac{q_2^2}{q_1^2} \right)  ,
\nonumber
\ee
corresponding to the Ermakov--Lewis invariant~\cite{Leach2008}.  If $n=3$, the functions (\ref{w10}) give  rise to three constants of the motion, similarly to (\ref{z1}), reading now as
\be
 \begin{split}
  \I_{\rm L}^{(2)}&=({q_1}{p_2} -
{q_2}{p_1})^2 +  
c\left( 1+\frac{q_2^2}{q_1^2} \right),\qquad
  \I_{\rm R}^{(2)} =({q_2}{p_3} -
{q_3}{p_2})^2  ,
 \\[2pt]
     \I^{(3)}&\equiv \I_{\rm L}^{(3)}\equiv \I_{\rm R}^{(3)}=\sum_{1\le i<j}^3
   ({q_i}{p_j} -
{q_j}{p_i})^2+  
c \,\frac{q_2^2}{q_1^2} +  
c \,\frac{q_3^2}{q_1^2}+c.
    \end{split}
\nonumber
\ee
If we consider
$$
 \I_{\rm 13}^{(2)}  =\mS_{23} \!\left(   \I_{\rm L}^{(2)}  \right)=({q_1}{p_3} -
{q_3}{p_1})^2 +  
c\left( 1+\frac{q_3^2}{q_1^2} \right),
$$
we obtain another Ermakov--Lewis invariant such that
  $ \I^{(3)}= \I_{\rm L}^{(2)}+ \I_{\rm 13}^{(2)} + \I_{\rm R}^{(2)} -c$, which is formed by two Ermakov--Lewis invariants and the square of the component of the  angular momentum in the plane $q_2q_3$, thus recovering the results presented in~\cite{LS20}. Furthermore,  we recall that the consideration of a single $c_1$-potential corresponds to the presence of a centrifugal (infinite) barrier which limits the trajectory of the particle on the Euclidean space; for $n=2$ it is half of the Euclidean plane   and if a second potential with $c_2\ne 0$ is added, the trajectory is restricted to a quadrant.
In this sense, the  SW system is also called ``caged" oscillator~\cite{VerrierOscillator}.

Other particular cases can be constructed on the Euclidean space, keeping the same main potential (\ref{Usw}) for other  choices of the $c_i$-terms.
 
 \FloatBarrier
   
 \begin{table}[ht!]
{\small
\caption{ \small{$t$-Dependent  generalized curved oscillators  from the Poisson $\mathfrak{sl}(2,\mathbb{R})$-coalgebra in Section~\ref{s8}, 
 with Hamiltonian functions $h_\alpha$ (\ref{w6}) and vector fields $X_\alpha$ (\ref{w8})  $(\alpha=-,+,3)$.  For each case, we indicate the $t$-dependent Hamiltonian, equations of the motion and vector field; for all, $\>\prod_{i=1}^n|q_i|^{|c_i|}\ne 0$ and  $i=1,\dots,n$.}}
  \begin{center}
\noindent 
\begin{tabular}{l}
\hline

\hline
\\[-6pt]
$\bullet$ Generalized Euclidean isotropic oscillator or SW system:  \quad  $\ff =1 $\qquad $\mU = \frac 12 \omega^2(t)h_-$ 
\\[4pt]
   $\displaystyle{\quad h =\frac 12h_+ +\frac 12 \omega^2(t)h_- =  \frac{1}{2}\>p^2 + \frac 12 \omega^2(t)\>q^2 +\frac 12\sum_{i=1}^n \frac{c_i}{q_i^2} }$
   \\[8pt]
 $\displaystyle{\quad   \frac{\dd q_i}{\dd t}= p_i \qquad   \frac{\dd p_i}{\dd t}=- \omega^2(t)q_i +\frac{c_i}{q_i^3} }$\\[10pt]
   $\displaystyle{\quad X =\frac 12X_+ +\frac 12 \omega^2(t)X_-   }$
  \\[10pt]
  \hline
\\[-6pt]
 
$\bullet$ Generalized spherical and hyperbolic isotropic oscillators\\[6pt]
  {\em (i) Poincar\'e variables} (curvature $\kappa$):   \quad  $\ff =(1+\kappa h_-)^{-1} $\qquad $\mU = \frac 12 \omega^2(t)h_-(1-\kappa h_-)^{-2}$
\\[4pt]
    $\displaystyle{\quad h =\frac 12(1+\kappa h_-)^{2}h_+ +\frac 12 \omega^2(t)\frac{h_-}{(1-\kappa h_-)^{2}}   }$
   \\[8pt]
      $\displaystyle{\qquad\!  =  \frac{1}{2}(1+\kappa \>q^2)^{2}\>p^2 + \frac 12 \omega^2(t)\frac{\>q^2}{(1-\kappa \>q^2)^{2}}   + \frac{1}{2}(1+\kappa \>q^2)^{2}\sum_{i=1}^n \frac{c_i}{q_i^2} }$
   \\[10pt]
 $\displaystyle{\quad   \frac{\dd q_i}{\dd t}=(1+\kappa \>q^2)^{2} p_i \qquad   \frac{\dd p_i}{\dd t}=- \omega^2(t)q_i \frac{(1+\kappa \>q^2)}{(1-\kappa \>q^2)^3} +\frac{c_i(1+\kappa \>q^2)^{2}}{q_i^3} - 2\kappa q_i (1+\kappa \>q^2) \!\left( \!\>p^2+\sum_{i=1}^n  \frac{c_i}{q_i^2} \right)  }$
 \\[12pt]
   $\displaystyle{\quad X =\frac 12(1+\kappa h_-)^{2}X_+ + \left(\frac 12 \omega^2(t) \frac  {  (1+\kappa h_-) }{  (1-\kappa h_-)^3}+\kappa (1+\kappa h_-)h_+ \!\right)\!X_-  }$
    \\[12pt]
   
     {\em  (ii) Beltrami variables} (curvature $\kappa$):   \quad  $\mA =\frac12(1+\kappa h_-)  $\qquad$\mB =\frac12 \kappa(1+\kappa h_-)  $\qquad $\mU = \frac 12 \omega^2(t)h_- $
   \\[4pt]
    $\displaystyle{\quad h =\frac 12(1+\kappa h_-) \bigl( h_+ + \kappa h_3^2 \bigr) +\frac 12 \omega^2(t) {h_-}  }$
   \\[8pt]
      $\displaystyle{\qquad\!  =  \frac{1}{2} (1+\kappa \>q^2 ) \bigl(\>p^2 + \kappa  ( \>q\scal  \>p )^2 \bigr)+ \frac 12 \omega^2(t) {\>q^2}   + \frac{1}{2}(1+\kappa \>q^2) \sum_{i=1}^n \frac{c_i}{q_i^2} }$
   \\[8pt]
 $\displaystyle{\quad   \frac{\dd q_i}{\dd t}=(1+\kappa \>q^2) \bigl( p_i  + \kappa q_i ( \>q\scal  \>p )\bigr)   }$
 \\[8pt]
 $\displaystyle{\quad      \frac{\dd p_i}{\dd t}=- \omega^2(t)q_i   +(1+\kappa \>q^2)\left(\! \frac{c_i }{q_i^3}-\kappa p_i  ( \>q\scal  \>p )\! \right)-  \kappa q_i  \!\left(   \!\>p^2+ \kappa ( \>q\scal  \>p )^2 + \sum_{i=1}^n  \frac{c_i}{q_i^2} \right)  }$
 \\[10pt]
   $\displaystyle{\quad X =\frac 12(1+\kappa h_-) X_+ +  \kappa (1+\kappa h_-) h_3 X_3 +\frac 12\! \left( \omega^2(t)  +\kappa \bigr(h_+ +\kappa h_3^2 \bigl) \right)\! X_-  }$
       \\[10pt]
  \hline
\\[-5pt]

$\bullet$ Generalized Darboux III oscillator:   \quad  $\ff =\sqrt{1+\lambda h_- }$\qquad $\mU = \frac 12 \omega^2(t)h_-(1+\lambda h_-)^{-1}$ 
\\[6pt]
   $\displaystyle{\quad h =\frac  {h_+}{2(1+\lambda h_-)} +  \omega^2(t)\frac  {h_-}{2(1+\lambda h_-)}  =  \frac{1}{2(1+\lambda \>q^2)}\left(\!\>p^2 +  \omega^2(t)\>q^2 +\sum_{i=1}^n \frac{c_i}{q_i^2} \right)}$
   \\[14pt]
 $\displaystyle{\quad   \frac{\dd q_i}{\dd t}=\frac {p_i }{1+\lambda \>q^2}\qquad   \frac{\dd p_i}{\dd t}=- \omega^2(t)\frac{q_i}{(1+\lambda \>q^2)^2} +\frac{c_i}{q_i^3(1+\lambda \>q^2)} + \frac{\lambda q_i}{(1+\lambda \>q^2)^2}    \!\left( \!\>p^2+\sum_{i=1}^n  \frac{c_i}{q_i^2} \right)    }$
 \\[12pt]
   $\displaystyle{\quad X =\frac 1{2(1+\lambda h_-)} \, X_+ +\frac {\omega^2(t)-\lambda h_+}{2(1+\lambda h_-)^2 }\,  X_-   }$
     \\[12pt]
 \hline

\hline
\end{tabular}
 \end{center}
\label{table2}
}
\end{table}

  \FloatBarrier


 \subsubsection{Generalized  spherical and hyperbolic isotropic oscillators} 
\label{s912}

The most natural extension of the previous Euclidean   oscillators to proper nonlinear 
LH oscillator systems is to construct their counterpart on spaces of constant curvature:   the $n$-dimensional sphere  $\bS^n$ and hyperbolic space $\bH^n$.   To this aim, we consider   that $\>q$ are no longer Cartesian coordinates, but the Poincar\'e coordinates obtained through stereographic projection with ``south" pole, such that the metric (\ref{zb}) and scalar curvature (\ref{zc})  are given by~\cite{BH07,Doub}
\be
\dd s^2=4\, \frac{\dd \>q^2}{(1+\kappa \>q^2)^2},\qquad R=n(n-1)\kappa,
\label{metricP}
\ee
where the real parameter $\kappa$ is the constant sectional curvature of the space covering the sphere, hyperbolic and Euclidean spaces for positive, negative and zero values, respectively. 
For the hyperbolic space with $\kappa=-|\kappa|$, the projection  requires that $1-|\kappa| \>q^2>0$  
  leading to the interior of $n$-ball $|\kappa|\>q^2<1$; for $n=2$, it is   the Poincar\'e disk or the conformal disk model.

Observe that the (flat) Euclidean limit $\kappa=0$ of the metric (\ref{metricP}) gives 
$\dd s^2=4 \dd \>q^2$, thus the Poincar\'e coordinates are twice the usual Cartesian coordinates.
For this reason, we set $f=(1+\kappa h_-)^{-1}$ for the free  Hamiltonian (\ref{Th}), which is proportional to the conformal factor of the metric (\ref{metricP}), since it   yields $f\equiv 1$ when $\kappa=0$, whereas the $t$-dependent isotropic oscillator on $\bS^n$ and  $\bH^n$ in these coordinates reads  as~\cite{BH07}
 \be
\mU(t,|\>q|)=\frac 12 \omega^2({t})\frac{ h_- }{(1-\kappa h_-)^2}=\frac 12 \omega^2({t})\frac{\>q^2 }{(1-\kappa \>q^2)^2},
\label{HiggsP}
 \ee
reducing to (\ref{Usw}) on $\bE^n$ for $\kappa=0$. This potential can be better understood in terms of a radial distance $\rho$ measured along the geodesic joining the particle and the origin $O$ in the curved space, which does not coincide with the radial coordinate $|\>q|=r$; explicitly (see~\cite{annals2009}),  $\mU\propto \tan^2\!\rho $
on $\bS^n$ $(\kappa=+1)$ while $\mU\propto \tanh^2\!\rho $
on $\bH^n$ $(\kappa=-1) $. For the spherical case, it is worth observing that  (\ref{HiggsP}) is the $t$-dependent counterpart  of the so-called Higgs oscillator~\cite{Higgs},   widely studied in the literature (see~\cite{Pogoa, Nersessian1, ranran2, BaHeSantS03,Santander6, 2013Higgs} and references therein).

Then, when the $n$ arbitrary $c_i$-potentials are added to the main potential (\ref{HiggsP}), we obtain the generalized  spherical and hyperbolic isotropic oscillators with a $t$-dependent frequency shown in Table~\ref{table2}, which are also known as the ``curved SW systems"~\cite{Pogosyan1,RS,Kalnins2,BaHeSantS03,BH07,2013Higgs}. The  Demkov--Fradkin tensor for the corresponding $t$-independent Hamiltonians, which ensures maximal superintegrability, can be found explicitly in Poincar\'e variables in~\cite{BH07}. In addition, the $c_i$-potentials also behave as centrifugal barriers  on the hyperbolic space $\bH^n_0$ ($\>q\ne 0$), as in the previous Euclidean SW case on $\bE^n_0$, but they admit an interpretation as noncentral oscillators on the $n$-dimensional sphere $\bS^n_0$. In particular, let us set $\kappa=+1$ 
and  consider $n$ points $O_i$ on  $\bS^n_0$ such that $\{O,O_1,\dots,O_n\}$ are mutually separated by a quadrant, i.e., a distance $\frac \pi 2$. If $\rho_i$ denotes the distance along the geodesic joining the particle and the point $O_i$, then  the generalized spherical  isotropic oscillator can  be expressed as the superposition of the $t$-dependent  central Higgs oscillator with   frequency $  \omega(t)$ and centre at the origin $O$ and $n$ noncentral oscillators with    $t$-independent frequencies $c_i=  \omega^2_i/2$ and centre at $O_i$, namely~\cite{ranran2,BaHeSantS03,2013Higgs}
\be
\mU(t,|\>q|)= \frac 12  \omega^2({t}) \tan^2\!\rho +\frac 12 \sum_{i=1}^n \omega_i^2 \tan^2\!\rho_i.
\label{cpots}
\ee

All the results above can also be written in terms of  Beltrami coordinates corresponding to a  stereographic central projection~\cite{BH07}.   The resulting free Hamiltonian is no longer of conformal type, with metric (\ref{zb}),  but it is still   an $\mathfrak{sl}(2,\mathbb{R})$-coalgebra space (\ref{za}) for the functions $\mA$ and $\mB$ indicated in Table~\ref{table2}, whose underlying metric reads
\be
\dd s^2= \frac{(1+\kappa\>q^2)\dd\>q^2-\kappa
(\>q\scal \dd\>q)^2}{(1+\kappa\>q^2)^2} .
\nonumber
\ee
Recall that  there is again a restriction for the Beltrami coordinates on $\bH^n$: $|\kappa|\>q^2<1$;  for $n=2$, the hyperbolic plane is   the interior of the Klein disk. Note also that  in Beltrami variables,   the main potential (\ref{HiggsP}) has the same formal expression as in the Euclidean case (\ref{Usw}) for any value of the curvature $\kappa$ (the latter only appears within the kinetic term and $c_i$-potentials).


 \subsubsection{Generalized   Darboux III oscillator} 
\label{s913}

Maximally superintegrable  systems in   spherically symmetric spaces (\ref{zb}) have been classified in~\cite{Bertrand2008} for $n=3$ from the results  obtained formerly in~\cite{Perlick}, which are called Bertrand Hamiltonians. Therefore, the previous spherical/hyperbolic oscillators are particular systems of this family, although, in general, the underlying spaces are of non-constant curvature (\ref{zc}) and the additional constants of the motion to the set (\ref{w10}) are usually of higher-order in the momenta. 
 
In addition to oscillators in constant curvature spaces,  the closest  curved oscillator to the Euclidean one is provided  by the so-called  Darboux III system~\cite{BEHR08a,BurgosAnnPh11}, whose underlying $n$-dimensional Darboux III space is defined by the following metric (\ref{zb}) and scalar curvature   (\ref{zc})
 \be
\dd s^2=  (1+\lambda\>q^2 )\dd\>q^2,\qquad R(|\>q|)= - \lambda (n-1)\,\frac{2n + 3\lambda(n-2)\>q^2}{(1+\lambda \>q^2)^3} ,
\nonumber
 \ee
 where $\lambda$ is an arbitrary real parameter. For negative values of $\lambda=-|\lambda|$, the domain of the variables is restricted to the ball $ |\lambda|\>q^2 <1$, reminding the hyperbolic space.
The $t$-dependent main potential in (\ref{Usw}) is given by
\be
\mU(t,|\>q|)=\frac 12 \omega^2({t}) \frac{h_- }{1+\lambda h_-}=\frac 12 \omega^2({t})
\frac{\>q^2}{1+\lambda \>q^2} ,
\nonumber
\ee
which reduces to the isotropic oscillator (\ref{Usw}) for $\lambda=0$. The generalized Darboux III  Hamiltonian  is then constructed by adding the $n$ $c_i$-potentials, for which the resulting expressions are displayed in Table~\ref{table2}. The  Demkov--Fradkin tensor  for the $t$-independent 
generalized system can be found in~\cite{BEHR08a}, which is still quadratic in the momenta.
Observe also that the whole system can naturally be   regarded as PDM oscillator system (see~\cite{CrNN07,Quesne07,CrR09,BurgosAnnPh11,GhoshRoy15,Quesne15Jmp}  and references therein), here  with $t$-dependent   frequency, with $m(|\>q|)=1+\lambda \>q^2$ and Hamiltonian (\ref{PDMg})  now reading as 
\be
h(t,\>q,\>p)=   \frac{1}{2m(|\>q|)}\,  \>p^2 +  \frac 12m(|\>q|)\omega^2(t)\frac{\>q^2}{(1+\lambda \>q^2)^2} + \frac{1}{2m(|\>q|)}\sum_{i=1}^n \frac{c_i}{q_i^2} ,
\nonumber
\ee
which entails a transformation of the oscillator potential according to the PDM.


\subsection{Generalized curved KC  systems with  {\em t}-dependent coupling constant} 
\label{s92}

Along the same lines  as for the curved oscillators, we now construct three classes of $t$-dependent generalized KC systems in $n$ dimensions from maximally superintegrable $t$-independent Hamiltonians, but now with a  $t$-dependent coupling constant $\kk(t)$. 
   The main results comprising  their  Hamiltonian (\ref{gH}),   equations of the motion (\ref{eqs}) together with their associated vector field (\ref{gX}) are presented in Table~\ref{table3}.   Let us  briefly comment on these results. 
   
   \newpage

  \FloatBarrier
   
 \begin{table}[ht!]
{\small
\caption{ \small{$t$-Dependent  generalized curved KC systems  from the Poisson $\mathfrak{sl}(2,\mathbb{R})$-coalgebra in Section~\ref{s8}, 
 with Hamiltonian functions $h_\alpha$ (\ref{w6}) and vector fields $X_\alpha$ (\ref{w8})  $(\alpha=-,+,3)$.  For each case, the $t$-dependent Hamiltonian, equations of the motion and vector field are given; for all of them, $\>\prod_{i=1}^n|q_i|^{|c_i|}\ne 0$ and  $i=1,\dots,n$.}}
  \begin{center}
\noindent 
\begin{tabular}{l}
\hline

\hline
\\[-6pt]
$\bullet$ Generalized Euclidean KC system:  \quad  $\ff =1 $\qquad $\mU = -\kk(t) h_-^{-1/2}$ 
\\[4pt]
   $\displaystyle{\quad h =\frac 12h_+ -\frac {\kk(t)} {\sqrt{h_-} }=  \frac{1}{2}\>p^2 -\frac {\kk(t)} { |\>q|}   +\frac 12\sum_{i=1}^n \frac{c_i}{q_i^2} }$
   \\[12pt]
 $\displaystyle{\quad   \frac{\dd q_i}{\dd t}= p_i \qquad   \frac{\dd p_i}{\dd t}=- \kk(t)\frac{q_i}{|\>q|^3}  +\frac{c_i}{q_i^3} }$\\[10pt]
   $\displaystyle{\quad X =\frac 12X_+ +\frac {\kk(t)}{2h_-^{3/2} }\, X_-   }$
  \\[12pt]
  \hline
\\[-6pt]
 
$\bullet$ Generalized spherical and hyperbolic KC systems \\[4pt]
  {\em (i) Poincar\'e variables} (curvature $\kappa$):   \quad  $\ff =(1+\kappa h_-)^{-1} $\qquad $\mU = -\kk(t)(1-\kappa h_-)   h_-^{-1/2}$
\\[4pt]
    $\displaystyle{\quad h =\frac 12(1+\kappa h_-)^{2}h_+ -\kk(t)\frac{1-\kappa h_- } {\sqrt{h_-} } }$
   \\[10pt]
      $\displaystyle{\qquad\!  =  \frac{1}{2}(1+\kappa \>q^2)^{2}\>p^2 -\kk(t)\frac{1-\kappa \>q^2 } {|\>q|}  + \frac{1}{2}(1+\kappa \>q^2)^{2}\sum_{i=1}^n \frac{c_i}{q_i^2} }$
   \\[12pt]
 $\displaystyle{\quad   \frac{\dd q_i}{\dd t}=(1+\kappa \>q^2)^{2} p_i \qquad   \frac{\dd p_i}{\dd t}=- \kk(t)\frac{q_i (1+\kappa \>q^2)}{ |\>q|^3} +\frac{c_i(1+\kappa \>q^2)^{2}}{q_i^3} - 2\kappa q_i (1+\kappa \>q^2) \!\left( \!\>p^2+\sum_{i=1}^n  \frac{c_i}{q_i^2} \right)  }$
 \\[12pt]
   $\displaystyle{\quad X =\frac 12(1+\kappa h_-)^{2}X_+ + \left(\! \kk(t)\frac {1+\kappa h_-}{2 h_-^{3/2}}  +\kappa (1+\kappa h_-)h_+ \!\right)\!X_-  }$
    \\[14pt]
   
     {\em  (ii) Beltrami variables} (curvature $\kappa$):   \quad  $\mA =\frac12(1+\kappa h_-)  $\qquad$\mB =\frac12 \kappa(1+\kappa h_-)  $\qquad $\mU =-\kk(t) h_-^{-1/2} $
   \\[4pt]
    $\displaystyle{\quad h =\frac 12(1+\kappa h_-) \bigl( h_+ + \kappa h_3^2 \bigr) -\frac {\kk(t)} {\sqrt{h_-} } }$
   \\[8pt]
      $\displaystyle{\qquad\!  =  \frac{1}{2} (1+\kappa \>q^2 ) \bigl(\>p^2 + \kappa  ( \>q\scal  \>p )^2 \bigr)-\frac {\kk(t)} { |\>q|} + \frac{1}{2}(1+\kappa \>q^2) \sum_{i=1}^n \frac{c_i}{q_i^2} }$
   \\[8pt]
 $\displaystyle{\quad   \frac{\dd q_i}{\dd t}=(1+\kappa \>q^2) \bigl( p_i  + \kappa q_i ( \>q\scal  \>p )\bigr)   }$
 \\[8pt]
 $\displaystyle{\quad      \frac{\dd p_i}{\dd t}=- \kk(t)\frac{q_i}{|\>q|^3}  +(1+\kappa \>q^2)\left(\! \frac{c_i }{q_i^3}-\kappa p_i  ( \>q\scal  \>p )\! \right)-  \kappa q_i  \!\left(   \!\>p^2+ \kappa ( \>q\scal  \>p )^2 + \sum_{i=1}^n  \frac{c_i}{q_i^2} \right)  }$
 \\[12pt]
   $\displaystyle{\quad X =\frac 12(1+\kappa h_-) X_+ +  \kappa (1+\kappa h_-) h_3 X_3 +\frac 12\! \left(  {\kk(t) }{h_-^{-3/2}} +\kappa \bigr(h_+ +\kappa h_3^2 \bigl) \right)\! X_-  }$
       \\[10pt]
  \hline
\\[-6pt]

$\bullet$ Generalized Taub--NUT KC system:   \quad  $\ff = \sqrt{1+ {\eta}{  h_-^{-1/2}} }$\qquad $\mU = -\kk(t) \bigl(\eta+  \sqrt{h_- }\,\bigr)^{-1} $ 
\\[6pt]
   $\displaystyle{\quad h =\frac  { \sqrt{h_- } \, h_+}{2\bigl(\eta+  \sqrt{h_- }\,\bigr)} - \frac  {\kk(t)}{\eta+  \sqrt{h_- }}  =  \frac{|\>q|}{\eta+  |\>q|}\left(\frac 12 \>p^2 -\frac{\kk(t)}{|\>q|}  +\frac 12\sum_{i=1}^n \frac{c_i}{q_i^2} \right)}$
   \\[12pt]
 $\displaystyle{\quad   \frac{\dd q_i}{\dd t}= \frac{|\>q| p_i}{\eta+  |\>q|} \qquad   \frac{\dd p_i}{\dd t}=- \kk(t)\frac{q_i}{ |\>q| ( \eta+  |\>q| )^2} +\frac{c_i|\>q|}{q_i^3( \eta+  |\>q| )} - \frac{\eta q_i}{2  |\>q| ( \eta+  |\>q| )^2}    \!\left( \!\>p^2+\sum_{i=1}^n  \frac{c_i}{q_i^2} \right)    }$
 \\[12pt]
   $\displaystyle{\quad X =\frac  { \sqrt{h_- }  }{2\bigl(\eta+  \sqrt{h_- }\,\bigr)} \, X_+ +\frac {2 \kk(t)+\eta h_+}{4\sqrt{h_- } \bigr(\eta+  \sqrt{h_- } \,\bigl)^2 }\,  X_-   }$
     \\[16pt]
 \hline

\hline
\end{tabular}
 \end{center}
\label{table3}
}
\end{table}

  \FloatBarrier

  The first system is the usual KC system on the Euclidean space with a $t$-dependent main potential   (\ref{gH}) given by
\be
\mU(t,|\>q|)=-\frac{\kk(t) }{\sqrt{h_- }}=-\frac{\kk(t) }{|\>q|} ,
\nonumber
\ee
    which, in contrast to the oscillator (\ref{Usw}), is not a proper LH system but a genuine nonlinear LH system. The generalized KC system is obtained by adding the $n$ $c_i$-potentials, which behave as centrifugal barriers as described for the SW system in Section~\ref{s911}.      When at least  a single  $c_i$-term vanishes, the $t$-independent Hamiltonian  admits, besides (\ref{w10}), an additional ``hidden" constant of the motion determined by a   Laplace--Runge--Lenz component, which is quadratic in the momenta~\cite{Evans90b,Miguel,Williamsx}. Nevertheless, when all $c_i\ne0$, all Laplace--Runge--Lenz constants of the motion are  quartic in the momenta~\cite{Verrier2008,KC2009}.

 The $t$-dependent flat generalized KC system can be extended to the sphere $\bS^n$ and hyperbolic space $\bH^n$, with constant sectional curvature $\kappa$, by using  the Poincar\'e or Beltrami projective variables considered in Section~\ref{s912} for the spherical/hyperbolic oscillators, as shown  in Table~\ref{table3}.  In these coordinates, the KC potential  turns out to be~\cite{BH07}
 \be
 \begin{aligned}
 \mbox{Poincar\'e:}&\quad  \mU(t,|\>q|)=-\kk({t})\frac{1-\kappa h_-}{ \sqrt{h_-} }=-\kk({t})\frac{1-\kappa \>q^2}{ |\>q| } .\\[2pt]
 \mbox{Beltrami:}&\quad \mU(t,|\>q|)=-\frac{\kk({t})}{ \sqrt{h_-} }=-\frac{\kk({t})}{ |\>q| } . 
  \end{aligned}
  \nonumber
 \ee    
Similarly to the   isotropic (Higgs) oscillator (\ref{HiggsP}), by  introducing a geodesic radial distance $\rho$, it is found that the spherical/hyperbolic KC potential becomes $\mU\propto 1/\tan \rho $
on $\bS^n$ $(\kappa=+1)$ and  $\mU\propto 1/\tanh \rho $
on $\bH^n$ $(\kappa=-1) $ ~\cite{annals2009,car2005}.   As in the (flat) Euclidean KC system, when
  the  $c_i$-potentials are added, it is well known that all Laplace--Runge--Lenz constants of the motion are quartic in the momenta~\cite{KC2009}, but whenever  one or more $c_i$ vanish, there exists a  component that is quadratic  in the momenta ~\cite{BH07,RS,Kalnins2}.    
  The interpretation of the $c_i$-potentials is exactly the same as in the curved SW system in Section~\ref{s912}. In the hyperbolic space $\bH^n_0$ all of them yield centrifugal barriers, but on the sphere
   $\bS^n_0$ they can be seen as noncentral oscillators  whenever $c_i=\omega^2_i/2>0$, as in (\ref{cpots}), thus determining a combination of the $t$-dependent curved KC potential with such $n$-oscillators in the form $(\kappa=+1)$:
 \be
\mU(t,|\>q|)=-\frac{ \kk({t})}{ \tan \rho  }   +\frac 12 \sum_{i=1}^n \omega_i^2 \tan^2\!\rho_i.
\nonumber
\ee

The last application concerns the so-called Taub--NUT KC system~\cite{NUT,LatiniNUT}, which is 
another $t$-dependent Bertrand Hamiltonian~\cite{Bertrand2008}.  The metric (\ref{zb}) and scalar curvature   (\ref{zc}) of the corresponding  Taub--NUT  space  are given by  
\be
\dd s^2=  \left(1+\frac{\eta}{|\>q|} \right)\dd\>q^2,\qquad R(|\>q|)=   \eta (n-1)\,\frac{4(n-3)|\>q|+ 3 \eta(n-2)  }{4|\>q|(\eta+ |\>q|)^3} ,\qquad  |\>q|\ne 0,
\nonumber
 \ee
 where $\eta$ is an arbitrary real parameter. Then, if $\eta=-|\eta|$, the domain of the coordinates  is limited to the exterior of the ball $|\>q| > |\eta|$.
The $t$-dependent curved KC potential in (\ref{Usw}) reads
\be
\mU(t,|\>q|)=-\frac{
\kk ({t})  }{\eta+ \sqrt{h_-}}=-\frac{
\kk ({t})  }{\eta+  |\>q|},
\nonumber
\ee
whose $t$-independent counterpart is endowed with a Laplace--Runge--Lenz $n$-vector quadratic in the momenta.  The corresponding generalized system is displayed in Table~\ref{table3} and, to the best of our knowledge, the Laplace--Runge--Lenz constants of the motion have not yet been derived for the $t$-independent generalized Hamiltonian.  Similarly, other $t$-dependent Bertrand Hamiltonians~\cite{Bertrand2008} can be constructed.


\section{Conclusions and outlook}
\label{Sec:Conc}

In this work, we have provided a nonlinear generalization of the notion of LH systems, that can also be understood as a $t$-dependent generalization of collective Hamiltonians, as well as a $t$-dependent extension of systems admitting a Poisson coalgebra symmetry. The main properties of nonlinear LH systems have been studied, pointing out that these constitute far reaching generalizations of properties relative to LH systems. In addition, the standard Poisson coalgebra symmetry method has also been extended to cover the case of $t$-dependent Hamiltonian systems. In this context, the relevance of nonlinear LH systems emerges simultaneously from the theory and the wide range of applications, that allow us to generalize several well-known systems to include explicit   $t$-dependence, still preserving some of their salient structural properties. 

As some relevant applications we have obtained, besides $t$-dependent HH Hamiltonians and Painlev\'e trascendents, the most general $t$-dependent central potential on an $n$-dimensional
spherically symmetric space. From these, several $t$-dependent curved oscillators and KC Hamiltonians whose $t$-independent counterparts have been extensively considered in the literature are deduced. In this respect, it is clear that each of these systems should be studied separately,  analyzing   the properties  that  a specific  frequency   $\omega(t)$ or coupling constant $\kk(t)$ entails, as well as studying the corresponding equations of the motion. These systems, with underlying Riemannian spaces and of considerable physical interest,  are just a glimpse of the wide potentialities and applicability of this novel $t$-dependent nonlinear formalism. 

Among the large number and variety of situations where this technique may be applied, we merely mention a few of them for their physical relevance:
\begin{itemize}
  
\item As all Painlev\'e equations are known to be Hamiltonian, the techniques proposed in this work 
can be applied to them. In this context, it is worthy to be inspected whether new information on the fundamental characteristics and solutions of the Painlev\'e equations can be obtained from the nonlinear approach.

\item A $t$-dependence on the curvature for a given  $\mathfrak{sl}(2,\mathbb{R})$-coalgebra space  can also be implemented,  leading to a $t$-dependent geodesic motion with an underlying $t$-dependent metric. For instance, this can be done through $t$-dependent parameters $\kappa(t)$, $\lambda(t)$ and $\eta(t)$ in the systems studied here.  This does not exhaust the possibilities, as  $t$-dependent potentials can also be added.

\item Another research line worthy to be explored is the application of the theory to the framework of gravity and cosmology. This requires, first,   constructing Lorentzian spaces and, second, potentials from a Poisson coalgebra different from $\mathfrak{sl}(2,\mathbb{R})$. Natural candidates in this context are the Lorentz algebra $\mathfrak{so}(3,1)$ and the symplectic Lie algebra $\mathfrak{sp}(2n,\mathbb{R})$,  as well as some of their distinguished Lie subalgebras. In fact, LH systems from $\mathfrak{so}(3,1)$ and $\mathfrak{sp}(4,\mathbb{R})$  have been recently constructed in~\cite{CCH25}, leading to coupled $t$-dependent oscillators on the Minkowskian plane.  
 
\end{itemize}

To conclude, we observe that nonlinear LH systems still admit a further extension. Explicitly, the Poisson coalgebra deformation theory of LH systems \cite{BCFHL21} is concerned with Hamiltonian functions that depend on a finite family of functions $ h_{1,z},\ldots,h_{r,z} $ such that 
$$
\{h_{\alpha,z},h_{\beta,z}
\}=\mathcal{F}_{z,\alpha\beta}(h_{1,z},\ldots,h_{r,z}),\qquad 1\leq \alpha<\beta\leq r,\qquad z\in \mathbb{R},
$$
where $z$ is a quantum deformation parameter related to the usual one as $q=\exp z$.  It turns out that some parts of this formalism can be adapted to the case of nonlinear LH systems, leading to a notion of nonlinear deformed LH systems, by taking into account some minor technical changes and adjustments. The most relevant technical question makes reference to the momentum map $J$, which must be replaced by a more general compatible geometric object, whose precise nature must still be analyzed in detail.  Work in these various directions is currently in progress, and we expect to address some progress on these topics in some future work.


\subsection*{Acknowledgements}

\phantomsection
\addcontentsline{toc}{section}{Acknowledgements}

{\small 
R.C.-S.~and F.J.H.~have been partially supported by Agencia Estatal de Investigaci\'on (Spain) under  the grant PID2023-148373NB-I00 funded by MCIN/AEI/10.13039/501100011033/FEDER, UE.  F.J.H.~acknowledges support  by the  Q-CAYLE Project  funded by the Regional Government of Castilla y Le\'on (Junta de Castilla y Le\'on, Spain) and by the Spanish Ministry of Science and Innovation (MCIN) through the European Union funds NextGenerationEU (PRTR C17.I1).   J. de L.~acknowledges a CRM-Simons professorship funded by the Simons Foundation and the Centre de Recherches Math\'ematiques (CRM) of the Universit\'e Montr\'eal. The authors also acknowledge the contribution of RED2022-134301-T funded by MCIN/AEI/10.13039/ 501100011033 (Spain).
}


\end{document}